\newcolumntype{C}[1]{>{\centering\arraybackslash}m{#1}}
\newtheorem{theorem}{\textbf{Theorem}}
\newtheorem*{process*}{\textbf{Process}}
\newtheorem{lemma}{\textbf{Lemma}}
\theoremstyle{definition}
\newtheorem{problem}{\textbf{Problem}}
\newtheorem{definition}{\textbf{Definition}}
\newtheorem{eqs}{\textbf{Equation System}}
\DeclareMathOperator*{\argmin}{arg\,min}
\begin{document}
%
\title{Coupon Advertising in Online Social Systems: Algorithms and Sampling Techniques}
%
%
%
%

\author{
	Guangmo (Amo)~Tong,~\IEEEmembership{Student Member,~IEEE,}
	Weili~Wu,~\IEEEmembership{Member,~IEEE,}
	and~Ding-Zhu~Du
	\IEEEcompsocitemizethanks{
		\IEEEcompsocthanksitem G. Tong, W. Wu, and D.-Z. Du are with the Department of Computer Science Erik
		Jonsson School of Engineering and Computer Science The University
		of Texas at Dallas, 800 W. Campbell Road; MS EC31 Richardson, TX
		75080 U.S.A.\protect\\
		E-mail: guangmo.tong@utdallas.edu}
}

%
%

\markboth{Journal of \LaTeX\ Class Files,~Vol.~14, No.~8, August~2015}%
{Shell \MakeLowercase{\textit{et al.}}: Bare Demo of IEEEtran.cls for Computer Society Journals}
%



\IEEEtitleabstractindextext{%
\begin{abstract}
Online social systems have become important platforms for viral marketing where the advertising of products is carried out with the communication of users. After adopting the product, the seed buyers may spread the information to their friends via online messages e.g. posts and tweets. In another issue, electronic coupon system is one of the relevant promotion vehicles that help manufacturers and retailers attract more potential customers. By offering coupons to seed buyers, there is a chance to convince the influential users who are, however, at first not very interested in the product. In this paper, we propose a coupon based online influence model and consider the problem that how to maximize the profit by selecting appropriate seed buyers. The considered problem herein is markedly different from other influence related problems as its objective function is not monotone. We provide an algorithmic analysis and give several algorithms designed with different sampling techniques. In particular, we propose the RA-T and RA-S algorithms which are not only provably effective but also scalable on large datasets. The proposed theoretical results are evaluated by extensive experiments done on large-scale real-world social networks. The analysis of this paper also provides an algorithmic framework for non-monotone submodular maximization problems in social networks.
\end{abstract}

\begin{IEEEkeywords}
Social networks, social advertising, approximation algorithm.
\end{IEEEkeywords}}

\maketitle

\IEEEdisplaynontitleabstractindextext

%
\IEEEpeerreviewmaketitle

\IEEEraisesectionheading{\section{Introduction}\label{sec:introduction}}

\IEEEPARstart{O}nline social networks have become important platforms for viral marketing as they allow fast spread of brand-awareness information. As a typical diffusion process, the influence starts to spread from seed users and then goes round by round from active users to inactive users. The classic influence maximization (IM) problem aims to find the seed users that can maximize the number of influenced nodes. In the seminal work \cite{kempe2003maximizing}, Kempe \textit{et al.} propose the triggering model and formulate the IM problem as a combinatorial optimization problem. A set function $h()$ over a ground set $\mathcal{N}$ is submodular if
\begin{equation}
\label{eq: submodular}
h(\mathcal{N}_1\cup\{v\})-h(\mathcal{N}_1) \geq h(\mathcal{N}_2\cup\{v\})-h(\mathcal{N}_2) 
\end{equation}
for any $\mathcal{N}_1 \subseteq \mathcal{N}_2 \subseteq \mathcal{N}$ and $v \in \mathcal{N} \setminus \mathcal{N}_2$.
It turns out that IM is a size-constraint monotone submodular maximization problem and therefore the natural greedy algorithm yields a good approximation \cite{nemhauser1978analysis}. In the past decade, a huge body of works has been devoted to influence related problems such as rumor blocking, active friending and effector detection, just to name a few. The process of influence diffusion generalizes the spread of many kinds of information such as news, ideas and advertisements. In this paper, we crystallize the influence diffusion process and in particular consider the advertising of product in online social systems.

\textbf{Influenced node Vs. Adopter.} Whereas IM is motivated by viral marketing, maximizing the influence may not result in a high profit because an influenced node is not necessarily to be an adopter of the product. As mentioned in Kalish's work \cite{kalish1985new}, the adoption of a new product is characterized by two steps: awareness and adoption. The traditional influence diffusion models characterize the spread of brand-awareness information but they do not take account of the issue of product adoption. Among the factors that affect the customers' behavior, the price is undoubtedly the most relevant one. In this paper, we consider a simplified adoption process by assuming that, once being aware of the product, a customer becomes an adopter if they believe the product is fairly priced, and only an adopter can bring more adopters by passing on the product information further to their friends\footnote{The analysis in paper can be applied to other models where the adoption process is independent from the influence process.}.

\textbf{Coupon Systems.} A real advertising campaign is often conducted with promotional tools such as coupon, reward, free sample and etc. Coupon has been demonstrated to be an effective promotional method as it creates short-term excitement \cite{ailawadi2014sales} and therefore immediately impacts the customers' purchase behavior. Hence, in online social marketing, coupon enables the seller to convince the influential users who were not able to be adopters without a coupon. In this paper, we consider the marketing process where the seed buyers are offered with coupons. From the perspective of sellers, coupons are not free because they essentially lower the price of the product. Therefore, increasing the number of seed buyers does not necessarily bring more profit, which leads the non-monotonicity of the objective function of the considered problem.

\textbf{This paper.} By integrating coupon system into the general triggering model \cite{kempe2003maximizing}, we in this paper propose the triggering-coupon (T-C) model for product advertising process with coupon in social networks. Based on the T-C model, we consider the problem of selecting seed buyers such that the profit can be maximized, which is termed as profit maximization (PM) problem. As discussed later, the difficulty in solving the PM problem lies in two aspects. First, as a function of the seed users, the profit under the T-C model is submodular but not monotone increasing anymore. Consequently, the well-known greedy algorithm does not work and we have to look for other approximation algorithms\footnote{The PM problem is NP-hard and therefore approximation algorithms are the best possible solutions with performance guarantees.} for unconstrained submodular maximization (USM). With the very recent research progress, USM admits several approximation algorithms where the state-of-the-art one achieves the approximation ratio of $\frac{1}{2}$ \cite{buchbinder2015tight}. However, even with such algorithms, the PM problem remains hard to solve because the function value cannot be efficiently computed. The existing USM algorithms assume that there is an oracle of the objective function while, unfortunately, computing the profit under the T-C model is a \#P-hard problem. Since the exact value of the objective function is hard to compute, one naturally seeks help from its estimates obtained by sampling. With different sampling techniques, we design several algorithms for the PM problem. The most straightforward approach to applying the existing algorithms is to obtain the function value by sampling whenever its oracle is called, which we call the forward sampling framework. Under this approach, the estimates can be obtained by either directly simulating the diffusion process or generating realizations\footnote{As introduced later in Sec. \ref{subsec:realization}, the so-called realization is a derandomization of a triggering model.} according to the definition of triggering model. With such two methods, we give the simulation-based profit maximization (SPM) algorithm and the realization-based profit maximization (RPM) algorithm.  Motivated by the study of IM problem, we further design two algorithms, RA-T algorithm and RA-S algorithm, based on the reverse sampling technique. The reverse sampling is initially proposed by Borgs \textit{et al.} \cite{borgs2014maximizing} for IM problem and it is later improved by Tang \textit{el al.} \cite{tang2015influence} and Nguyen \textit{et al.} \cite{nguyen2016stop}. In this paper, we show how the reverse sampling can be used to solve the PM algorithm. All the algorithms proposed in this paper achieve the approximation ratio of $\frac{1}{2}-\epsilon$ with a high probability. Different from most of the existing works, this paper gives an analysis framework for non-monotone submodular problems in social networks. Extensive experiments have been performed for testing the proposed approaches. In the experiments, besides showing the effectiveness and efficiency of our algorithms, we also provide several interesting observations. 

\textbf{Organization.} The rest of this paper is organized as follows. Sec. \ref{sec: related} reviews the related works. The preliminaries are provided in Sec. \ref{sec: pre}. The analysis of the algorithms designed based on forward sampling and reverse sampling is shown in Secs. \ref{sec:front} and \ref{sec:reverse}, respectively. In Sec. \ref{sec:exp}, we present the experiments together with the discussion of the results. Sec. \ref{sec:conclustion} concludes this paper. 

\section{Related Work}
\label{sec: related}
\textbf{Influence Maximization.} Domingos \textit{et al.} \cite{domingos2001mining} are among the first who study the value of customers in a social network. Kempe \textit{et al.} \cite{kempe2003maximizing} propose the IM problem and build the triggering model for influence diffusion. As two concrete triggering models, independent cascade (IC) model and linear threshold (LT) model have been widely adopted in literature. For the IM problem, Kempe \textit{et al.} \cite{kempe2003maximizing} give a greedy algorithm running with Monte Carlo simulations. Leskovec \textit{et al.} \cite{leskovec2007cost} later improve the greedy algorithm with the method of lazy-forward evaluation. However, Monte Carlo simulation is very time-consuming and therefore those two algorithms are not scalable to large datasets. Borgs \textit{et al.} \cite{borgs2014maximizing} invent the reserve sampling technique which essentially provides a better way to estimate the function value. With the reverse sampling technique, Tang \textit{et al.} (\cite{tang2014influence} and \cite{tang2015influence}) design two efficient algorithms which are improved later by Nguyen \textit{et al.} \cite{nguyen2016stop}. Very recently, Arora \textit{et al.} \cite{arora2017debunking} experimentally evaluate the existing algorithms for IM and make a comprehensive comparison. Besides approximation algorithms, efficient heuristics are also developed for large graphs, namely Chen \textit{et al.} \cite{chen2010scalable} and Cheng \textit{et al.} \cite{cheng2013staticgreedy}. 

\textbf{Profit Maximization.} Since the investigation of IM, many influence related problems in social networks have been studied. In what follows, we briefly review the works that are related to profit maximization. One branch of the existing works focuses on pricing issue of the product, and they either formulate the price setting as a game, such as Arthur \textit{et al.} \cite{arthur2009pricing}, Zhou \textit{et al.} \cite{zhou2016bilevel} and  Lu \textit{et al.} \cite{lu2016pricing}, or study the optimal price setting such as Zhu \textit{et al.} \cite{zhu2013influence}. Yang \textit{et al.} \cite{yang2016continuous} consider the problem of finding the optimal discount such that the adoption of the product can be maximized. While also aims to maximize the profit, our paper considers the problem of selecting appropriate seed users instead of designing price strategies. As the most relevant works,  Lu \textit{et al.} \cite{lu2012profit} and Tang \textit{et al.} \cite{tang2016profit} also consider the problem of maximizing profit by selecting high quality seed users. In particular,  Lu \textit{et al.} in \cite{lu2012profit} design a heuristic and Tang \textit{et al.} in \cite{tang2016profit} provide an algorithmic analysis with the utilization of the techniques of USM. Whereas the algorithms proposed in \cite{tang2016profit} have a strong flavor of approximation bound, they are time consuming as they use Monte Carlo simulations. In this paper, our goal is to design profit maximization algorithms that are not only provably effective but also highly scalable.
\section{Preliminaries}
This section provides the preliminaries to the rest of this paper.  
\label{sec: pre}
\subsection{System Model}
\label{subsec: system}
In this subsection, we give a formal description of the considered model.

\textbf{Influence Model.} The structure of a social network is given by a directed graph $G=(V,E)$ where $V=\{v_1,...,v_n\}$ denotes the set of users and $E$ is edge set. Let $n$ and $m$ be the number of users and edges, receptively. If there is an edge $(u,v)$ in $E$, we say $u$ (resp. $v$) is an in-neighbor (resp. an out-neighbor) of $v$ (resp. $u$). Let $N_v^+$ and $N_v^-$ be the set of the out-neighbors and in-neighbors of $v$, respectively. From the view of influence diffusion, we say a user is active if they successfully receive the target information from their neighbors. Depending on the ways of influence diffusion, many influence models has been developed. In this paper, we consider the general triggering model of which the famous IC and LT models are special cases. 

\begin{definition}[Triggering Model \cite{kempe2003maximizing}]
\label{def:triggering}
An influence model is a triggering model if its diffusion process can be equivalently described as follows. Each node $v$ independently chooses a random ``triggering set" $T_v$ according to some distribution over subsets of its in-neighbors. To start the process, we target a set of users for initial activation. After the initial iteration, an inactive node $v$ become active at time step $t$ if it has an in-neighbor in its chosen triggering set $T_v$ that is active at time step $t-1$.
\end{definition}

By Def. \ref{def:triggering}, a triggering model is given by a pair $(G,\mathcal{D})$ where $G=(V=\{v_1,...,v_n\},E)$ is the network structure and $\mathcal{D}=\{\mathcal{D}_{v_1},...,\mathcal{D}_{v_n}\}$ is a set of distributions. In particular, $\mathcal{D}_{v_i}$ specifies the distribution over subsets of the in-neighbors of user $v_i$. As two special triggering models, IC model and LT model are defined as follows.

\begin{definition}[IC model.] 
The IC model assumes that an active user has one chance to activate each of their neighbors with a certain probability and the activations of different pairs of users are independent. Under this model, each edge $(u,v)$ is assigned a real number $p_{(u,v)} \in (0,1]$ which is the propagation probability from $u$ to $v$. Under the IC model, the distribution $\mathcal{D}_{v_i}$ is given by the probability of the edges. For a node $v$, an in-neighbor $u$ of $v$ has the probability $p_{(u,v)}$ to appear in the triggering set $T_v$.
\end{definition}

\begin{definition}[LT model.] 
Under the general threshold model, a user becomes active if they have received sufficient influence from their neighbors. Specifically, each edge $(u,v)$ has a weight $w_{(u,v)}>0$ and each user $v$ holds a threshold $\theta_v >0$. The user $v$ becomes active if $\sum_{u \in A(v)}w_{(u,v)}\geq \theta_v$ where $A(v)$ is the set of active in-neighbors of user $v$. For the linear threshold model, we further require that, for each user $v$, $\sum_{u \in N_v^-}w_{(u,v)} \leq 1$ and the threshold $\theta_v$ is selected uniformly at random from $[0, 1]$. As shown in \cite{kempe2003maximizing}, taking LT model as a triggering model, the triggering set $T_v$ of a node $v$ is decided as follows: $v$ selects \textit{at most one} of its in-neighbors at random where an in-neighbor $u$ has the probability of $w_{(u,v)}$ to be selected and with probability $1-\sum_{u \in N_v^-}w_{(u,v)}$ that $v$ does not pick any edges.
\end{definition}

Note that the diffusion process under triggering model is stochastic as the triggering sets are generated at random. An important property of triggering model is shown as follows.

\begin{theorem}[\cite{kempe2003maximizing}]
\label{theorem:triggering}
Under a triggering model, the expected number of active nodes is a submodular function with respect to the seed set.
\end{theorem}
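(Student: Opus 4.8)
The plan is to leverage the key structural feature of the triggering model, namely that the entire randomness of the diffusion can be decoupled from the choice of seed set. The central idea is the principle of deferred decisions: the triggering sets $T_v$ are chosen by each node $v$ independently according to $\mathcal{D}_v$, and crucially this choice does \emph{not} depend on which nodes are seeds. Therefore I would first fix a single realization, i.e. a concrete assignment $\mathcal{T} = (T_{v_1}, \ldots, T_{v_n})$ of triggering sets to all nodes, sampled according to the product distribution $\prod_i \mathcal{D}_{v_i}$. Once $\mathcal{T}$ is fixed, the diffusion becomes entirely \emph{deterministic}: a node $v$ ends up active if and only if there is a directed path from some seed node to $v$ using only edges $(u,v)$ with $u \in T_v$. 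Equivalently, this defines a deterministic reachability structure on the graph.

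The second step is to prove submodularity at the level of a single fixed realization. Let $\sigma_{\mathcal{T}}(S)$ denote the number of nodes reachable from seed set $S$ in the deterministic graph induced by $\mathcal{T}$. I would argue that $\sigma_{\mathcal{T}}$ is submodular as a function of $S$. This is the standard coverage-type argument: reachability under a fixed graph is a matter of set union of the reachable sets, and the function mapping a seed set to the size of the union of the sets it reaches is submodular. Concretely, adding a new seed $v$ to a smaller set $\mathcal{N}_1$ can only activate at least as many fresh nodes as adding it to a larger set $\mathcal{N}_2 \supseteq \mathcal{N}_1$, since any node newly reached via $v$ from $\mathcal{N}_2$ is also newly reached via $v$ from $\mathcal{N}_1$ unless it was already reachable from $\mathcal{N}_1$. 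This verifies the defining inequality~\eqref{eq: submodular} for $\sigma_{\mathcal{T}}$.

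The final step is to lift submodularity from each realization to the expectation. The expected number of active nodes is
\begin{equation}
h(S) = \mathbb{E}_{\mathcal{T}}\big[\sigma_{\mathcal{T}}(S)\big] = \sum_{\mathcal{T}} \Pr[\mathcal{T}] \cdot \sigma_{\mathcal{T}}(S).
\end{equation}
Since each $\sigma_{\mathcal{T}}$ satisfies the submodular inequality and the weights $\Pr[\mathcal{T}]$ are nonnegative, a nonnegative linear combination of submodular functions is again submodular, so $h$ inherits the inequality in~\eqref{eq: submodular}. This completes the argument.

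I expect the main obstacle to be the first step: establishing rigorously that the random diffusion process of Definition~\ref{def:triggering}, which unfolds round by round in time, produces exactly the same distribution over final active sets as the static ``sample all triggering sets in advance, then compute reachability'' process. One must argue that deferring all the random triggering-set choices to the start changes nothing about the joint distribution of outcomes, which relies precisely on the independence of the $T_v$ from the seed set and from the timing of activations. Once this equivalence is secured, the submodularity and linearity steps are routine.
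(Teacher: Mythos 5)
Your proposal is correct and is essentially the standard proof from the cited reference \cite{kempe2003maximizing}: fix the triggering sets in advance to obtain a deterministic reachability function, verify submodularity of reachability per realization, and conclude by taking a nonnegative combination over realizations; the paper itself does not reprove this theorem but relies on exactly this decomposition later (see Definition~\ref{def:realization} and Lemma~\ref{lemma:equivalent}). The equivalence of the round-by-round process with the sample-in-advance process, which you correctly flag as the one step needing care, is precisely what the paper records as Lemma~\ref{lemma:equivalent}.
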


For most of the triggering models, the process defined in Def. \ref{def:triggering} is not the real diffusion process. For example, instead of the generating the triggering sets in advance,  the real diffusion process of IC model goes round by round from inactive users to active users until no user can be further activated.

\textbf{Coupon System.} In this paper, we consider the very basic coupon system where a product is associated with two features, $P$ and $C$, which, respectively, denote the price of the product and the value of the coupon. Let $I_v$ be the \textit{intrinsic value}\footnote{The intrinsic value is also referred as costumer evaluation in other literatures. Obtaining the real intrinsic value of a certain customer is out of the scope of this paper and we assume it is given in prior.} held by $v$ for the considered product. After receiving the information of the product, a user becomes an adopter if they believe the product is fairly priced, i.e., $I_v \geq P$. Without coupons, only the users with $I_v \geq P$ can be considered as seed users, while once given a coupon the user $v$ with $I_v \geq P-C$ can also be the seed adopters. In equivalent, one may say the intrinsic value of user $v$ increases to  $I_v+C$ if a coupon is offered. In a social network, there can be some influential users whose intrinsic value is less than the price of the product, but coupon allows the seller to activate such powerful users and consequently raises the total profit. We denote by $r=\frac{P-C}{P}$ the ratio of the price $P$ to $P-C$. One can take $r$ as a normalized discount ratio of the product.  

\textbf{T-C Model.} Combining the influence model with coupon system, we have the triggering-coupon (T-C) model of which the marketing process unfolds as follows.
\begin{itemize}
\item Initially all the nodes are inactive and a seed set $S \subseteq V$ is decided.
\item At each time step $t$, an inactive node $v$ become active if (a) $v$ is activated by its neighbors and (b)  $I_v \geq P$.
\item The process terminates when there is no user can be further activated.
\end{itemize}

Under the T-C model, the active users correspond to the adopters of the product. To distinguish it from the pure influence model, we will use \textit{adopter} instead of active user in the rest of this paper. With the coupon system, the diffusion process is slightly different from the traditional spread of influence. However, the T-C model itself forms a new triggering model.
\begin{lemma}
\label{lemma:triggering}
A triggering model combined with the coupon system yields a new triggering model.
\end{lemma}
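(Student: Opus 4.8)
The plan is to exhibit an explicit new triggering model $(G,\mathcal{D}')$ whose stochastic diffusion process, started from an arbitrary fixed seed set $S$, produces exactly the same active (adopter) set at every time step as the T-C model. Since a triggering model is completely specified by Def.~\ref{def:triggering} through a network $G$ together with, for each node $v$, a distribution over subsets of $N_v^-$, it suffices to write down suitable distributions $\mathcal{D}'=\{\mathcal{D}'_{v_1},\dots,\mathcal{D}'_{v_n}\}$ and then verify that the induced dynamics coincide.

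First I would define the construction by separating nodes according to their intrinsic value. The only way the coupon system alters the bare influence dynamics is through condition (b): a non-seed node $v$ can become an adopter only when $I_v\geq P$. I would therefore keep the triggering distribution unchanged on the ``affordable'' nodes and suppress it on the rest: set $\mathcal{D}'_v=\mathcal{D}_v$ whenever $I_v\geq P$, and let $\mathcal{D}'_v$ be the degenerate distribution placing all of its mass on the empty set whenever $I_v<P$, so that $T'_v=\emptyset$ almost surely. Because $\emptyset\subseteq N_v^-$, each $\mathcal{D}'_v$ is a legitimate distribution over subsets of in-neighbors, and hence $(G,\mathcal{D}')$ is a bona fide triggering model in the sense of Def.~\ref{def:triggering}.

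Next I would prove equivalence of the two processes by induction on the time step $t$, coupling the two models through the same realized triggering sets on the affordable nodes and the same seed set $S$. The base case $t=0$ is immediate, since both processes begin with exactly the seeds in $S$ active. For the inductive step, assuming the active sets agree through step $t-1$, I would compare the activation rule for each currently inactive node $v$ in three cases. If $I_v\geq P$, then $T'_v=T_v$ and condition (b) holds automatically, so the T-C rule reduces to condition (a)---namely that $v$ has an active in-neighbor in $T_v$---which is exactly the activation rule of $(G,\mathcal{D}')$; hence $v$ flips in one model iff it flips in the other. If $I_v<P$ and $v\notin S$, then condition (b) forbids activation in the T-C model, while in $(G,\mathcal{D}')$ the empty triggering set $T'_v=\emptyset$ makes activation impossible; both keep $v$ inactive. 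If $I_v<P$ and $v\in S$, then $v$ is already active in both as a seed and nothing needs to be checked. Thus the active sets agree at step $t$, and the induction closes.

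The point that needs the most care is the asymmetry between how a triggering set governs a node's \emph{incoming} versus \emph{outgoing} influence. Zeroing out $T'_v$ for a low-value node $v$ removes every way for $v$ to be \emph{activated} by the diffusion, which is precisely what condition (b) demands; crucially, it does not touch $v$'s ability to \emph{propagate} influence, since propagation depends only on whether $v$ belongs to the triggering sets $T'_w$ of its out-neighbors $w$, and those are left intact. This is what lets a low-value node that happens to be a seed still spread the product outward, exactly as in the T-C model, so that no propagation paths are spuriously destroyed or created by the construction.
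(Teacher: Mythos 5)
Your proposal is correct and uses exactly the same construction as the paper: keep $\mathcal{D}_v$ unchanged when $I_v \geq P$ and replace it with the distribution concentrated on the empty set when $I_v < P$, which is precisely Eq.~(\ref{eq:new_distribution}). The only difference is that you spell out the step-by-step coupling/induction verifying that the dynamics coincide, which the paper asserts without detail; this is a welcome elaboration but not a different approach.
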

\begin{proof}
Let $\mathcal{D}=\{\mathcal{D}_{v_1},...,\mathcal{D}_{v_n}\}$ be the set of the distributions given by the triggering model. With the coupon system, the T-C model can be taken as another triggering model where the new triggering distribution $\mathcal{D}_{v}^{'}$ of each node $v$ is defined as 
\begin{equation}
\label{eq:new_distribution}
 \mathcal{D}_{v}^{'} =
  \begin{cases}
    \mathcal{D}_{v}  &  \hspace{0mm} \hspace{-0.5mm} \text{if $I_v \geq P$}  \\
    \text{empty distribution}  & \hspace{0mm} \hspace{-0.5mm} \text{otherwise.} 
    \end{cases}
\end{equation}

where the empty distribution always returns the empty set. That is, for a node $v$, if $I_v \geq P$ then the distribution remains unchanged and $T_v$ is decided according to $\mathcal{D}_v$, otherwise, $T_v$ is always the empty set as $v$ can never be an adopter. 
\end{proof}

An instance of the T-C model is called a \textit{T-C network}. A T-C network $\Upsilon$ consists of a graph topology,  the price of the product, the value of the coupon, and a distribution of triggering set for each node defined in Eq. (\ref{eq:new_distribution}). In this paper, we use the general T-C model for analysis, and adopt the IC-coupon (IC-C) and LT-coupon (LT-C) models for experiments.

\subsection{Problem Definition}
Given a set $S$ of seed users and a T-C network $\Upsilon$, let $\pi(S)$ be the expected number of adopters under $S$. For a user $v$, if $P > I_v+C$, then $v$ cannot be an adopter even if offered with a coupon. Without loss of generality, we can remove such nodes from the network in advance and therefore assume that $P \leq I_u+C$ holds for every user $u \in V$. For a seed set $S$, the earned profit is $P\cdot \pi(S)$ minus the cost of the offered coupons. Let 
\begin{equation}
\label{eq: objective}
f(S)=P \cdot \pi(S)-C\cdot |S|
\end{equation}
be the profit under $S$. In this paper, we aim to find a seed set $S$ such that the profit can be maximized, which is termed as the Profit Maximization (PM) problem. 
\begin{problem}[Profit Maximization]
Given a T-C network $\Upsilon$, find a set $S$ of seed adopters such that $f(S)$ can be maximized. 
\end{problem}

Let $V_{opt} \subseteq V$ be the optimal solution. Note that $f(V_{opt}) \geq f(V)$ and therefore
\begin{equation}
\label{eq:lower_bound}
f(V)=(P-C)\cdot n
\end{equation}
is a lower bound of the optimal profit. The following lemma gives the key property of our objective function.

\begin{lemma}
\label{lamme:f_sub}
$f(S)$ is submodular but not always monotone increasing. 
\end{lemma}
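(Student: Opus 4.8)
The plan is to split the statement into two independent claims and dispatch them separately, leaning on the structural results already in hand. For submodularity, the key observation is that Lemma~\ref{lemma:triggering} has already done the heavy lifting: it shows the T-C model is \emph{itself} a triggering model, and $\pi(S)$ is precisely the expected number of active nodes (adopters) in that triggering model. Hence Theorem~\ref{theorem:triggering} applies verbatim and yields that $\pi(S)$ is submodular in $S$. From here I would invoke only two elementary facts about submodular set functions: first, scaling a submodular function by the nonnegative constant $P$ preserves submodularity, so $P\cdot\pi(S)$ is submodular; second, the cardinality map $|S|$ is modular, i.e. it satisfies the defining inequality~\eqref{eq: submodular} with equality, so $-C\cdot|S|$ is modular and in particular submodular. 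Since a sum of submodular functions is submodular (the defining inequality adds termwise over the two summands), $f(S)=P\cdot\pi(S)-C\cdot|S|$ is submodular.

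For the non-monotonicity, I would exhibit a concrete T-C network together with nested seed sets $S_1\subseteq S_2$ satisfying $f(S_1)>f(S_2)$. The cleanest choice is a two-node IC-C instance with a single deterministic edge $u\to w$ (propagation probability $1$), where both nodes are legitimate adopters, i.e. $I_u\geq P-C$ and $I_w\geq P-C$. Seeding $u$ alone already makes $w$ an adopter, so $\pi(\{u\})=\pi(\{u,w\})=2$, while the coupon cost grows from $C$ to $2C$. Thus $f(\{u,w\})-f(\{u\})=-C<0$ with $\{u\}\subsetneq\{u,w\}$, which violates monotonicity whenever $C>0$. The mechanism is transparent from the marginal form
\[
f(S\cup\{v\})-f(S)=P\cdot\bigl(\pi(S\cup\{v\})-\pi(S)\bigr)-C,
\]
which turns negative as soon as the marginal number of new adopters contributed by $v$ drops below $C/P$; seeding an already-activated node with no useful out-influence realizes exactly this regime.

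The two halves carry very different weight, and I would be upfront about that. The submodularity claim is essentially a corollary: once Lemma~\ref{lemma:triggering} reinterprets the coupon system as a triggering model, submodularity of $\pi$ is immediate, and the linear coupon term contributes only a modular shift. The single point to watch is the bookkeeping that the modular term is preserved under the sum, which is routine. The genuinely substantive step is the non-monotonicity counterexample, where I must ensure the instance is consistent with the standing assumption $P\leq I_u+C$ for every user (so both nodes are valid adopters) and that it relies on $C>0$. Indeed, were $C=0$ the objective would reduce to $P\cdot\pi(S)$ and be monotone, so the qualifier ``not always monotone'' is exactly the assertion that a strictly positive coupon can render an additional seed unprofitable.
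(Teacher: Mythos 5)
Your proof is correct and follows essentially the same route as the paper's: submodularity of $\pi(\cdot)$ via Lemma~\ref{lemma:triggering} and Theorem~\ref{theorem:triggering} together with modularity of the coupon term, and a two-node, deterministic-edge IC-C instance to witness non-monotonicity. One small fix: for your claim $\pi(\{u\})=\pi(\{u,w\})=2$ you need $I_w \geq P$ (so that $w$ can be activated by $u$'s influence \emph{without} receiving a coupon), not merely $I_w \geq P-C$; with that strengthened hypothesis the example yields $f(\{u,w\})-f(\{u\})=-C<0$ exactly as you compute.
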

\begin{proof}
Since $|S|$ is linear, it suffices to show $\pi()$ is submodular. By Lemma \ref{lemma:triggering}, the T-C model is a triggering model and therefore $\pi()$ is submodular due to Theorem \ref{theorem:triggering}.

Consider a simple instance of IC-C model where $V=\{v_1,v_2\}$, $E=\{(v_1,v_2)\}$ and $p_{(v_1,v_2)}=1$. Suppose that $P \geq I_{v_2}$. In this example, $f(\{v_1\})=2P-C$ while $f(\{v_1,v_2\})=2P-2C$. Therefore, $f()$ may not be monotone increasing, which inevitably incurs more difficulty in designing good algorithms. 
\end{proof}

\subsection{Unconstrained Submodular Maximization}
With the recent works, several approximation algorithms are now available for non-monotone submodular maximization. In particular, Feige \textit{et al.} \cite{feige2011maximizing} propose a deterministic local-search $\frac{1}{3}$-approximation and a randomized $\frac{2}{5}$-approximation algorithm for maximizing nonnegative submodular functions. The state-of-the-art algorithm \cite{buchbinder2015tight} achieves a $\frac{1}{2}$-approximation proposed by Buchbinder \textit{et al.}, as shown in Alg. \ref{alg:1/2}.  

\begin{algorithm}[t]
\caption{Buchbinder's Algorithm}\label{alg:1/2}
\begin{algorithmic}[1]
\State \textbf{Input:} $h(), \mathcal{N}=\{v_1,...,v_n\}$; 
\State $X_0 \leftarrow \emptyset$, $Y_0 \leftarrow \mathcal{N}$;
\For {$i= 1:n$}
\State $a_i \leftarrow h(X_{i-1}\cup \{v_i\})-h(X_{i-1})$;
\State $b_i \leftarrow h(Y_{i-1}\setminus \{v_i\})-h(Y_{i-1})$;
\State $a_i^{'} \leftarrow \max\{a_i,0\}$; $b_i^{'} \leftarrow \max\{b_i,0\}$;
\State $rand \leftarrow$ a random number from 0 to 1 generated in uniform; 
\If {$rand  \leq a_i^{'}/(a_i^{'}+b_i^{'})$ or $a_i^{'}+b_i^{'}=0$}
	\State $X_i \leftarrow X_{i-1}\cup \{v_i\}$, $Y_i \leftarrow Y_{i-1}$.
	\Else
	\State $X_i \leftarrow X_{i-1}$, $Y_i \leftarrow Y_{i-1}\setminus \{v_i\}$;
	\EndIf
\EndFor  
\State Return $X_n$;
\end{algorithmic}
\end{algorithm}

\begin{lemma}[\cite{buchbinder2015tight}]
\label{lemma:1/2}
Given an objective function $h()$ over a ground set $\mathcal{N}$, the set produced by Alg. \ref{alg:1/2} is a $\frac{1}{2}$-approximation\footnote{Strictly speaking, Alg. \ref{alg:1/2} provides an expected $\frac{1}{2}$-approximation as it is a randomized algorithm. For simplicity, we will take such a solution as an exact $\frac{1}{2}$-approximation. This does not affect the fundamental issues of our analysis. } provided that $h()$ is submodular and $h(\emptyset)+h(\mathcal{N})\geq 0$.\footnote{In \cite{buchbinder2015tight}, it is assumed that $h()$ is non-negative while in fact we only need that $h(\emptyset)+h(\mathcal{N})\geq 0$. The proof is similar to that of Theorem \ref{theorem:accuracy} shown later. A detailed discussion is provided by J. Tang \textit{el al.} \cite{tang2016profit}.}
\end{lemma}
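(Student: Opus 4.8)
The plan is to reproduce the analysis of the randomized double-greedy scheme of Alg.~\ref{alg:1/2} by tracking a potential defined through an interpolation between a fixed optimal set and the algorithm's current pair $(X_i, Y_i)$. Throughout the run one has $X_{i-1} \subseteq Y_{i-1}$ and $v_i \in Y_{i-1} \setminus X_{i-1}$, and at termination every element has been resolved so that $X_n = Y_n$ is the returned solution. The first structural fact I would record is that submodularity forces $a_i + b_i \geq 0$ at every step: since $X_{i-1} \subseteq Y_{i-1}\setminus\{v_i\}$, applying Eq.~(\ref{eq: submodular}) to the element $v_i$ gives $a_i = h(X_{i-1}\cup\{v_i\}) - h(X_{i-1}) \geq h(Y_{i-1}) - h(Y_{i-1}\setminus\{v_i\}) = -b_i$. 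In particular the branching probability in the update rule is well defined, and the degenerate case $a_i'+b_i'=0$ can occur only when $a_i = b_i = 0$.

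Fix an optimal set $O \subseteq \mathcal{N}$ and define the interpolating sets $O_i := (O \cup X_i)\cap Y_i$, so that $O_0 = O$ and $O_n = X_n$. The heart of the argument is a per-iteration inequality: conditioned on the history through step $i-1$,
\[
\mathbb{E}\big[h(O_{i-1}) - h(O_i)\big] \;\leq\; \tfrac{1}{2}\,\mathbb{E}\big[(h(X_i)-h(X_{i-1})) + (h(Y_i)-h(Y_{i-1}))\big].
\]
To prove it I would split on whether $v_i \in O$ (which fixes how $O_i$ moves) and then on the signs of $a_i, b_i$. When $v_i \in O$, the only way $O_i$ can shrink is if the algorithm deletes $v_i$ from $Y$, and the resulting drop $h(O_{i-1}) - h(O_{i-1}\setminus\{v_i\})$ is bounded above by $a_i$, because $X_{i-1} \subseteq O_{i-1}\setminus\{v_i\}$ and submodularity makes the marginal of $v_i$ larger on the smaller set. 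Symmetrically, when $v_i \notin O$ the potential can only drop if $v_i$ is added to $X$, and $O_{i-1} \subseteq Y_{i-1}\setminus\{v_i\}$ bounds that drop by $b_i$. Matching these bounds against the expected gain $p\,a_i + (1-p)\,b_i$ with $p = a_i'/(a_i'+b_i')$ then reduces, in the boundary cases $a_i \leq 0$ or $b_i \leq 0$, to sign checks, and in the interior case $a_i, b_i > 0$ to the elementary inequality $2a_ib_i \leq a_i^2 + b_i^2$, i.e. $(a_i-b_i)^2 \geq 0$.

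With the per-step inequality in hand I would telescope over $i = 1, \dots, n$ and take expectations. The left side collapses to $h(O) - \mathbb{E}[h(X_n)]$, while the right side collapses to $\tfrac12(\mathbb{E}[h(X_n)] + \mathbb{E}[h(Y_n)] - h(X_0) - h(Y_0))$. Using $X_n = Y_n$, $X_0 = \emptyset$, and $Y_0 = \mathcal{N}$, this yields $h(O) - \mathbb{E}[h(X_n)] \leq \mathbb{E}[h(X_n)] - \tfrac12(h(\emptyset)+h(\mathcal{N}))$, hence $\mathbb{E}[h(X_n)] \geq \tfrac12 h(O) + \tfrac14(h(\emptyset)+h(\mathcal{N}))$. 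This is exactly where the hypothesis enters: one does not need $h \geq 0$ everywhere, only $h(\emptyset)+h(\mathcal{N}) \geq 0$, which lets the last term be dropped to give the claimed $\mathbb{E}[h(X_n)] \geq \tfrac12 h(O)$.

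I expect the main obstacle to be the per-iteration inequality, and within it the bookkeeping that identifies the correct nested containments ($X_{i-1}\subseteq O_{i-1}\setminus\{v_i\}$ when $v_i\in O$, and $O_{i-1}\subseteq Y_{i-1}\setminus\{v_i\}$ when $v_i\notin O$) so that submodularity is applied in the right direction; once the two directional bounds $a_i$ and $b_i$ are pinned down, the probabilistic averaging is a short computation. The relaxation from nonnegativity to $h(\emptyset)+h(\mathcal{N}) \geq 0$ costs nothing extra and is visible only in the final line of the telescoping step.
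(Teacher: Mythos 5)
Your proposal is correct and follows essentially the same route the paper takes: the paper does not prove this lemma directly but points to the proof of Theorem \ref{theorem:accuracy} in Appendix \ref{sebsec:proof_theorem:accuracy}, which is exactly your argument — the interpolating sets $(O\cup X_i)\cap Y_i$, the per-iteration potential inequality proved by the same case split on the signs of $a_i,b_i$ and membership of $v_i$ in $O$, and the telescoping that leaves the $\tfrac14\bigl(h(\emptyset)+h(\mathcal{N})\bigr)$ term — only with the additional $\tfrac{2\epsilon}{n}L^{*}$ error terms that vanish in the exact-oracle setting of this lemma. Your identification of where the relaxed hypothesis $h(\emptyset)+h(\mathcal{N})\geq 0$ enters matches the paper's footnoted claim.
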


The existing algorithms assume the availability of an oracle of the objective function. However, it turns out that computing the exact value of $\pi(S)$ is \#P-hard \cite{chen2010scalable}. As seen in the recent works on influence related problems, the main difficulty in solving such problems has shifted from approximation algorithm design to the estimating of the expected influence $\pi(S)$. Since the value of $f()$ is hard to compute, Alg. \ref{alg:1/2} cannot be directly applied. Given that the diffusion process is stochastic, for a certain seed set $S$, $f(S)$ can be naturally estimated by sampling. In this paper, we discuss several sampling methods and show how they can be integrated into the existing approximation algorithms. In particular, we use the Buchbinder's algorithm (Alg. \ref{alg:1/2}) for illustration. For a small value $\epsilon\in (0, \frac{1}{2})$ and  a large value $N >0$, an algorithm is a $(\frac{1}{2}-\epsilon,1-\frac{1}{N})$-approximation if it is able to produce a solution with an approximation ratio of $\frac{1}{2}-\epsilon$ with least $1-1/N$ probability \footnote{$\frac{1}{2}$ is the best possible ratio for general unconstrained submodular maximization \cite{feige2011maximizing}.}. We aim to design the algorithms with such a performance guarantee and assume that $\epsilon$ and $N$ are fixed in the rest of this paper.

Sampling methods can be generally classified into two categories, forward sampling and reverse sampling. As the name suggests, the forward sampling methods obtain an estimate of $\pi(S)$ by directly running the diffusion process while the reverse sampling methods simulate the diffusion process in the reverse direction. The algorithms designed based on such sampling techniques are shown in next two sections.

\section{Forward Sampling}
\label{sec:front}
Before discussing the PM problem, let us first consider the general submodular maximization problem in the case that the exact function value is hard to compute. For Alg. \ref{alg:1/2}, suppose that an estimate $\tilde{h}()$ of $h()$ is employed whenever $h()$ is called. One can expect that the approximation ratio can be arbitrarily close to $\frac{1}{2}$ as long as the estimates are sufficiently accurate. The modified algorithm is shown in Alg. \ref{alg:1/2_estimate}, which slightly differs from Alg. \ref{alg:1/2} and takes a lower bound $L^*$ of the optimal value as an extra input. The following key theory indicates the relationship between the precision of the estimates and the approximation ratio.

\begin{algorithm}[t]
\caption{Forward Sampling Framework}\label{alg:1/2_estimate}
\begin{algorithmic}[1]
\State \textbf{Input:} $(\tilde{h}(), \mathcal{N}=\{v_1,...,v_n\},\epsilon,L^{*})$; 
\State $X_0 \leftarrow \emptyset$, $Y_0 \leftarrow \mathcal{N}$;
\For {$i= 1:n$}
\State $\tilde{a}_i \leftarrow \tilde{h}(X_{i-1}\cup \{v_i\})-\tilde{h}(X_{i-1})+\frac{2\epsilon }{n}\cdot L^{*}$;
\State $\tilde{b}_i \leftarrow \tilde{h}(Y_{i-1}\setminus \{v_i\})-\tilde{h}(Y_{i-1})+\frac{2\epsilon }{n}\cdot L^{*}$;
\State $\tilde{a}_i^{'} \leftarrow \max\{\tilde{a}_i,0\}$; $\tilde{b}_i^{'} \leftarrow \max\{b_i,0\}$;
\State $rand \leftarrow$ a random number from 0 to 1 generated in uniform; 
\If {$rand  \leq \tilde{a}_i^{'}/(\tilde{a}_i^{'}+\tilde{b}_i^{'})$ or $\tilde{a}_i^{'}+\tilde{b}_i^{'}=0$}
	\State $X_i \leftarrow X_{i-1}\cup \{v_i\}$, $Y_i \leftarrow Y_{i-1}$.
	\Else
	\State $X_i \leftarrow X_{i-1}$, $Y_i \leftarrow Y_{i-1}\setminus \{v_i\}$;
	\EndIf
\EndFor  
\State Return $X_n$;
\end{algorithmic}
\end{algorithm}

\begin{theorem}
\label{theorem:accuracy}
For any instance $(h(),\mathcal{N}=\{v_1,...,v_n\})$ and any $\epsilon \in (0,\frac{1}{2})$, Alg. \ref{alg:1/2_estimate} is a $(\frac{1}{2}-\epsilon)$-approximation, if 
\begin{equation}
\label{eq:accuracy}
|\tilde{h}(S)-h(S)| \leq \frac{\epsilon}{n}\cdot L^{*}
\end{equation}
holds for each $S$ inspected by Alg. \ref{alg:1/2_estimate} where $\tilde{h}(S)$ is the estimate of $h(S)$ and $L^{*}$ is a lower bound of the optimal value of $h()$.
\end{theorem}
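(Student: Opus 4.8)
The plan is to adapt the analysis of the exact double-greedy algorithm (Alg.~\ref{alg:1/2}) of Buchbinder \textit{et al.}, tracking how the estimation error propagates and showing that the correction term $\frac{2\epsilon}{n}L^{*}$ absorbs it. Let $O$ be an optimal set for $h()$ and, following the standard analysis, define the interpolating sets $O_i = (O \cup X_i) \cap Y_i$, so that $O_0 = O$ and, since the algorithm terminates with $X_n = Y_n$, $O_n = X_n$. Throughout one has the invariants $X_{i-1} \subseteq O_{i-1} \subseteq Y_{i-1}$ and $v_i \in Y_{i-1}\setminus X_{i-1}$, which let us compare marginals via submodularity. The goal is a per-step inequality of the form
\begin{equation*}
\mathbb{E}[h(O_{i-1}) - h(O_i)] \le \tfrac{1}{2}\,\mathbb{E}\big[(h(X_i)-h(X_{i-1})) + (h(Y_i)-h(Y_{i-1}))\big] + \tfrac{2\epsilon}{n}L^{*},
\end{equation*}
where the expectation is over the random choice at step $i$ conditioned on the current state. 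Summing over $i$ telescopes both sides, and using $X_0 = \emptyset$, $Y_0 = \mathcal{N}$, $h(\emptyset)+h(\mathcal{N}) \ge 0$, and $L^{*} \le h(O)$ yields $\mathbb{E}[h(X_n)] \ge (\tfrac{1}{2}-\epsilon)h(O)$.

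First I would establish the two sandwich relations that make the correction work. Writing $\hat a_i$ and $\hat b_i$ for the estimated marginals \emph{before} the correction is added (so $\tilde a_i = \hat a_i + \frac{2\epsilon}{n}L^{*}$ and likewise for $\tilde b_i$), the accuracy hypothesis \eqref{eq:accuracy} applied to each of the two inspected sets gives $|\hat a_i - a_i| \le \frac{2\epsilon}{n}L^{*}$ and $|\hat b_i - b_i| \le \frac{2\epsilon}{n}L^{*}$. Consequently
\begin{equation*}
a_i \le \tilde a_i \le a_i + \tfrac{4\epsilon}{n}L^{*}, \qquad b_i \le \tilde b_i \le b_i + \tfrac{4\epsilon}{n}L^{*}.
\end{equation*}
The left inequalities are the crucial ones: they force $\tilde a_i \ge a_i$ and $\tilde b_i \ge b_i$, and since submodularity gives $a_i + b_i \ge 0$ (the marginal of $v_i$ into $X_{i-1}$ dominates its marginal into $Y_{i-1}\setminus\{v_i\}$), we also get $\tilde a_i + \tilde b_i \ge 0$. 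Hence the probabilities used by Alg.~\ref{alg:1/2_estimate} are always well defined and the random choice is never pushed below the true marginals.

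The heart of the argument, and the step I expect to be hardest, is the per-step inequality, which I would prove by a case analysis on the signs of $\tilde a_i$ and $\tilde b_i$. When $\tilde b_i \le 0$ (so $\tilde a_i \ge 0$) the algorithm deterministically adds $v_i$; here the left sandwich forces $b_i \le 0$, so the optimum-side change is nonpositive while $\tfrac{1}{2}a_i \ge -\tfrac{2\epsilon}{n}L^{*}$, and the inequality follows; the case $\tilde a_i \le 0$ is symmetric. The genuinely delicate case is $\tilde a_i, \tilde b_i > 0$, where $v_i$ is added with probability $\tilde p_i = \tilde a_i/(\tilde a_i+\tilde b_i)$. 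Bounding the optimum-side change by submodularity gives $\mathbb{E}[h(O_{i-1})-h(O_i)] \le \max\{(1-\tilde p_i)a_i,\ \tilde p_i b_i\}$, while the algorithm's expected gain is $\tilde p_i a_i + (1-\tilde p_i)b_i$; after clearing the common denominator $\tilde a_i + \tilde b_i$, what must be shown is
\begin{equation*}
\max\{\tilde b_i a_i,\ \tilde a_i b_i\} - \tfrac{1}{2}\big(\tilde a_i a_i + \tilde b_i b_i\big) \le \tfrac{2\epsilon}{n}L^{*}\,(\tilde a_i + \tilde b_i).
\end{equation*}
Substituting $a_i = \tilde a_i - \delta_a$ and $b_i = \tilde b_i - \delta_b$ with $0 \le \delta_a,\delta_b \le \frac{4\epsilon}{n}L^{*}$, the first term of the maximum produces $-\tfrac{1}{2}(\tilde a_i - \tilde b_i)^2$ (the exact-case AM--GM slack, which is $\le 0$) plus an error bounded by $\frac{2\epsilon}{n}L^{*}(\tilde a_i + \tilde b_i)$; the second term is handled symmetrically. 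This is precisely where the factor $2$ in the correction term $\frac{2\epsilon}{n}L^{*}$ is needed, and I expect the bookkeeping of these error terms, together with checking that they never carry the wrong sign, to be the main obstacle.

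Finally I would assemble the pieces: summing the per-step inequality over $i = 1,\dots,n$ makes the optimum side telescope to $h(O) - \mathbb{E}[h(X_n)]$ and the algorithm side to $\mathbb{E}[h(X_n)] - \tfrac{1}{2}(h(\emptyset)+h(\mathcal{N})) \le \mathbb{E}[h(X_n)]$, while the accumulated error is at most $n \cdot \frac{2\epsilon}{n}L^{*} = 2\epsilon L^{*} \le 2\epsilon\, h(O)$. Rearranging gives $h(O) \le 2\,\mathbb{E}[h(X_n)] + 2\epsilon\, h(O)$, i.e. $\mathbb{E}[h(X_n)] \ge (\tfrac{1}{2}-\epsilon)h(O)$, which establishes the claimed guarantee.
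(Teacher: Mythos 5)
Your proposal is correct and follows essentially the same route as the paper's Appendix proof: the same interpolating sets $O_i=(O\cup X_i)\cap Y_i$, the same per-step inequality with the additive $\frac{2\epsilon}{n}L^{*}$ slack established by a case analysis on the signs of $\tilde a_i,\tilde b_i$, and the same telescoping using $h(\emptyset)+h(\mathcal{N})\ge 0$ and $L^{*}\le h(O)$. The only (immaterial) difference is in the randomized case, where the paper converts the optimum-side loss into the estimated marginals and invokes $\tilde a_i\tilde b_i\le\frac{1}{2}(\tilde a_i^2+\tilde b_i^2)$, while you keep the true marginals and absorb the error through the substitution $a_i=\tilde a_i-\delta_a$, $b_i=\tilde b_i-\delta_b$; both bookkeepings close correctly.
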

\begin{proof}
See Appendix \ref{sebsec:proof_theorem:accuracy}.
\end{proof}

Alg. \ref{alg:1/2_estimate} provides a framework for maximizing a submodular function $h()$ when the exact value of $h()$ is hard to compute and an accurate estimator $\tilde{h}()$ is obtainable. In the rest of this section, we introduce two estimators of $\pi()$ and apply the above framework to solve the PM problem. 

\subsection{SPM Algorithm}
\label{subsec:simu}
A straightforward way to estimate $\pi(S)$ is to use Monte Carlo simulation. That is, for a given seed set $S$, we simulate the diffusion process under the considered T-C model to obtain the samples of $\pi(S)$ and take the sample mean as the estimator. For a seed set $S \subseteq V$, let $\widetilde{\pi}_l(S)$ be the sample mean obtained by running $l$ times of simulations and define that $\widetilde{f}_l(S)=P\cdot \widetilde{\pi}_l(S)-C\cdot|S|$.  By taking $f(V)$ as the lower bound of $f(V_{opt})$, the Alg. \ref{alg:1/2_estimate} with input $(\tilde{f}_l(),V,\epsilon,f(V))$ is able to produce a good approximation to the PM problem provided that Eq. (\ref{eq:accuracy}) is satisfied for each estimating. We denote by this approach as the simulation-based profit maximization (SPM) algorithm. 

For each estimating, the difference between $\widetilde{\pi}_l(S)$ and $\pi(S)$ can be bounded by Chernoff bound as the simulations are executed independently. A useful form of Chernoff bound is provided in Appendix \ref{appendix:1} and we will use Eqs. (\ref{eq:chernoff_1}) and (\ref{eq:chernoff_2}) throughout this paper. The following lemma shows the number of simulations needed to meet the accuracy required by Eq. (\ref{eq:accuracy}). 
\begin{lemma}
\label{lemma:1_1}
For a certain set $S \subseteq V$ and $\frac{1}{2}>\epsilon >0$, we have
\begin{equation}
\Pr[|\widetilde{f}_l(S)-f(S)| > \frac{\epsilon}{n} f(V)] \leq \frac{1}{4nN},
\end{equation}
if $l \geq \delta_0$ where
\begin{equation}
\label{eq:delta_0}
\delta_0=\frac{(\ln 8+\ln n+ \ln N)(2n^2+\epsilon r n)}{\epsilon^2 r^2}.
\end{equation}
\end{lemma}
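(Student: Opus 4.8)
The plan is to first reduce the deviation of $\widetilde{f}_l(S)$ to that of $\widetilde{\pi}_l(S)$, and then control the latter with the two multiplicative Chernoff bounds. Since $\widetilde{f}_l(S)-f(S)=P\cdot(\widetilde{\pi}_l(S)-\pi(S))$ (the cost term $C\cdot|S|$ cancels) and $f(V)=(P-C)\cdot n=rPn$ by Eq. (\ref{eq:lower_bound}) together with $r=\frac{P-C}{P}$, the target event $\{|\widetilde{f}_l(S)-f(S)|>\frac{\epsilon}{n}f(V)\}$ is \emph{exactly} $\{|\widetilde{\pi}_l(S)-\pi(S)|>\epsilon r\}$. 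Hence it suffices to show $\Pr[|\widetilde{\pi}_l(S)-\pi(S)|>\epsilon r]\leq\frac{1}{4nN}$ whenever $l\geq\delta_0$. If $\pi(S)=0$ (e.g. $S=\emptyset$) then $\widetilde{\pi}_l(S)=0$ deterministically and the event is empty, so I may assume $\pi(S)>0$ from here on.

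Next I would set up the i.i.d. samples. Let $X_1,\dots,X_l$ be the per-simulation adopter counts, each i.i.d. with $X_i\in[0,n]$ and $\mathbb{E}[X_i]=\pi(S)$, so that $\widetilde{\pi}_l(S)=\frac{1}{l}\sum_i X_i$. Normalizing $\hat{X}_i=X_i/n\in[0,1]$ makes $\hat{Z}_l=\sum_i\hat{X}_i$ a sum of independent $[0,1]$ variables with mean $\hat{\mu}=\mathbb{E}[\hat{Z}_l]=l\pi(S)/n$, which is precisely the regime where the multiplicative Chernoff bounds (Eqs. (\ref{eq:chernoff_1}) and (\ref{eq:chernoff_2})) apply. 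The key device is to choose the \emph{relative} deviation $\theta=\frac{\epsilon r}{\pi(S)}$, so that the absolute gap $\epsilon r$ becomes $\theta\pi(S)$ and the two one-sided events rewrite as $\hat{Z}_l\geq(1+\theta)\hat{\mu}$ and $\hat{Z}_l\leq(1-\theta)\hat{\mu}$. I would then split the target probability by a union bound and aim to bound each tail by $\frac{1}{8nN}$.

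For the upper tail I would apply Eq. (\ref{eq:chernoff_1}), whose exponent $\frac{\theta^2}{2+\theta}\hat{\mu}$ simplifies, after substituting $\theta$ and $\hat{\mu}$ and cancelling $\pi(S)$, to $\frac{l\epsilon^2 r^2}{n(2\pi(S)+\epsilon r)}$. Forcing this to be at least $\ln(8nN)=\ln 8+\ln n+\ln N$ yields the requirement $l\geq\frac{(\ln 8+\ln n+\ln N)\,n(2\pi(S)+\epsilon r)}{\epsilon^2 r^2}$, and bounding $\pi(S)\leq n$ (so that $2\pi(S)+\epsilon r\leq 2n+\epsilon r$) gives precisely $\delta_0$. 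For the lower tail, Eq. (\ref{eq:chernoff_2}) produces the exponent $\frac{\theta^2}{2}\hat{\mu}=\frac{l\epsilon^2 r^2}{2n\pi(S)}$; the resulting requirement $l\geq\frac{2(\ln 8+\ln n+\ln N)\,n\pi(S)}{\epsilon^2 r^2}$ is, again via $\pi(S)\leq n$, at most $\frac{2(\ln 8+\ln n+\ln N)\,n^2}{\epsilon^2 r^2}\leq\delta_0$, so $l\geq\delta_0$ covers it as well. Combining the two tails through the union bound then delivers the claimed $\frac{1}{4nN}$.

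The main thing to get right is the handling of the unknown, $S$-dependent mean $\pi(S)$: choosing the relative deviation $\theta=\epsilon r/\pi(S)$ is what makes the factor $\pi(S)$ cancel inside the Chernoff exponents, after which the crude bound $\pi(S)\leq n$ closes the argument uniformly over all $S$ inspected by the algorithm. It is exactly the $2+\theta$ denominator in Eq. (\ref{eq:chernoff_1}) that produces the additive $\epsilon r n$ correction term in $\delta_0$, so the tighter multiplicative form (rather than an additive Hoeffding-type inequality) is essential for matching the stated $\delta_0$. A secondary point is to verify admissibility of the forms at the boundary: for the lower tail, $\pi(S)\leq\epsilon r$ makes the event impossible since $\widetilde{\pi}_l(S)\geq 0$, while for the upper tail the chosen form remains valid for all $\theta>0$, so no case is left uncovered.
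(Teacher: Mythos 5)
Your proposal is correct and follows essentially the same route as the paper's proof: rewrite the event as a relative deviation $|\widetilde{\pi}_l(S)-\pi(S)|>\epsilon r$ with relative error $\epsilon r/\pi(S)$, apply the multiplicative Chernoff bounds, and use $\pi(S)\leq n$ to recover $\delta_0$. The only differences are cosmetic --- you bound the two tails separately at $\frac{1}{8nN}$ each and treat the $\pi(S)=0$ case explicitly, whereas the paper applies a single factor-of-two bound using the weaker (upper-tail) exponent.
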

\begin{proof} It can be proved by directly applying the Chernoff bound. By rearrangement, $\Pr[|\widetilde{f}_l(S)-f(S)| > \frac{\epsilon}{n} \cdot f(V)]$ is equal to $\Pr[|l\cdot \frac{\widetilde{\pi}_l(S)}{n}-l\cdot \frac{\pi(S)}{n}| > l \cdot  \frac{\pi(S)}{n}\cdot  \frac{\epsilon  (P-C)}{P\cdot \pi(S)}]$. Applying Chernoff bound, this probability is no larger than $$2\exp(-\frac{l \cdot \frac{\pi(S)}{n} \cdot (\frac{\epsilon(P-C)}{P\cdot \pi(S)})^2}{2+(\frac{\epsilon(P-C)}{P \cdot \pi(S)})})$$. By Eq. (\ref{eq:delta_0}) and $\pi(S)\leq n$, this probability is at most $\frac{1}{4nN} $.

\end{proof}

\begin{theorem}
\label{theorem:1}
For any $\epsilon \in (0,\frac{1}{2})$ and $N >0$, to achieve a $(\dfrac{1}{2}-\epsilon)$-approximation with probability at least $1-1/N$, the SPM algorithm demands the running time of $O(mn^3\ln n)$.
\end{theorem}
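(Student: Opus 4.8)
The plan is to prove the running-time bound by combining three ingredients already established: the accuracy-to-ratio reduction of Theorem \ref{theorem:accuracy}, the sample-complexity bound of Lemma \ref{lemma:1_1}, and a count of the work performed by the framework in Alg. \ref{alg:1/2_estimate}. I would treat the correctness (that the returned set is a $(\frac{1}{2}-\epsilon)$-approximation with probability at least $1-1/N$) and the running time together, since both follow once the number of simulations per estimate is fixed at $l=\delta_0$.

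First I would count the estimator calls. Alg. \ref{alg:1/2_estimate} runs for $n$ iterations, and each iteration queries $\tilde{f}_l$ on a constant number of sets, namely $X_{i-1}\cup\{v_i\}$, $X_{i-1}$, $Y_{i-1}\setminus\{v_i\}$, and $Y_{i-1}$, so there are at most $4n$ evaluations and the number of inspected sets is $O(n)$. For the correctness part I would set $l=\delta_0$ and use $L^{*}=f(V)=(P-C)\cdot n$ from Eq. (\ref{eq:lower_bound}). Conditioning on the history up to iteration $i$ fixes $X_{i-1}$ and $Y_{i-1}$, so each queried set is a fixed set and Lemma \ref{lemma:1_1} applies, giving $\Pr[|\tilde{f}_{\delta_0}(S)-f(S)|>\frac{\epsilon}{n}f(V)]\leq \frac{1}{4nN}$. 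A union bound over the at most $4n$ evaluations then shows that the accuracy condition Eq. (\ref{eq:accuracy}) holds simultaneously for every inspected set with probability at least $1-1/N$, and on this event Theorem \ref{theorem:accuracy} yields the $(\frac{1}{2}-\epsilon)$-approximation.

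For the running time I would price a single evaluation of $\tilde{f}_{\delta_0}$. Each of the $\delta_0$ Monte Carlo simulations runs one diffusion under the T-C model, which costs $O(m)$ time because every edge is processed a constant number of times, so one evaluation costs $O(m\delta_0)$ and the $O(n)$ evaluations cost $O(mn\delta_0)$ in total. Substituting $\delta_0=\frac{(\ln 8+\ln n+\ln N)(2n^2+\epsilon r n)}{\epsilon^2 r^2}$ from Eq. (\ref{eq:delta_0}) and treating the fixed parameters $\epsilon$, $N$, and $r$ as constants gives $\delta_0=O(n^2\ln n)$, hence a total running time of $O(mn^3\ln n)$, as claimed.

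The main obstacle is the subtlety in the union bound rather than the arithmetic: the sets inspected by Alg. \ref{alg:1/2_estimate} are not fixed in advance but are determined by the algorithm's random coin flips and by the random estimates themselves, so Lemma \ref{lemma:1_1}, stated for a fixed $S$, cannot be invoked naively. The care required is to condition on the execution history at each iteration so that the four queried sets become deterministic and the fresh simulations used to estimate them are independent of that history; this is what legitimizes the per-query bound $\frac{1}{4nN}$ and the union bound over the $O(n)$ queries. Once this is handled, the per-simulation $O(m)$ cost and the asymptotic simplification of $\delta_0$ are routine bookkeeping.
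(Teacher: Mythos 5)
Your proposal is correct and follows essentially the same route as the paper: a union bound over the at most $4n$ sets inspected by Alg. 2 using Lemma 5, an appeal to Theorem 2 for the approximation ratio, and a running-time count of $O(\delta_0 n m)$ with $\delta_0 = O(n^2\ln n)$. Your additional remark about conditioning on the execution history so that each queried set is deterministic when Lemma 5 is applied is a point the paper's proof leaves implicit, but it does not change the argument.
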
 

\begin{proof}
Since there are $4n$ sets inspected by Alg. \ref{alg:1/2_estimate}, by Lemma \ref{lemma:1_1} and the union bound, $|\widetilde{f}_l(S)-f(S)| \leq \frac{\epsilon}{n} f(V)$ holds for every inspected set with probability at least $1-1/N$ provided that $l \geq \delta_0$. Therefore, by Theorem \ref{theorem:accuracy}, the approximation ratio is $\dfrac{1}{2}-\epsilon$ if $l = \delta_0$.  Because $\delta_0$ simulations are used for each estimating and there are totally $4n$ estimated function values, the number of executed simulations is $4\delta_0n$. The total running time is therefore $O(4\delta_0 n m)$ as one simulation costs $O(m)$ in the worst case. 
\end{proof}

Note that executing one simulation for $f(S)$ costs $O(m)$ in the worst case but in fact it is proportional to $f(S)$ which is usually much smaller than $m$. 

\subsection{RPM Algorithm}
\label{subsec:realization}
Now let us consider another estimator of $f()$. As mentioned in Sec. \ref{sec: pre}, the real diffusion process of a triggering model, namely IC and LT model, is equivalent to the one described in Def. \ref{def:triggering}. Therefore, instead of simulating the real diffusion process, one can obtain an estimate of $\pi(S)$ by sampling the triggering set of each node. When the triggering set of each node is determined, we obtain a \textit{realization} of the given network. 
\begin{definition}[Realization]
\label{def:realization}
For a T-C network $\Upsilon$, a realization $g=\{T_1,...,T_n\}$ is a collections of triggering sets of the nodes where $T_v$ is sampled according to distribution given by $\Upsilon$. Equivalently, a realization can be taken as a deterministic T-C network. 
\end{definition}


\begin{algorithm}[t]
\caption{Realization-based Profit Maximization}\label{alg:simu}
\begin{algorithmic}[1]
\State \textbf{Input:} $\Upsilon,\epsilon$ and $N$;
\State $l \leftarrow \frac{(\ln 8+\ln n+ \ln N)(2n^2+\epsilon r n)}{\epsilon^2 r^2}$;
\State Generate $l$ realizations $\mathcal{G}_l=\{g_1,...,g_l\}$ by sampling triggering sets for each node;
\State $V^{*} \leftarrow$ Forward Sampling Framework$(\widehat{f}_{\mathcal{G}_l}(),V,\epsilon,f(V))$;
\State Return $V^{*}$;
\end{algorithmic}
\end{algorithm}

The following lemma directly follows from the definition of triggering model.

\begin{lemma}
\label{lemma:equivalent} 
Given a seed set $S$, the following two processes are equivalent to each other with respect to $\pi(S)$.
\begin{itemize}
\item Process a. Execute the real diffusion process on $G$.
\item Process b. Randomly generate a realization $g$ and execute the diffusion process on $g$ according to Def. \ref{def:triggering}.
\end{itemize}
\end{lemma}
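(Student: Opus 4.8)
The plan is to obtain the equivalence as a direct consequence of the triggering-model framework. First I would observe that Process b is precisely the triggering-set realization of Def.~\ref{def:triggering} instantiated on the T-C network $\Upsilon$: drawing a realization $g=\{T_1,\dots,T_n\}$ amounts to independently sampling each $T_v$ from the coupon-modified distribution $\mathcal{D}_v'$ of Eq.~(\ref{eq:new_distribution}), and the subsequent diffusion on $g$ is the deterministic round-by-round activation rule. By Lemma~\ref{lemma:triggering} the T-C model is itself a triggering model, so Def.~\ref{def:triggering} already asserts that its real diffusion process---Process a---admits an equivalent description in exactly this triggering-set form. Hence at the level of generality of the lemma the claim reduces to matching two formulations of the same triggering model, and the task is to spell this out rather than to invent new machinery.

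To make the identification rigorous I would set up a coupling in which both processes are driven by the same family $\{T_v\}$ and prove by induction on the time step $t$ that the set $A_t$ of adopters present after round $t$ is identical under the two descriptions. The base case $t=0$ holds since both begin from the seed set $S$. For the inductive step, an inactive node $v$ joins $A_t$ exactly when some in-neighbor belonging to $T_v$ lies in $A_{t-1}$; this activation predicate is shared by the real process (viewed through its triggering-set description) and by the deterministic diffusion on $g$, so the inductive hypothesis that $A_{t-1}$ agrees forces $A_t$ to agree. Finiteness of $G$ guarantees termination, whence the final adopter sets coincide for \emph{every} realization $g$; consequently $|A_t|$ is the same random variable in both processes, and taking expectations over the common law of $\{T_v\}$ gives the same $\pi(S)$.

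The step I expect to be the real obstacle is the justification, implicit in Def.~\ref{def:triggering}, that the lazily revealed randomness of the genuine diffusion (for instance the on-the-fly activation attempts of the IC model) has the same joint law as committing all triggering sets in advance. This is the principle of deferred decisions: each $T_v$ is drawn independently and influences the outcome only through the single predicate ``$T_v$ meets the current active set,'' so the order in which these independent choices are exposed is immaterial to the distribution of the result. I would additionally verify that the coupon modification leaves this reasoning intact: by Eq.~(\ref{eq:new_distribution}) a node $v$ with $I_v<P$ is assigned the empty distribution, its triggering set is deterministically empty, and it is never activated in either process, so the inductive invariant is preserved. With the deferred-decisions equivalence in hand, equality of the adopter-set distributions, and therefore of $\pi(S)$, follows.
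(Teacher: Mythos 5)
Your proposal is correct and takes essentially the same route as the paper, which offers no separate proof and simply remarks that the lemma ``directly follows from the definition of triggering model'': Def.~\ref{def:triggering} already \emph{defines} a triggering model as one whose real diffusion is equivalent to the realization-based process, and Lemma~\ref{lemma:triggering} certifies that the T-C model is such a model. Your coupling/induction and deferred-decisions discussion merely spells out the details the paper leaves implicit.
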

 
For a seed set $S$, let $\widehat{\pi}_g(S)$ be the number of adopters under $S$ on realization $g$ and correspondingly we define that $\widehat{f}(g,S)=P\cdot \widehat{\pi}_g(S)-C\cdot |S|$. According Lemma \ref{lemma:equivalent}, one can estimate $\pi(S)$ by repeatedly executing the Process b in Lemma \ref{lemma:equivalent} and calculating the mean of $\widehat{\pi}_g(S)$. Given a set of $l$ realizations $\mathcal{G}_l=\{g_1,...,g_l\}$ generated independently, for a seed set $S$, let
\begin{equation}
\widehat{\pi}_{\mathcal{G}_l}(S)=\frac{\sum_{i=1}^{l}\widehat{\pi}_{g_i}(S)}{l}
\end{equation}
and
\begin{equation}
\widehat{f}_{\mathcal{G}_l}(S)=\frac{\sum_{i=1}^{l}\widehat{f}(g_i,S)}{l}=P \cdot \widehat{\pi}_{\mathcal{G}_l}(S)-C\cdot |S|.
\end{equation}
Now we have another estimator $\widehat{f}_{\mathcal{G}_l}()$ that can be utilized in the framework of Alg. \ref{alg:1/2_estimate}. The obtained algorithm is shown in Alg. \ref{alg:simu} which is named as \textit{realization-based profit maximization} (RPM) algorithm. According to Lemma \ref{theorem:accuracy}, in order to obtain a good approximation one should generate sufficient number of realizations such that Eq. (\ref{eq:accuracy}) can be satisfied. In fact, the number of total realizations required herein is the same as that of the simulations required in the SPM algorithm, shown as follows.  

\begin{lemma}
Suppose $l$ realizations are generated for estimating. For any $\epsilon \in (0,\frac{1}{2})$ and $N >0$, with probability at least $1-\frac{1}{N}$, $|\widehat{f}_{\mathcal{G}_l}(S)-f(S)|\leq \frac{\epsilon}{n} f(V)$ holds for
every $S$ inspected by Alg .\ref{alg:1/2_estimate}, if $l$ is not less than $\delta_0$ defined in Eq. (\ref{eq:delta_0}).
\end{lemma}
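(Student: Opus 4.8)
The plan is to reduce the statement to the concentration bound already proved for the simulation estimator, using the process equivalence of Lemma~\ref{lemma:equivalent} as the bridge, and then to close with a union bound over the sets that the framework actually evaluates. The key observation is that, for a \emph{fixed} seed set $S$, the adopter count $\widehat{\pi}_g(S)$ on an independently sampled realization $g$ has, by Lemma~\ref{lemma:equivalent}, exactly the same distribution as the adopter count produced by one run of the real diffusion process. Hence $\widehat{\pi}_{\mathcal{G}_l}(S)$, being an average of $l$ independent copies of $\widehat{\pi}_g(S)$, is distributed identically to the simulation mean $\widetilde{\pi}_l(S)$, and consequently $\widehat{f}_{\mathcal{G}_l}(S)$ is distributed identically to $\widetilde{f}_l(S)$.

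First I would transport the per-set analysis of Lemma~\ref{lemma:1_1} verbatim. Because its Chernoff argument uses only that the estimator is an average of $l$ i.i.d.\ bounded samples with mean $\pi(S)$, the distributional identity above gives, for every fixed $S$ and every $l \geq \delta_0$,
\begin{equation}
\Pr\!\left[\,\bigl|\widehat{f}_{\mathcal{G}_l}(S)-f(S)\bigr| > \frac{\epsilon}{n}\,f(V)\,\right] \leq \frac{1}{4nN}.
\end{equation}
I would then mirror the proof of Theorem~\ref{theorem:1}: the framework of Alg.~\ref{alg:1/2_estimate} queries its estimator on at most $4n$ sets, namely $X_{i-1}$, $X_{i-1}\cup\{v_i\}$, $Y_{i-1}$ and $Y_{i-1}\setminus\{v_i\}$ across the $n$ iterations. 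Applying the union bound to these $4n$ events against the per-set failure probability $\tfrac{1}{4nN}$ yields total failure probability at most $\tfrac{1}{N}$, i.e.\ the accuracy requirement of Eq.~(\ref{eq:accuracy}) holds simultaneously for all inspected sets with probability at least $1-\tfrac{1}{N}$. This is exactly what the lemma asserts, and it is also what feeds Theorem~\ref{theorem:accuracy} to certify the $(\tfrac{1}{2}-\epsilon)$ guarantee.

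The hard part will be the union-bound step, which is where RPM departs subtly from SPM. In SPM each of the $4n$ evaluations draws a \emph{fresh} batch of simulations, so conditioning on the algorithm's history fixes the set being evaluated and the per-set bound applies to whatever set is reached; the union over the $4n$ evaluations is then immediate. RPM instead reuses the single batch $\mathcal{G}_l$ for all evaluations, so the inspected sets $X_{i-1}$ and $Y_{i-1}$ are themselves functions of $\mathcal{G}_l$ and are correlated with the estimation errors, which means the $4n$ sets are data-dependent rather than fixed in advance. The care I would take is to pin down why this correlation does not inflate the failure probability: fixing the algorithm's internal random bits turns the whole execution into a deterministic map from $\mathcal{G}_l$ to its at-most-$4n$ inspected sets, after which one argues the accuracy of precisely those sets. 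If a clean per-evaluation bound survives this conditioning it reproduces the $4n$-term union bound of Theorem~\ref{theorem:1}; otherwise one must either restrict attention to the reachable sets at each stage or strengthen to a uniform bound over all $2^n$ subsets, the latter costing an extra factor of order $n$ in $l$. All remaining arithmetic---that $l \geq \delta_0$ is exactly the threshold driving the per-set probability down to $\tfrac{1}{4nN}$---is inherited unchanged from Lemma~\ref{lemma:1_1} and requires no new computation.
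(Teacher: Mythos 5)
Your reduction is exactly what the paper does: its entire proof of this lemma is the sentence ``The proof is same as that of Lemma~\ref{lemma:1_1}'', i.e., the distributional identity between $\widehat{\pi}_{\mathcal{G}_l}(S)$ and $\widetilde{\pi}_l(S)$ furnished by Lemma~\ref{lemma:equivalent}, followed by the $4n$-term union bound inherited from the proof of Theorem~\ref{theorem:1}. The data-dependence issue you flag in your last paragraph is genuine --- because $\mathcal{G}_l$ is reused across all evaluations, the inspected sets $X_{i-1},Y_{i-1}$ are themselves functions of $\mathcal{G}_l$, so a union bound over only the $4n$ realized sets is not formally valid --- but this gap sits in the paper's own argument as well, and the clean repair is the one you already name: a union bound over all $2^n$ candidate sets (or over the at most $\sum_i 2^i$ reachable ones, since $X_i\subseteq\{v_1,\dots,v_i\}$ determines $Y_i$), which replaces $\ln n$ by $n\ln 2$ in $\delta_0$, exactly as the paper does for $\delta_1$ in the RA-T analysis.
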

\begin{proof}
The proof is same as that of Lemma \ref{lemma:1_1}.
\end{proof}

According to Def. \ref{def:triggering}, generating one realization costs $O(m)$ time. For a certain realization $g$ and a seed set $S$, computing $\widehat{f}_(g,S)$ can be done in $O(n+m)$ time by the breadth-first search, and therefore $\widehat{f}_{\mathcal{G}_l}(S)$ can be computed in $O(l(m+n))$. Hence, Alg. \ref{alg:1/2} totally costs $O(lm+ln(m+n))$. Setting $l$ to be $\delta_0$ gives the running time of $O(mn^3\ln n)$.
\begin{theorem}
\label{theorem:1/2}
For any $\epsilon \in (0,\frac{1}{2})$ and $N >0$, the RPM algorithm provides a $(\dfrac{1}{2}-\epsilon)$-approximation with a probability at least $1-1/N$ with the running time of $O(mn^3\ln n)$ with respect to $m$ and $n$.
\end{theorem}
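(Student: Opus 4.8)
The plan is to separate the statement into two independent claims—the $(\frac{1}{2}-\epsilon)$ approximation guarantee and the $O(mn^3\ln n)$ time bound—and dispatch each by assembling results already established. The approximation claim should reduce entirely to verifying the hypothesis of Theorem \ref{theorem:accuracy}, namely that the estimator used by the Forward Sampling Framework satisfies the accuracy condition Eq. (\ref{eq:accuracy}) on every inspected set, instantiated with $h()=f()$, $\tilde{h}()=\widehat{f}_{\mathcal{G}_l}()$, and $L^{*}=f(V)$. First I would record that $f(V)=(P-C)n$ is a legitimate lower bound on the optimum by Eq. (\ref{eq:lower_bound}), so that the quantity $\frac{\epsilon}{n}L^{*}=\frac{\epsilon}{n}f(V)$ appearing in Eq. (\ref{eq:accuracy}) is exactly the deviation controlled by the preceding lemma.

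Next I would establish that the accuracy condition holds simultaneously on all sets the algorithm touches. By Lemma \ref{lemma:equivalent} each realization yields an unbiased sample of $\pi(S)$, so $\widehat{f}_{\mathcal{G}_l}(S)$ is an average of $l$ independent bounded samples, and the single-set concentration already proved (the lemma immediately preceding this theorem, whose Chernoff computation is identical to that of Lemma \ref{lemma:1_1}) gives $\Pr[|\widehat{f}_{\mathcal{G}_l}(S)-f(S)|>\frac{\epsilon}{n}f(V)]\le \frac{1}{4nN}$ for each fixed $S$ once $l\ge\delta_0$. Since Alg. \ref{alg:1/2_estimate} evaluates the oracle on four sets in each of its $n$ iterations, at most $4n$ sets are inspected; a union bound over these $4n$ events then yields the condition on all of them with probability at least $1-4n\cdot\frac{1}{4nN}=1-\frac{1}{N}$. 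On that event Theorem \ref{theorem:accuracy} applies verbatim and returns a $(\frac{1}{2}-\epsilon)$-approximation, which is precisely the probabilistic guarantee claimed.

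For the time bound I would simply cost the two phases of Alg. \ref{alg:simu}. Generating the $l$ realizations is $O(lm)$ by Def. \ref{def:triggering}; each oracle call $\widehat{f}_{\mathcal{G}_l}(S)$ runs a breadth-first search over the $l$ realizations in $O(l(m+n))$ time, and there are $4n$ such calls, for $O(ln(m+n))$ total. Substituting $l=\delta_0=\Theta(n^2\ln n)$ from Eq. (\ref{eq:delta_0}) (with $\epsilon,r,N$ treated as fixed) and using $m=\Omega(n)$ to absorb $m+n$ into $O(m)$ collapses the sum $O(lm+ln(m+n))$ to $O(mn^3\ln n)$, matching the claim.

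The one place needing genuine care, rather than bookkeeping, is the union-bound step: unlike the SPM algorithm, RPM reuses one fixed family $\mathcal{G}_l$ for every estimate, so the $4n$ deviation events are not independent and, moreover, the identity of the inspected sets is itself random and correlated with $\mathcal{G}_l$. I would emphasize that the union bound requires no independence, so the simultaneous guarantee survives this correlation; the only remaining delicacy is that the inspected sets are determined by the execution, which I would handle by invoking the chain structure $X_0\subseteq\cdots\subseteq X_n$ and $Y_0\supseteq\cdots\supseteq Y_n$ to confine any single run to at most $4n$ sets—exactly the count used in the proof of Theorem \ref{theorem:1}. Everything else is routine substitution.
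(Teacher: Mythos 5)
Your proposal follows essentially the same route as the paper: the paper proves Theorem~\ref{theorem:1/2} by combining the unnumbered lemma immediately preceding it (whose proof is declared identical to that of Lemma~\ref{lemma:1_1}, i.e.\ a Chernoff bound plus a union bound over the $4n$ inspected sets) with Theorem~\ref{theorem:accuracy}, and then doing exactly the running-time accounting you give, $O(lm+ln(m+n))$ with $l=\delta_0=\Theta(n^2\ln n)$. So as a reconstruction of the paper's argument your write-up is accurate and complete. The one point where you go beyond the paper is the dependence issue you flag at the end, and there your resolution is not quite airtight: the union bound indeed needs no independence, but it does need each of the $4n$ events to have probability at most $\frac{1}{4nN}$, and the event ``the $i$-th inspected set's estimate deviates'' concerns a \emph{random} set whose identity is determined by $\mathcal{G}_l$ itself (since $X_{i-1}$ can be any of $2^{i-1}$ subsets of $\{v_1,\dots,v_{i-1}\}$ and which one is realized depends on the earlier estimates computed from the same $\mathcal{G}_l$). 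The chain structure therefore does not reduce the problem to $4n$ \emph{fixed} sets, and the fixed-set Chernoff bound does not directly apply. A fully rigorous version would either draw fresh samples at every oracle call (which is exactly what SPM does, and why Lemma~\ref{lemma:1_1} suffices there) or union-bound over all $2^n$ subsets at the cost of an extra factor of $n$ in $l$ (which is precisely the role of the $n\ln 2$ term in $\delta_1$ for RA-T). To be fair, the paper glosses over this identically by asserting the lemma's proof is ``the same as that of Lemma~\ref{lemma:1_1}'', so your proof is no weaker than the paper's — but the gap you sensed is real and your proposed patch does not close it.
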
 

One can see that the running times of SPM algorithm and RPM algorithm are in the same order with respect to $m$ and $n$. However, the actual number of operations executed by SPM algorithm is about $4 l m n$ while the RPM algorithm roughly conducts $l m n$ operations. Therefore, when the graph is large, the SPM algorithm is likely to be more efficient. For practical implementation, because one realization $g$ can be used to estimate $f(S)$ for any seed set $S$, we can generate the realizations in advance and use them later in real-time computation, which is another advantage of the RPM algorithm.

\section{Reverse Sampling}
\label{sec:reverse}
In this section, we present two algorithms which are able to provide the approximation ratio of $\frac{1}{2}-\epsilon$ but require significantly less running time than RPM and SPM do. Such algorithms are designed based on the technique of reverse sampling which is firstly invented by Borgs. \textit{et al.} \cite{borgs2014maximizing} for the IM problem. In the reserve sampling framework, the key object is the \textit{reverse adopted-reachable set}.

\begin{algorithm}[t]
\caption{ RA Set }\label{alg:ra_set}
\begin{algorithmic}[1]
\State \textbf{Input:} $\Upsilon$; 
\State Select a node $v$ from $V$ uniformly in random; 
\State Sample a realization $g$ of $\Upsilon$;
\State Return the nodes that are reverse-adopted-reachable to $v$ in $g$;
\end{algorithmic}
\end{algorithm}

\begin{definition}
For a node $v$ and a realization $g=\{T_1,...,T_n\}$ of a T-C network $\Upsilon$, a node $v^*$ is \textit{reverse-adopted-reachable} to $v$ in $g$ if there is a sequence of nodes $u_1,...,u_k$ such that $u_1=v, u_k=v^*$ and $v_{i+1} \in T_{v_i}$ for $1 \leq i \leq k-1$.
\end{definition}

\begin{definition}{\small(\textbf{Reverse Adopted-reachable (RA) Set}\footnote{The RA set defined in this paper is analogous to the hyperedge in \cite{borgs2014maximizing}, the reverse reachable set in \cite{tang2015influence} and the random R-tuple in \cite{tong2017efficient}.})} 
\label{def:raset}
For a T-C network $\Upsilon$, an RA set is a set of nodes randomly generated by Alg. \ref{alg:ra_set}. As shown therein, we first select a node $v$ in $V$ uniformly in random and then sample a realization $g$ of $\Upsilon$. Finally, it returns the nodes that are reverse-adopted-reachable of $v$ in $g$.
\end{definition}

For a seed set $S \subseteq V$ and a RA set $R$, let $x(R,S)$ be a variable such that
\[
 x(S,R) =
  \begin{cases}
  1  &  \hspace{0mm} \hspace{-0.5mm} \text{if $R \cap S \neq \emptyset$}  \\
  0  & \hspace{0mm} \hspace{-0.5mm} \text{otherwise.} 
  \end{cases}
\] 
Since $R$ is generated randomly, $x(S,R)$ is a random variable. The following lemma shows that $n\cdot x(S,R)$ is a unbiased estimate of $\pi(S)$.
\begin{lemma}
\label{lemma:ra_mean}
For any seed set $S$, $\mathbb{E}[x(S,R)]=\frac{\pi(S)}{n}$.
\end{lemma}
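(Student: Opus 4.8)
The plan is to show that the Bernoulli variable $x(S,R)$ is exactly the indicator that a uniformly random target node becomes an adopter under $S$ on a random realization, and then to average this indicator over the target node. The proof rests on one combinatorial identity and a single application of the equivalence established in Lemma \ref{lemma:equivalent}.

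First I would establish the central identity: for a \emph{fixed} realization $g$ and a \emph{fixed} target node $v$, the reverse-adopted-reachable set $R$ of $v$ in $g$ satisfies $R\cap S\neq\emptyset$ if and only if $v$ is an adopter under the seed set $S$ in $g$. On a fixed realization the diffusion is deterministic, and $v$ becomes an adopter precisely when some seed $s\in S$ can reach $v$ along a chain of triggering-set memberships, i.e.\ there exist $u_1=v,\dots,u_k=s$ with $u_{i+1}\in T_{u_i}$. By the definition of reverse-adopted-reachability, this is exactly the statement that $s\in R$. Hence the existence of an activating seed coincides with $S$ meeting $R$, so that $x(S,R)=\mathbf{1}[v\text{ adopts under }S\text{ in }g]$.

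Next I would take expectations over the two independent sources of randomness in Alg. \ref{alg:ra_set}, namely the uniform choice of $v\in V$ and the sampled realization $g$. Conditioning on $v$ and applying the identity above gives
\begin{align*}
\mathbb{E}[x(S,R)] &=\frac{1}{n}\sum_{v\in V}\mathbb{E}_{g}\big[\mathbf{1}[v\text{ adopts under }S\text{ in }g]\big]\\
&=\frac{1}{n}\sum_{v\in V}\Pr_{g}[v\text{ adopts under }S].
\end{align*}
By Lemma \ref{lemma:equivalent}, running the diffusion on a randomly sampled realization is equivalent, with respect to adoption, to the real diffusion process; hence each probability $\Pr_{g}[v\text{ adopts under }S]$ equals the adoption probability of $v$ in the actual process. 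Finally, since $\pi(S)$ is the expected number of adopters, linearity of expectation makes it equal to the sum over $v\in V$ of these adoption probabilities, which is exactly the summation above. Dividing by $n$ yields $\mathbb{E}[x(S,R)]=\pi(S)/n$.

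The only real obstacle is the first step: verifying both directions of the ``if and only if'' between $R$ meeting $S$ and the forward adoption of $v$. Both directions amount to reading the chain $u_1,\dots,u_k$ forward versus backward, but one must be careful that the triggering sets appearing in the reverse-reachability definition are the \emph{same} sets that drive the forward process on $g$. This holds because a realization, by Def. \ref{def:realization}, fixes all triggering sets $\{T_1,\dots,T_n\}$ simultaneously, so the reverse and forward views refer to a single deterministic object. Once this is settled, the remainder is routine averaging.
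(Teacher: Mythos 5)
Your proof is correct and follows essentially the same route as the paper's: condition on the uniformly chosen target $v$, observe that on a fixed realization $R\cap S\neq\emptyset$ holds exactly when $v$ adopts under $S$, then combine Lemma \ref{lemma:equivalent} with linearity of expectation over $\pi(S)=\sum_{v}\pi(S,v)$. The only difference is presentational: you spell out the forward/backward reading of the triggering chain more explicitly than the paper does, which is a welcome clarification but not a different argument.
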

\begin{proof}
See Appendix \ref{sec:proof_lemma:ra_mean}
\end{proof}

Let $\mathcal{R}_l=\{R_1,..,R_l\}$ be a set of $l$ RA sets \textit{independently} generated by Alg. \ref{alg:ra_set} and let 
\begin{equation*}
F(\mathcal{R}_l,S)=P\cdot n\cdot \frac{\sum_{i=1}^{l}x(R_i,S)}{l}-C\cdot|S|.
\end{equation*}

The following problem plays an important role in solving the PM problem.

\begin{problem}[Cost Set Cover Problem]
\label{problem:cost_set_cover}
Given a collection $\mathcal{R}_l$ of RA sets, find a set $S$ such that $F(\mathcal{R}_l,S)$ is maximized.
\end{problem}
\begin{lemma}
Given $\mathcal{R}_l$, $F(\mathcal{R}_l,S)$ is a submodular function with respect to $S$.
\end{lemma}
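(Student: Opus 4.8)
The plan is to reduce the claim to the submodularity of a single coverage term and then assemble it via the standard closure properties of submodular functions. Writing out the definition,
$$F(\mathcal{R}_l,S) = \frac{P\cdot n}{l}\sum_{i=1}^{l} x(R_i,S) - C\cdot |S|,$$
I observe that $\frac{P\cdot n}{l}$ is a strictly positive constant and that $C\cdot|S|$ is a modular (hence both sub- and supermodular) function of $S$. Since submodularity is preserved under multiplication by a nonnegative scalar and under addition of a modular function, it suffices to prove that for each fixed RA set $R$ the indicator $x(R,\cdot)\colon 2^V\to\{0,1\}$ is submodular; the sum $\sum_{i=1}^{l} x(R_i,\cdot)$ is then submodular as a nonnegative combination of submodular functions.

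For the core step I would fix a single RA set $R$ and verify the diminishing-returns inequality of Eq. (\ref{eq: submodular}) directly. The key computation is the marginal gain: for any $S\subseteq V$ and $v\notin S$,
$$x(R,S\cup\{v\})-x(R,S) = \begin{cases} 1 & \text{if } v\in R \text{ and } R\cap S=\emptyset,\\ 0 & \text{otherwise,} \end{cases}$$
because adding $v$ can flip the indicator from $0$ to $1$ only when $R$ was not yet hit by $S$ but $v$ itself lies in $R$. I would then take $S_1\subseteq S_2$ with $v\notin S_2$ and argue that the marginal gain at $S_2$ cannot exceed that at $S_1$: if the gain at $S_2$ equals $1$, then $v\in R$ and $R\cap S_2=\emptyset$, which forces $R\cap S_1=\emptyset$ as well (since $S_1\subseteq S_2$), so the gain at $S_1$ is also $1$; in every other case the gain at $S_2$ is $0$, which is at most the gain at $S_1$. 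This establishes $x(R,S_1\cup\{v\})-x(R,S_1)\ge x(R,S_2\cup\{v\})-x(R,S_2)$, i.e. $x(R,\cdot)$ is submodular.

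Combining the two parts yields the lemma: each $x(R_i,\cdot)$ is submodular, their sum scaled by $\frac{P\cdot n}{l}>0$ is submodular, and subtracting the modular term $C\cdot|S|$ preserves submodularity. I do not expect a genuine obstacle here, as $x(R,\cdot)$ is a coverage-type indicator and coverage functions are the prototypical submodular functions. The one point worth emphasizing is that the argument must be carried out on the empirical object $F(\mathcal{R}_l,S)$ itself rather than by invoking Theorem \ref{theorem:triggering}, which only guarantees submodularity of the \emph{expectation} $\pi(\cdot)$; working RA-set-by-RA-set is precisely what makes the finite-sample statement go through cleanly.
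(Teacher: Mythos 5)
Your proof is correct and follows essentially the same route as the paper, which simply observes that the first term of $F(\mathcal{R}_l,S)$ is a coverage (set-cover) function and the second is linear in $|S|$; you merely spell out the verification that each indicator $x(R_i,\cdot)$ satisfies the diminishing-returns inequality, which the paper leaves implicit. No gap.
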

\begin{proof}
The first part of $F(\mathcal{R}_l,S)$ is exactly the set cover problem and its second part is linear to $|S|$. 
\end{proof}
Since $F(\mathcal{R}_l,S)$ is submodular and $F(\mathcal{R}_l,\emptyset)+F(\mathcal{R}_l,V)=0+(P-C)\cdot n \geq 0$, by Lemma \ref{lemma:1/2}, Alg. \ref{alg:1/2} provides an $\frac{1}{2}$-approximation to Problem \ref{problem:cost_set_cover}.

By Lemma \ref{lemma:ra_mean}, $F(\mathcal{R}_l,S)$ is an unbiased estimate of $f(S)$. Hence, the set $S$ that can maximize $F(\mathcal{R}_l,S)$ can intuitively be a good solution to the PM problem, which is the main idea to solve the PM problem by the reverse sampling.

\subsection{RA-T Algorithm}
\label{subsec:RA-T}
In this section, we present the first algorithm designed with the reverse sampling, as shown in Alg. \ref{alg:reverse_1}. In this algorithm, we first decide a threshold $l$, generate $l$ RA sets, and then run Alg. \ref{alg:1/2} to solve Problem \ref{problem:cost_set_cover}. We term this algorithm as the RA-T algorithm where RA-T stands for reverse adopted-reachable (RA) set with a threshold (T). As shown in the following, for any $\epsilon \in (0,\frac{1}{2})$, the RA-T algorithm produces a $(\frac{1}{2}-\epsilon)$-approximation to the PM problem with a high probability, as long as $l$ is sufficiently large. 

\begin{algorithm}[t]
\caption{RA-T Algorithm}\label{alg:reverse_1}
\begin{algorithmic}[1]
\State \textbf{Input:} $\Upsilon, \epsilon$ and $N$; 
\State $\delta_1 \leftarrow \dfrac{(\ln N+n\ln 2)(2+\epsilon_1r)}{\epsilon_1^2r^2}$; 
\State $\delta_2 \leftarrow \dfrac{2\ln N}{\epsilon_2^2r^2}$;
\State $l=\max(\delta_1,\delta_2)$;  
\State Independently generate a collection $\mathcal{R}_l=\{R_1,...,R_l\}$ of $l$ RR sets by Alg. \ref{alg:ra_set}; 
\State $V^* \leftarrow$ Buchbinder's Algorithm ($F(\mathcal{R}_l,S),V$);
\State Return $V^{*}$;
\end{algorithmic}
\end{algorithm} 

Let $\epsilon_1>0$ and $\epsilon_2>0$ be some adjustable parameters where
\begin{equation}
\label{eq:ep1+ep2}
\epsilon_1+\frac{1}{2}\epsilon_2=\epsilon.
\end{equation}
Set that
\begin{equation}
\label{eq:delta_1}
\delta_1=\dfrac{(\ln N+n\ln 2)(2+\epsilon_1r)}{\epsilon_1^2r^2},
\end{equation}
and
\begin{equation}
\label{eq:delta_2}
\delta_2=\dfrac{2\ln N}{\epsilon_2^2r^2},
\end{equation}

\begin{lemma}
\label{lemma:epsilon_1}
Given a set $\mathcal{R}_l$ of $l$ RA sets generated independently by Alg. \ref{alg:ra_set}, with probability at most $1/N$, there exists some $S \subseteq V$ such that
\begin{equation*}
F(\mathcal{R}_l,S)-f(S) > \epsilon_1 \cdot f(V_{opt}),
\end{equation*}
if $l \geq \delta_1$.
\end{lemma}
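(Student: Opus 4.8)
The plan is to treat $F(\mathcal{R}_l,S)$ as an unbiased estimator of $f(S)$ and then establish a \emph{uniform} upper-deviation bound via a Chernoff inequality combined with a union bound over all $2^n$ candidate sets. First I would record that, by Lemma~\ref{lemma:ra_mean}, each indicator $x(R_i,S)$ is an independent Bernoulli trial with mean $\pi(S)/n$, so that $\mathbb{E}[F(\mathcal{R}_l,S)] = P\cdot n\cdot \tfrac{\pi(S)}{n}-C|S| = f(S)$. Writing $Y_S=\sum_{i=1}^{l}x(R_i,S)$, the quantity of interest is a centered sum of $l$ independent bounded variables,
\[
F(\mathcal{R}_l,S)-f(S)=\frac{Pn}{l}\Bigl(Y_S-l\cdot\tfrac{\pi(S)}{n}\Bigr),
\]
which is exactly the form to which the Chernoff bound of Appendix~\ref{appendix:1} applies.

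The second step is to convert the target threshold into a clean deviation condition that is free of $S$. Since $f(V_{opt})\geq f(V)=(P-C)n$ by Eq.~(\ref{eq:lower_bound}), the event $F(\mathcal{R}_l,S)-f(S)>\epsilon_1 f(V_{opt})$ forces $Y_S-\mathbb{E}[Y_S]>\frac{\epsilon_1 f(V_{opt})\,l}{Pn}\geq \epsilon_1 r l$, where $r=\frac{P-C}{P}$. Thus for each fixed $S$ it suffices to bound $\Pr[Y_S-\mathbb{E}[Y_S]>\epsilon_1 r l]$. Applying the multiplicative Chernoff bound with relative deviation $\gamma=\epsilon_1 r/(\pi(S)/n)$ gives an exponent of $\frac{\epsilon_1^2 r^2 l}{2\pi(S)/n+\epsilon_1 r}$; using $\pi(S)/n\leq 1$ to replace the denominator by $2+\epsilon_1 r$ then yields a per-set tail of $\exp\bigl(-\frac{\epsilon_1^2 r^2 l}{2+\epsilon_1 r}\bigr)$ that no longer depends on $S$.

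Finally I would union-bound this identical tail estimate over all $2^n$ subsets of $V$, producing a total failure probability of at most $2^n\exp\bigl(-\frac{\epsilon_1^2 r^2 l}{2+\epsilon_1 r}\bigr)$. Requiring this to be at most $1/N$ and taking logarithms gives $\frac{\epsilon_1^2 r^2 l}{2+\epsilon_1 r}\geq \ln N + n\ln 2$, i.e. exactly $l\geq \frac{(\ln N+n\ln 2)(2+\epsilon_1 r)}{\epsilon_1^2 r^2}=\delta_1$, which closes the argument.

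The main obstacle, and the reason the threshold carries the $n\ln 2$ factor rather than an $O(\ln n)$ one, is that the bound must hold \emph{uniformly over every subset} $S\subseteq V$, not merely over a polynomial family of pre-specified sets. This is forced by the downstream use of the lemma: the set returned by Buchbinder's algorithm on $F(\mathcal{R}_l,\cdot)$ is data-dependent and cannot be fixed before the RA sets are drawn, so the union bound cannot be localized to a small collection. The only delicate quantitative point is that the per-set Chernoff exponent must not degrade as $\pi(S)$ grows; this is exactly what the monotonicity estimate $2\pi(S)/n+\epsilon_1 r\leq 2+\epsilon_1 r$ secures, giving a uniform lower bound on the exponent so that the single factor $2^n$ in the union bound is enough.
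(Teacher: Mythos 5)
Your proposal is correct and follows essentially the same route as the paper's proof in Appendix B.3: an upper-tail Chernoff bound on $\sum_i x(R_i,S)$ with the relative deviation chosen so that the absolute deviation equals $\epsilon_1 f(V_{opt})l/(Pn)$, the simplifications $f(V_{opt})\geq (P-C)n$ and $\pi(S)\leq n$, and a union bound over all $2^n$ subsets yielding the threshold $\delta_1$. The only cosmetic difference is that you substitute the lower bound on $f(V_{opt})$ before invoking Chernoff rather than after, which does not change the argument.
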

\begin{proof}
See Appendix \ref{sec:proof_lemma:epsilon_1}.
\end{proof}

\begin{lemma}
\label{lemma:epsilon_2}
Given a set $\mathcal{R}_l$ of $l$ RA sets generated by Alg. \ref{alg:ra_set}, with probability at most $1/N$
\begin{equation*}
F(\mathcal{R}_l,V_{opt})-f(V_{opt}) < -\epsilon_2 \
\cdot f(V_{opt})
\end{equation*}
if $l \geq \delta_2$.
\end{lemma}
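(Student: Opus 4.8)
The plan is to recognize that, for the \emph{fixed} set $V_{opt}$, the quantity $F(\mathcal{R}_l,V_{opt})$ is a scaled sum of $l$ independent, identically distributed Bernoulli random variables, and then to apply a Chernoff lower-tail bound. Writing $X_i=x(R_i,V_{opt})$, Lemma \ref{lemma:ra_mean} gives $\mathbb{E}[X_i]=\pi(V_{opt})/n$, so that $\mathbb{E}[F(\mathcal{R}_l,V_{opt})]=f(V_{opt})$; the estimator is unbiased, and the lemma is simply a concentration statement about its lower tail. Note that, unlike Lemma \ref{lemma:epsilon_1}, here there is no union bound over all subsets, which is exactly why $\delta_2$ does not carry the $n\ln 2$ factor appearing in $\delta_1$.

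First I would translate the event $F(\mathcal{R}_l,V_{opt})-f(V_{opt})<-\epsilon_2\cdot f(V_{opt})$ into a statement about the sum $\sum_{i=1}^{l}X_i$. Since $F(\mathcal{R}_l,V_{opt})-f(V_{opt})=P n\big(\tfrac{1}{l}\sum_{i}X_i-\tfrac{\pi(V_{opt})}{n}\big)$, the event is equivalent to $\sum_{i}X_i<(1-\gamma)\,l\mu$, where $\mu=\pi(V_{opt})/n$ and $\gamma=\epsilon_2 f(V_{opt})/\big(P\,\pi(V_{opt})\big)$. Because $f(V_{opt})\le P\,\pi(V_{opt})$ and $\epsilon_2<2\epsilon<1$, we have $\gamma\in(0,1)$, so the multiplicative lower-tail Chernoff bound (Eq. (\ref{eq:chernoff_2})) applies and yields $\Pr[\sum_{i}X_i<(1-\gamma)l\mu]\le\exp(-\gamma^2 l\mu/2)$.

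The remaining, and only slightly delicate, step is to lower bound the exponent $\gamma^2\mu$ by a quantity depending on $r$ alone. Expanding gives
\begin{equation*}
\gamma^2\mu=\frac{\epsilon_2^2\,f(V_{opt})^2}{P^2\,\pi(V_{opt})\,n}.
\end{equation*}
Here I would invoke the two structural bounds already available: $\pi(V_{opt})\le n$ (there can be at most $n$ adopters), to replace $\pi(V_{opt})$ in the denominator by $n$, and the lower bound $f(V_{opt})\ge f(V)=(P-C)n$ from Eq. (\ref{eq:lower_bound}), to control the numerator. Together these give $\gamma^2\mu\ge (P-C)^2\epsilon_2^2/P^2=\epsilon_2^2 r^2$, using $r=(P-C)/P$, so the failure probability is at most $\exp(-l\,\epsilon_2^2 r^2/2)$.

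Finally I would solve $\exp(-l\,\epsilon_2^2 r^2/2)\le 1/N$ for $l$, which rearranges exactly to $l\ge 2\ln N/(\epsilon_2^2 r^2)=\delta_2$, completing the argument. I expect the chain of inequalities reducing $\gamma^2\mu$ down to $\epsilon_2^2 r^2$ to be the main (though elementary) obstacle, since it is the one place where the problem-specific facts $\pi(V_{opt})\le n$ and $f(V_{opt})\ge (P-C)n$ must be used in the correct direction; everything else is a direct application of the Chernoff inequality.
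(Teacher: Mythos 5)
Your proposal is correct and follows essentially the same route as the paper's proof: rewrite the event as a lower-tail deviation of $\sum_i x(R_i,V_{opt})$ from its mean $l\cdot\pi(V_{opt})/n$, apply Eq.~(\ref{eq:chernoff_2}), and then bound the exponent using $\pi(V_{opt})\leq n$ together with $f(V_{opt})\geq f(V)=(P-C)n$ to arrive at $\exp(-l\epsilon_2^2 r^2/2)\leq 1/N$ for $l\geq\delta_2$. Your added observation that no union bound over subsets is needed here (hence no $n\ln 2$ term in $\delta_2$) is consistent with the paper's treatment.
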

\begin{proof}
See Appendix \ref{sec:proof_lemma:epsilon_2}.
\end{proof}

For the RA-T algorithm, intuitively we should assure that $F(\mathcal{R}_l,V_{opt})$ and $F(\mathcal{R}_l,V^{*})$ are close to $f(V_{opt})$ and $f(V^{*})$, respectively, where $V^{*}$ is obtained by solving Problem \ref{problem:cost_set_cover}. As shown in Lemma \ref{lemma:epsilon_2}, $F(\mathcal{R}_l,V_{opt})$ is sufficient accurate if $l \geq \delta_2$. Because $V^{*}$ is unknown in advance, $l$ has to be large enough such that the difference between $F(\mathcal{R}_l,S)$ and $f(S)$ can be bounded for every $S$, and therefore $\delta_1$ is larger than $\delta_2$ by a factor of $n$. When $F(\mathcal{R}_l,V_{opt})$ and $F(\mathcal{R}_l,V^{*})$ are both accurate, the RA-T algorithm is guaranteed to be effective, as shown in the next lemma.

\begin{lemma}
\label{lemma:ratio}
Let $S^{*}$ be the set returned by Alg. \ref{alg:1/2} with input ($F(\mathcal{R}_l,S),V$). For any $\epsilon \in (0,\frac{1}{2})$ and $N \geq 0$,
\begin{equation*}
f(V^{*}) \geq (\frac{1}{2}-\epsilon)f(V_{opt})
\end{equation*}
holds with probability at least $1-\frac{2}{N}$ provided that
\begin{equation*}
l = \max(\delta_1,\delta_2) 
\end{equation*}
\end{lemma}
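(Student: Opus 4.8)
The plan is to combine the two one-sided concentration bounds of Lemmas \ref{lemma:epsilon_1} and \ref{lemma:epsilon_2} with the $\frac{1}{2}$-approximation guarantee of Buchbinder's algorithm (Lemma \ref{lemma:1/2}), and then close the gap using the calibration $\epsilon_1+\frac{1}{2}\epsilon_2=\epsilon$ from Eq. (\ref{eq:ep1+ep2}). First I would define two ``good'' events. Let $A$ be the event that $F(\mathcal{R}_l,S)-f(S) \leq \epsilon_1\cdot f(V_{opt})$ holds simultaneously for \emph{every} $S \subseteq V$, and let $B$ be the event that $F(\mathcal{R}_l,V_{opt}) \geq (1-\epsilon_2)\cdot f(V_{opt})$. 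Since $l=\max(\delta_1,\delta_2)\geq \delta_1$, Lemma \ref{lemma:epsilon_1} gives $\Pr[A]\geq 1-1/N$; since $l\geq \delta_2$, Lemma \ref{lemma:epsilon_2} gives $\Pr[B]\geq 1-1/N$. A union bound then yields $\Pr[A\cap B]\geq 1-2/N$, which is exactly the claimed success probability.

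Next I would show that $f(V^{*})\geq(\frac{1}{2}-\epsilon)f(V_{opt})$ holds deterministically on the event $A\cap B$, where $V^{*}$ is the output of Buchbinder's algorithm. The argument is a three-link chain. Because $F(\mathcal{R}_l,\cdot)$ is submodular with $F(\mathcal{R}_l,\emptyset)+F(\mathcal{R}_l,V)=(P-C)n\geq 0$, Lemma \ref{lemma:1/2} guarantees $F(\mathcal{R}_l,V^{*})\geq \frac{1}{2}F(\mathcal{R}_l,V_{opt})$, using that $V_{opt}$ is a feasible and hence suboptimal set for Problem \ref{problem:cost_set_cover}. Event $B$ then replaces $F(\mathcal{R}_l,V_{opt})$ by $(1-\epsilon_2)f(V_{opt})$, so that $F(\mathcal{R}_l,V^{*})\geq \frac{1}{2}(1-\epsilon_2)f(V_{opt})$. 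Finally, event $A$ applied at $S=V^{*}$ gives $f(V^{*})\geq F(\mathcal{R}_l,V^{*})-\epsilon_1 f(V_{opt})$, and substituting the previous bound yields
\begin{equation*}
f(V^{*}) \geq \left(\frac{1}{2}-\frac{1}{2}\epsilon_2-\epsilon_1\right) f(V_{opt}) = \left(\frac{1}{2}-\epsilon\right) f(V_{opt}),
\end{equation*}
where the last equality is precisely Eq. (\ref{eq:ep1+ep2}). Throughout I would use $f(V_{opt})\geq (P-C)n>0$, so that multiplying the normalized inequalities by $f(V_{opt})$ preserves their direction.

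The delicate point, and the reason the two lemmas are stated with opposite-facing inequalities, is to keep the estimation errors pointing the correct way for an \emph{unknown} output $V^{*}$. Since the algorithm's output is random and not fixed in advance, the over-estimation control $F-f\leq \epsilon_1 f(V_{opt})$ must hold uniformly over all candidate sets; this uniformity is exactly what costs the extra factor of $n$ in $\delta_1$ relative to $\delta_2$, and without it one could not convert a large $F(\mathcal{R}_l,V^{*})$ into a large $f(V^{*})$. Conversely, at the single fixed set $V_{opt}$ only the one-sided under-estimation control is needed, so the pointwise bound with the smaller threshold $\delta_2$ suffices to anchor the $\frac{1}{2}$-approximation to the true optimum $f(V_{opt})$. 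Once these two directions are aligned and the error budgets $\epsilon_1,\epsilon_2$ are reconciled with the split in Eq. (\ref{eq:ep1+ep2}), the remainder is the routine arithmetic displayed above, so I anticipate no genuine obstacle beyond this bookkeeping.
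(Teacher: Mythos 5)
Your proof is correct and follows essentially the same route as the paper's: apply Lemma \ref{lemma:epsilon_1} uniformly to bound $f(V^{*}) \geq F(\mathcal{R}_l,V^{*})-\epsilon_1 f(V_{opt})$, use Lemma \ref{lemma:1/2} to get $F(\mathcal{R}_l,V^{*})\geq \frac{1}{2}F(\mathcal{R}_l,V_{opt})$, anchor with Lemma \ref{lemma:epsilon_2} at $V_{opt}$, and close with Eq. (\ref{eq:ep1+ep2}) and a union bound. Your writeup is in fact somewhat more explicit than the paper's about why the uniform (over all $S$) bound is needed for the random output $V^{*}$ while a pointwise bound suffices at $V_{opt}$, but this is an elaboration of the same argument, not a different one.
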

\begin{proof}
By Lemma \ref{lemma:epsilon_1}, $f(V^{*}) \geq F(\mathcal{R}_l,V^{*})-\epsilon_1f(V_{opt})$. By Lemma \ref{lemma:1/2}, $F(\mathcal{R}_l,V^{*}) \geq \frac{1}{2}F(\mathcal{R}_l,V_{opt})$ and therefore, $f(V^{*}) \geq \frac{1}{2}F(\mathcal{R}_l,V_{opt})-\epsilon_1f(V_{opt})$. Finally, combining Lemma \ref{lemma:epsilon_2} and Eq. (\ref{eq:ep1+ep2}), $f(V^{*}) \geq (\frac{1}{2}-\epsilon)f(V_{opt})$. According to the union bound, the above inequality holds with probability at least $1-2/N$.
\end{proof}

\textbf{Running time.} Now let us consider the running time of Alg. \ref{alg:reverse_1}. Given $\mathcal{R}_l$, calculating $F(\mathcal{R}_l,S)$ requires $O(ln)$ time. Therefore, line 6 takes $O(ln^2)$ time. Because generating one RA set takes $O(m)$, the running time of Alg. \ref{alg:reverse_1} is $O(lm+ l n^2)$.
Since $l=\max(\delta_1,\delta_2)$, $l=O(n)$. The analysis in this section is summarized as follows.
\begin{theorem}
\label{theorem:alg_2}
Given the adjustable parameters $\epsilon$ and $N$, with probability at least $1-2/N$, Alg. \ref{alg:reverse_1} returns a ($\frac{1}{2}-\epsilon$)-approximation running in $O(n^3)$ with respect to $n$.
\end{theorem}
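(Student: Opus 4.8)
The plan is to obtain both halves of the statement by assembling results already in place: the approximation guarantee follows directly from Lemma \ref{lemma:ratio}, while the complexity bound follows from the per-operation costs derived in the running-time paragraph preceding the theorem. The only genuinely new work is to specialize the parameterized time bound $O(lm + ln^2)$ to a bound expressed purely in $n$.

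For the correctness half, I would note that line 4 of Alg. \ref{alg:reverse_1} sets $l = \max(\delta_1, \delta_2)$ and line 6 feeds $F(\mathcal{R}_l, \cdot)$ into Buchbinder's algorithm, so the returned $V^*$ is exactly the set to which Lemma \ref{lemma:ratio} applies. Invoking that lemma immediately yields $f(V^*) \geq (\frac{1}{2} - \epsilon) f(V_{opt})$ with probability at least $1 - 2/N$. No fresh concentration argument is required, since the two failure events governed by Lemmas \ref{lemma:epsilon_1} and \ref{lemma:epsilon_2} were already merged by a union bound inside the proof of Lemma \ref{lemma:ratio}.

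For the running time, the pivotal step is to verify $l = O(n)$. Reading off Eq. (\ref{eq:delta_1}), the sole $n$-dependence of $\delta_1$ is the factor $\ln N + n \ln 2$, so $\delta_1 = O(n)$ once $\epsilon_1, \epsilon_2, r, N$ are held constant; and $\delta_2$ in Eq. (\ref{eq:delta_2}) is independent of $n$, whence $l = \max(\delta_1, \delta_2) = O(n)$. Substituting into the established cost $O(lm + ln^2)$ gives $O(nm + n^3)$, and bounding $m = O(n^2)$ for a simple directed graph absorbs the term $nm$ into $n^3$, leaving the claimed $O(n^3)$.

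The main obstacle here is not a calculation but a matter of convention: $\delta_1$ carries a factor $1/(\epsilon_1^2 r^2)$ together with a $\ln N$ term, so $l = O(n)$ is valid only after the accuracy parameters $\epsilon, \epsilon_1, \epsilon_2$, the discount ratio $r$, and the confidence parameter $N$ are all treated as fixed constants. The qualifier ``with respect to $n$'' in the statement signals precisely this reading, so the real care required is to make the constant-parameter assumption explicit and to justify the suppression of $m$ through $m = O(n^2)$.
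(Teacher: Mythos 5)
Your proposal is correct and matches the paper's own treatment: the paper likewise obtains the approximation guarantee by directly invoking Lemma \ref{lemma:ratio} and obtains the time bound from the $O(lm+ln^2)$ cost together with $l=\max(\delta_1,\delta_2)=O(n)$, treating $\epsilon$, $r$, and $N$ as fixed. Your explicit remark that $m=O(n^2)$ absorbs the $nm$ term is a small clarification the paper leaves implicit, but it is the same argument.
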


The success probability can be increased to $1-1/N$ by scaling $N$, which yields a $(\frac{1}{2}-\epsilon, 1-\frac{1}{N})$-approximation to the PM problem. Note that the running time of Alg. \ref{alg:reverse_1} can be further reduced. Given $n, r, N$ and $\epsilon$, the best choice of $\epsilon_1$ and $\epsilon_2$ is the one that can minimize $\max(\delta_1, \delta_2)$. That is,
\begin{equation}
\label{eq:optimal}
(\epsilon_1,\epsilon_2)=\argmin_{\epsilon_1+\frac{1}{2}\epsilon_2=\epsilon}\max(\delta_1, \delta_2).
\end{equation}
In practice, there is no need to solve Eq. (\ref{eq:optimal}) directly, and, instead one can enumerate $\epsilon_1$ from $0$ to $1/2$ with a small search step, namely $0.01$. We will adopt such a setting in experiments. 

\textbf{Generating RA set.} It is worthy to note that, according to Def. \ref{def:raset}, there is no need to sample the whole realization to obtain an RA set. Instead, one can obtain the adopted-reachable nodes along with the sampling of realization from the node selected in line 2 of Alg. \ref{alg:ra_set} until no nodes can be further reached. Generating a RA set in this way can reduce the running time in practice but does not affect the worst-case running time of RA-T.

\subsection{RA-S Algorithm}
\label{subsec:RA-S}

In this section, we present another algorithm. Recall that in the RA-T algorithm, the number of generated RA sets should be large enough such that $|F(\mathcal{R}_l,V^{*})-f(V^{*})|$ and $|F(\mathcal{R}_l,V_{opt})-f(V_{opt})|$ are sufficiently small where $V^{*}$ is obtained by solving Problem \ref{problem:cost_set_cover}. In this section, we give another algorithm which generates enough RA sets such that $|F(\mathcal{R}_l,V_{opt})-f(V_{opt})|$ is small but controls the error $|F(\mathcal{R}_l,V^{*})-f(V^{*})|$ by Monte Carlo simulations. We call this algorithm as the RA-S algorithm where RA and S stand for \textit{RA set} and \textit{simulation}, respectively.  

\begin{algorithm}[t]
\caption{RA-S Algorithm}\label{alg:reverse_2}
\begin{algorithmic}[1]
\State \textbf{Input:} $\Upsilon, N, \epsilon, k$ and $\epsilon_3$; 
\State Obtain $(\delta_1^*,\delta_2^*,\epsilon_1,\epsilon_2)$ by solving Eq System \ref{eqs:1}.
\State $l=\delta_2^*$, $l^*=\delta_3$ and $\mathcal{R}_l^*=\emptyset$; 
\While{$l \leq 2\cdot \delta_1^*$}
\State Generate $l$ RA sets by Alg. \ref{alg:ra_set} and add them into $\mathcal{R}_l^*$;
\If {$l \geq \delta_1^*$}
\State $V^* \leftarrow$ Buchbinder's Algorithm ($F(\mathcal{R}_l^*,S),V$);
\State Return $V^*$;
\Else
\State $V^* \leftarrow$ Buchbinder's Algorithm ($F(\mathcal{R}_l^*,S),V$);
\State Obtain $\widetilde{f}_{l^*}(V^*)$ by $l^*$ simulations;
\If {$F(\mathcal{R}_l^*,V^*) \leq (1+\epsilon_3)\cdot \widetilde{f}_{l^*}(V^*)$}
\State Return $V^*$;
\EndIf
\State $l=2*l$;
\EndIf
\EndWhile
\end{algorithmic}
\end{algorithm}

The RA-S algorithm is formally shown in Alg. \ref{alg:reverse_2}. In this algorithm, the parameters are $\epsilon_1, \epsilon_2, \epsilon_3, \delta_1^*, \delta_2^*$ and $\delta_3$, where $\epsilon_3$ and $k$ are adjustable and other parameters are determined accordingly. In particular, $\epsilon_1, \epsilon_2, \delta_1^*$ and $\delta_2^*$ are obtained by solving the following equations,
\begin{eqs}
\label{eqs:1}
\begin{align}
&\frac{1-\epsilon_2}{2(1+\epsilon_3)}-\epsilon_1=\frac{1}{2}-\epsilon \nonumber\\
&\delta_1^*=\dfrac{(\ln N+n\ln 2)(6+2\epsilon_1r)}{3\epsilon_1^2r^2} \nonumber\\
&\delta_2^*=\dfrac{2\ln N}{\epsilon_2^2r^2} \nonumber\\
&\delta_1^* = 2^k\cdot \delta_2^* \nonumber\\
&\epsilon_1, \epsilon_2 >0 \nonumber
\end{align}
\end{eqs}

and 
\begin{equation}
\label{eq:delta_3}
\delta_3=\dfrac{(2+\epsilon_1r)\ln N}{\epsilon_1^2r^2},
\end{equation}
In the above equation system, $\epsilon$, $N$, $n$, $r$, $k$ and $\epsilon_3$ are given in advance. Because $\delta_1^*$ (resp. $\delta_2^*$) approaches infinity when $\epsilon_1$ (resp. $\epsilon_2$) approaches zero, there must be at least one set of variables that solves the equations. As shown in Alg. \ref{alg:reverse_2}, with such parameters, we generate a family of collections of RA sets with sizes $\delta_2^*$, $2\cdot \delta_2^*$, ..., $\delta_1^* = 2^k\cdot \delta_2^*$ by a while loop from line 4 to 14. Within each iteration, we first generate a bunch of RA sets and then get a set $V^{*}$ by solving Problem \ref{problem:cost_set_cover}. Next, we obtain an estimate $\widetilde{f}_{l^*}(V^*)$ of $f(V^*)$ by $l^*$ simulations, and check if $F(\mathcal{R}_l,V^{*})$ is smaller than $\widetilde{f}_{l^*}(V^*)=\delta_3$ by a factor of $(1+\epsilon_3)$.

\subsubsection{Performance Guarantee.}As shown in Alg. \ref{alg:reverse_2}, the RA-S algorithm terminates either at line 8 or 13. 

\textbf{Case a.} Suppose that it terminates at line 13 and let $\mathcal{R}_l^*=\{R_1,...,R_l\}$ be the collection of RA sets used in the last iteration. According to line 5, a half of the RA sets in $\mathcal{R}_l^*$ are freshly generated in the last iteration and whether they can be generated depends on the other half of the RA sets according to the condition in line 12, which means the random variables $x(R,V_{opt})$ for $R \in \mathcal{R}_l^*$ are not independent. Therefore, the Chernoff bounds are not applicable. However, such dependency between the RA sets does not severely hurt the convergence of the sample mean and it turns out that the random variables $x(R_1,V_{opt}),...,x(R_l,V_{opt})$ in fact form a martingale \cite{tang2015influence}. Therefore, we can alternatively use concentration inequalities for martingales as shown in Appendix \ref{appendix:1}. By using martingale, the accuracy of  $F(\mathcal{R}_l^*,V_{opt})$ can be ensured as shown in the following lemma. 
\begin{lemma}
\label{lemma:ra_s_epsilon_2}
If $l \geq \delta_2^*$
\begin{equation}
\label{eq:new_opt}
F(\mathcal{R}_l^*,V_{opt})-f(V_{opt}) \geq -\epsilon_2 \cdot f(V_{opt})
\end{equation}
holds with probability at least $1-1/N$. 
\end{lemma}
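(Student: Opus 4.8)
The plan is to establish a concentration bound for the sample mean $\sum_{i=1}^{l} x(R_i, V_{opt})/l$ around its expectation $\pi(V_{opt})/n$ (given by Lemma \ref{lemma:ra_mean}), and then translate this into the stated bound on $F(\mathcal{R}_l^*, V_{opt}) - f(V_{opt})$. The central obstacle, as the surrounding text flags, is that in Case a the RA sets in $\mathcal{R}_l^*$ are \emph{not} independent: roughly half of them are generated in the last iteration, and whether that generation happens at all is conditioned on the stopping test in line 12, which depends on the earlier half. Consequently the plain Chernoff bound is unavailable. The remedy is to recognize the sequence $x(R_1, V_{opt}), \ldots, x(R_l, V_{opt})$ as a martingale (more precisely, that the partial sums minus their conditional means form a martingale difference sequence), following the approach of Tang \emph{et al.} \cite{tang2015influence}, and to invoke the martingale concentration inequality provided in Appendix \ref{appendix:1}.

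First I would fix $V_{opt}$ and abbreviate $\mu = \pi(V_{opt})/n = \mathbb{E}[x(R, V_{opt})]$ by Lemma \ref{lemma:ra_mean}. I would verify the martingale structure: each $x(R_i, V_{opt}) \in \{0,1\}$, and although the \emph{indices} at which fresh RA sets are added are determined adaptively by the algorithm, each individual $x(R_i, V_{opt})$ is still a Bernoulli variable with conditional mean $\mu$ given the history, because an RA set is always drawn from the same distribution by Alg. \ref{alg:ra_set} regardless of the past. Hence $Y_j = \sum_{i=1}^{j}(x(R_i, V_{opt}) - \mu)$ is a martingale with bounded increments, and the martingale deviation inequality applies just as the ordinary Chernoff bound would for independent variables.

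Next I would apply the lower-tail martingale bound to control the event $\sum_{i=1}^{l} x(R_i, V_{opt})/l < (1-\epsilon_2)\mu$. The deviation we must rule out is a relative shortfall of $\epsilon_2$ in the estimate of $\mu$; choosing the threshold $l \geq \delta_2^* = 2\ln N /(\epsilon_2^2 r^2)$ from the equation system makes the tail probability at most $1/N$. Here the role of $r = (P-C)/P$ is to convert the relative error on $\pi(V_{opt})$ into the relative error on the profit $f(V_{opt})$: since $f(V_{opt}) \geq f(V) = (P-C)\cdot n$ by Eq. (\ref{eq:lower_bound}) and $P\cdot \pi(V_{opt}) \leq P \cdot n$, the factor $(P-C)/P = r$ lower-bounds the ratio $f(V_{opt})/(P\cdot \pi(V_{opt}))$, so an $\epsilon_2 \mu$ deviation in the mean translates to at most an $\epsilon_2 f(V_{opt})$ deviation after multiplying by $Pn$ and subtracting the (nonrandom) coupon cost $C\cdot |V_{opt}|$. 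The coupon term cancels exactly in the difference $F(\mathcal{R}_l^*, V_{opt}) - f(V_{opt}) = P\cdot n\cdot\big(\sum_i x(R_i,V_{opt})/l - \mu\big)$, which is the clean identity that drives the whole estimate.

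Putting these together, on the complementary event of probability at least $1-1/N$ we have $\sum_i x(R_i, V_{opt})/l \geq (1-\epsilon_2)\mu$, hence $F(\mathcal{R}_l^*, V_{opt}) - f(V_{opt}) = P\cdot n\cdot\big(\sum_i x(R_i,V_{opt})/l - \mu\big) \geq -\epsilon_2\, P\, n\, \mu = -\epsilon_2\, P\,\pi(V_{opt}) \geq -\epsilon_2\, f(V_{opt})$, which is exactly Eq. (\ref{eq:new_opt}). I expect the genuinely delicate step to be the justification of the martingale property under the adaptive stopping rule of line 12 — one must argue carefully that conditioning on the algorithm's decision to continue does not bias the conditional mean of a newly drawn RA set away from $\mu$, so that the martingale difference sequence is legitimately centered; this is the same subtlety resolved in \cite{tang2015influence} and is the crux that distinguishes Case a from the independent setting handled in Lemma \ref{lemma:epsilon_2}. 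The remaining calculations are a routine substitution of $\delta_2^*$ into the martingale tail bound and the arithmetic linking relative and absolute error through $r$.
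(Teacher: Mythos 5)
Your overall route --- recognize that the lower-tail martingale inequality (Eq.~(\ref{eq:maringale_2})) plays the role of the Chernoff lower tail for the adaptively generated $\mathcal{R}_l^*$, use the identity $F(\mathcal{R}_l^*,V_{opt})-f(V_{opt})=P\, n\bigl(\sum_i x(R_i,V_{opt})/l-\mu\bigr)$ with $\mu=\pi(V_{opt})/n$ from Lemma~\ref{lemma:ra_mean}, and substitute $l\geq\delta_2^*$ --- is exactly the paper's; the paper simply notes that the argument of Lemma~\ref{lemma:epsilon_2} carries over verbatim because $\delta_2^*=\delta_2$ and Eq.~(\ref{eq:maringale_2}) coincides with Eq.~(\ref{eq:chernoff_2}). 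However, your final chain contains a genuine error of direction. You bound the relative shortfall of the sample mean by $\epsilon_2$, which yields $F(\mathcal{R}_l^*,V_{opt})-f(V_{opt})\geq-\epsilon_2 P\,\pi(V_{opt})$, and then claim $-\epsilon_2 P\,\pi(V_{opt})\geq-\epsilon_2 f(V_{opt})$. That last step requires $P\,\pi(V_{opt})\leq f(V_{opt})$, but $f(V_{opt})=P\,\pi(V_{opt})-C\,|V_{opt}|\leq P\,\pi(V_{opt})$, so the inequality runs the other way whenever $C\,|V_{opt}|>0$. What you have actually proved is the weaker bound $F-f\geq-\epsilon_2 P\,\pi(V_{opt})$, which can undershoot $-\epsilon_2 f(V_{opt})$ by up to a factor $P\,\pi(V_{opt})/f(V_{opt})$, i.e.\ as much as $1/r$.

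The fix is to take the relative deviation to be $\delta=\epsilon_2 f(V_{opt})/(P\,\pi(V_{opt}))\leq\epsilon_2$ from the outset, as the paper does: the bad event is then exactly $F(\mathcal{R}_l^*,V_{opt})-f(V_{opt})<-\epsilon_2 f(V_{opt})$, and Eq.~(\ref{eq:maringale_2}) bounds its probability by $\exp\bigl(-\tfrac{l}{2}\cdot\tfrac{\pi(V_{opt})}{n}\cdot\delta^2\bigr)=\exp\bigl(-\tfrac{l\,\epsilon_2^2 f(V_{opt})^2}{2nP^2\pi(V_{opt})}\bigr)\leq\exp\bigl(-\tfrac{l\,\epsilon_2^2 r^2}{2}\bigr)\leq\tfrac{1}{N}$, using $f(V_{opt})\geq(P-C)n$, $\pi(V_{opt})\leq n$, and $l\geq\delta_2^*$. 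The rest of your write-up --- the martingale justification under the adaptive stopping rule and the cancellation of the $C\,|V_{opt}|$ term in the difference --- is sound and matches the paper.
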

\begin{proof}
See Appendix \ref{sec:proof_lemma:ra_s_epsilon_2}.
\end{proof}
Furthermore, by setting $l^*$ as $\delta_3$, $\widetilde{f}_{l^*}(V^*)$ is sufficiently accurate.

\begin{lemma}
\label{lemma:ra_s_epsilon_1}
Let $V^{*}$ be the set produced in line 11 and $\widetilde{f}_{l^*}(V^*)$ be the estimate of $f(V^*)$ obtained by $l^*$ simulations where $l^*=\delta_3$. With probability at least $1-1/N$,
\begin{equation}
\label{eq:f_new}
\widetilde{f}_{l^*}(V^*)-f(V^*) \leq \epsilon_1 \cdot f(V_{opt})
\end{equation}
\end{lemma}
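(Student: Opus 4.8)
The plan is to show that $\widetilde{f}_{l^*}(V^*)$ concentrates around $f(V^*)$ by a direct application of the Chernoff bound, in the same spirit as Lemma \ref{lemma:1_1}, but now measuring the error relative to $f(V_{opt})$ rather than $f(V)$. First I would recall that $\widetilde{f}_{l^*}(V^*)=P\cdot\widetilde{\pi}_{l^*}(V^*)-C\cdot|V^*|$ is obtained from $l^*$ independent Monte Carlo simulations of the diffusion process under $V^*$, so that $l^*\cdot\widetilde{\pi}_{l^*}(V^*)/n$ is a sum of $l^*$ independent bounded random variables with mean $\pi(V^*)/n$ per trial. The event $\widetilde{f}_{l^*}(V^*)-f(V^*)>\epsilon_1\cdot f(V_{opt})$ can be rewritten, after subtracting the common deterministic term $-C\cdot|V^*|$ and dividing through by $P$, as a one-sided deviation of the sample mean $\widetilde{\pi}_{l^*}(V^*)$ above its true mean $\pi(V^*)$ by the additive amount $\epsilon_1\cdot f(V_{opt})/P$.

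The key technical step is to turn this additive deviation into the multiplicative form required by the Chernoff bound. I would write the target deviation as $\pi(V^*)$ scaled by a relative factor $\lambda=\epsilon_1\cdot f(V_{opt})/(P\cdot\pi(V^*))$ and then lower-bound $f(V_{opt})$ using the available lower bound $f(V)=(P-C)\cdot n$ from Eq. (\ref{eq:lower_bound}), and upper-bound $\pi(V^*)\leq n$. These two substitutions, combined with $r=(P-C)/P$, will let me bound $\lambda$ from below by a quantity proportional to $\epsilon_1 r$, which after feeding into the Chernoff exponent $\exp\!\big(-l^*\cdot(\pi(V^*)/n)\cdot\lambda^2/(2+\lambda)\big)$ and invoking the choice $l^*=\delta_3=(2+\epsilon_1 r)\ln N/(\epsilon_1^2 r^2)$ from Eq. (\ref{eq:delta_3}) will drive the failure probability down to at most $1/N$. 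Since we only need the one-sided inequality $\widetilde{f}_{l^*}(V^*)-f(V^*)\leq\epsilon_1\cdot f(V_{opt})$, the one-sided Chernoff bound of Eq. (\ref{eq:chernoff_1}) suffices, and the constant in $\delta_3$ (namely $2+\epsilon_1 r$ rather than the larger constant in $\delta_0$) is exactly what the one-sided tail affords.

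The main obstacle I anticipate is handling the fact that $V^*$ is itself a random set produced by Buchbinder's algorithm in line 10 on the randomly generated collection $\mathcal{R}_l^*$, so the simulations in line 11 must be analyzed conditionally on $V^*$. The clean way around this is to condition on $V^*$: for any fixed realization of $V^*$ the simulations are independent of the reverse-sampling randomness, so the Chernoff bound applies to the conditional distribution, and the uniform bound $\pi(V^*)\leq n$ together with $f(V_{opt})\geq f(V)=(P-C)n$ makes the resulting tail estimate independent of which particular set $V^*$ turned out to be. Consequently the conditional failure probability is at most $1/N$ for every possible value of $V^*$, and integrating over the distribution of $V^*$ preserves the bound, giving the stated probability at least $1-1/N$. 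I would therefore defer the full calculation to the appendix, mirroring the structure of the proof of Lemma \ref{lemma:1_1}, and note only the two places where the analysis specializes: the one-sided tail and the substitution of $f(V_{opt})\geq(P-C)n$.
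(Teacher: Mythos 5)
Your proposal is correct and follows essentially the same route as the paper's proof: rewrite the event as a multiplicative deviation of the sample mean $\widetilde{\pi}_{l^*}(V^*)$, substitute $f(V_{opt})\geq (P-C)n$ and $\pi(V^*)\leq n$, and apply the one-sided Chernoff bound of Eq.~(\ref{eq:chernoff_1}) with $l^*=\delta_3$ to obtain the $1/N$ failure probability. Your explicit conditioning on the randomness of $V^*$ is a welcome refinement that the paper's proof leaves implicit, but it does not change the argument.
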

\begin{proof}
See Appendix \ref{sec:proof_lemma:ra_s_epsilon_1}.
\end{proof}

Now we are ready to show the approximation ratio. Suppose that Eqs. (\ref{eq:new_opt}) and (\ref{eq:f_new}) hold simultaneously. By Eq. (\ref{eq:f_new}), $(1+\epsilon_3)\big(f(V^*)+\epsilon_1 \cdot f(V_{opt})\big) 
\geq (1+\epsilon_3)\widetilde{f}_{l^*}(V^*)$. According to line 12 in Alg. \ref{alg:reverse_2}, $(1+\epsilon_3)\widetilde{f}_{l^*}(V^*) \geq F(\mathcal{R}_l^*,V^{*})$ and, by Lemma \ref{lemma:1/2}, $F(\mathcal{R}_l^*,V^{*}) \geq \frac{1}{2}F(\mathcal{R}_l^*,V_{opt})$. Therefore, $(1+\epsilon_3)\big(f(V^*)+\epsilon_1 \cdot f(V_{opt})\big) \geq \frac{1}{2}F(\mathcal{R}_l^*,V_{opt}) \geq \frac{1}{2}(1-\epsilon_2)f(V_{opt})$, where the section inequality follows from Eq. (\ref{eq:new_opt}). 
Rearranging the above inequality yields
\begin{equation*}
f(V^*) \geq (\frac{1-\epsilon_2}{2(1+\epsilon_3)}-\epsilon_1)f(V_{opt}).
\end{equation*}
Since, the parameters satisfy Eq System \ref{eqs:1}, we have
\begin{equation*}
f(V^*) \geq (\frac{1}{2}-\epsilon)f(V_{opt}).
\end{equation*}

\textbf{Case b.} If Alg. \ref{alg:reverse_2} terminates at line 8, then $l \geq \delta_1^{*}$ and it is sufficiently large to guarantee a $(\frac{1}{2}-\epsilon,1-\frac{2}{N})$-approximation, as shown in the following lemma.
\begin{lemma}
\label{lemma:ra_s_case_b}
If Alg. \ref{alg:reverse_2} terminates at line 6 returning a seed set $V^{*}$, then 
\begin{equation*}
f(V^{*}) \geq (\frac{1}{2}-\epsilon)f(V_{opt})
\end{equation*}
holds with probability at least $1-2/N$.
\end{lemma}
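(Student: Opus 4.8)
The plan is to adapt the ratio argument of Lemma~\ref{lemma:ratio} (the guarantee for RA-T) to the adaptively generated collection $\mathcal{R}_l^*$, replacing the Chernoff estimates by their martingale counterparts. When the algorithm exits through the $l \geq \delta_1^*$ branch (lines 6--8), the returned $V^*$ is the output of Buchbinder's algorithm applied to $F(\mathcal{R}_l^*,\cdot)$ for a collection of size $l \geq \delta_1^*$. The essential complication is that these RA sets are \emph{not} independent: whether the fresh half of $\mathcal{R}_l^*$ is ever generated is governed by the test in line~12 applied to the earlier RA sets, so the summands $x(R_i,S)$ form a martingale rather than an i.i.d.\ sequence. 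Consequently every concentration step must go through the martingale tail inequality of Appendix~\ref{appendix:1}, exactly as in Lemma~\ref{lemma:ra_s_epsilon_2}.

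Concretely, I would establish two high-probability events. \emph{First}, a uniform upper bound: with probability at least $1-1/N$, $F(\mathcal{R}_l^*,S)-f(S) \leq \epsilon_1\cdot f(V_{opt})$ for \emph{every} $S\subseteq V$. This is the martingale analogue of Lemma~\ref{lemma:epsilon_1}; for a fixed $S$ the martingale bound gives a deviation probability of at most $2^{-n}/N$ once $l\geq\delta_1^*$, and the particular shape $\delta_1^*=(\ln N+n\ln 2)(6+2\epsilon_1 r)/(3\epsilon_1^2 r^2)$ is precisely what this bound together with a union bound over the $2^n$ subsets demands (the $n\ln 2$ term absorbing $\ln 2^n$, the remaining factor being the martingale constant). \emph{Second}, the lower bound $F(\mathcal{R}_l^*,V_{opt})-f(V_{opt})\geq -\epsilon_2\cdot f(V_{opt})$, which is exactly Lemma~\ref{lemma:ra_s_epsilon_2} and applies here because $l\geq\delta_1^*=2^k\delta_2^*\geq\delta_2^*$.

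Assuming both events hold, I would chain the inequalities as in Lemma~\ref{lemma:ratio}. Applying the uniform bound at $S=V^*$ gives $f(V^*)\geq F(\mathcal{R}_l^*,V^*)-\epsilon_1 f(V_{opt})$. Since $F(\mathcal{R}_l^*,\cdot)$ is submodular with $F(\mathcal{R}_l^*,\emptyset)+F(\mathcal{R}_l^*,V)=(P-C)n\geq 0$, Lemma~\ref{lemma:1/2} yields $F(\mathcal{R}_l^*,V^*)\geq\tfrac12 F(\mathcal{R}_l^*,V_{opt})$, and the lower bound gives $F(\mathcal{R}_l^*,V_{opt})\geq(1-\epsilon_2)f(V_{opt})$. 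Combining,
\begin{equation*}
f(V^*)\geq\Big(\frac{1-\epsilon_2}{2}-\epsilon_1\Big)f(V_{opt}).
\end{equation*}
Because $\epsilon_3>0$ and $f(V_{opt})\geq f(V)=(P-C)n>0$, we have $\frac{1-\epsilon_2}{2}\geq\frac{1-\epsilon_2}{2(1+\epsilon_3)}$, and the first equation of Eq.\ System~\ref{eqs:1} identifies $\frac{1-\epsilon_2}{2(1+\epsilon_3)}-\epsilon_1=\frac12-\epsilon$; hence $f(V^*)\geq(\frac12-\epsilon)f(V_{opt})$. A final union bound over the two failure events yields the claimed success probability $1-2/N$.

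The step I expect to be the main obstacle is the uniform upper bound (the first event): unlike the single-set statement of Lemma~\ref{lemma:ra_s_epsilon_2}, it must hold simultaneously over all $2^n$ candidate seed sets while the RA sets remain dependent, so one must verify both that the martingale structure for an arbitrary fixed $S$ (not only $V_{opt}$) survives the adaptive stopping rule, and that the martingale tail inequality still delivers the $2^{-n}/N$ per-set bound at $l=\delta_1^*$ so that the union bound over $2^n$ sets closes to $1/N$.
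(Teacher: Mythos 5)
Your proposal is correct and follows essentially the same route as the paper's proof: a uniform martingale-based bound over all $2^n$ subsets (the martingale analogue of Lemma~\ref{lemma:epsilon_1}, driven by the form of $\delta_1^*$), combined with Lemma~\ref{lemma:ra_s_epsilon_2} for $F(\mathcal{R}_l^*,V_{opt})$, Lemma~\ref{lemma:1/2}, and the observation that $\frac{1-\epsilon_2}{2}-\epsilon_1 \geq \frac{1-\epsilon_2}{2(1+\epsilon_3)}-\epsilon_1 = \frac{1}{2}-\epsilon$, finished by a union bound giving $1-2/N$. The martingale property for arbitrary fixed $S$ under the adaptive stopping rule, which you flag as the main obstacle, is exactly what the paper invokes from Appendix~\ref{appendix:1} (following \cite{tang2015influence}), so no gap remains.
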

\begin{proof}
The proof of this lemma is similar to that of the analysis of the RA-T algorithm, except that we use martingale inequalities instead of Chernoff bound. Intuitively, making $l\geq \delta_1^*$ ensures that Alg. \ref{alg:reverse_2} returns a good approximation in case that the stopping criterion in line 12 is never met. See. Appendix \ref{sec:proof_lemma:ra_s_case_b} for the complete proof.
\end{proof}
By the above analysis, the RA-S algorithm has the same approximation ratio as that of the algorithms proposed in the previous sections. Because the number of RA sets is doubled in each iteration until it exceeds $\delta_1^*$, the running of RA-S algorithm is at most larger than that of the RA-T algorithm by a constant factor. Therefore, it is still $O(n^3)$ with respect to $n$ and $m$.

\textbf{Parameter setting.} As mentioned earlier, there are two free parameters, $k$ and $\epsilon_3$. Because $l$ ranges from $\delta_2^*$ to $\delta_1^*$ and $\delta_1^* = 2^k\cdot \delta_2^*$, $k$ is number of iterations executed in Alg. \ref{alg:reverse_2} in the worst case. When $k$ is large, the RA-S algorithm has to execute many iterations in the worst case but it also makes it possible that the RA-S algorithm can terminate at line 13 when $l$ is much smaller than $\delta_1^*$. On the other hand, when $k$ is small, the number of RA sets generated in each iteration becomes larger, but the worst case requires fewer iterations.  Therefore, $k$ decides the trade-off between the load size of each iteration and the number of iterations. $\epsilon_3$ controls the stopping criterion (line 12) of the RA-S algorithm and it is correlated with $\epsilon_1$ and $\epsilon_2$ according to Eq System \ref{eqs:1}. In general, a small $\epsilon_3$ implies a strict stopping criterion but reduces the number of RA sets required. Conversely, when $\epsilon_3$ is large, the stopping criterion become less strict but more RA sets are needed as $\epsilon_1$ and $\epsilon_2$ become smaller. 

\textbf{An optimization.} As shown in Alg. \ref{alg:1/2}, the Buchbinder's algorithm examines  the nodes one by one via a loop from line 3 to line 11, and the approximation ratio holds regardless of the examining order. Therefore, one may wonder that which examining order gives the best result. In this paper, we sort the nodes by $\pi(v)$ in a non-increasing order in advance. In particular, we first generate a bunch of RA sets $\mathcal{R}_l=\{R_1,...,R_l\}$ and use $\frac{\sum_{i=1}^{l}x(R_i,{v})}{l}$ as an estimate of $\pi(v)$ for sorting. We adopt such an examining order whenever the Buchbinder's algorithm is used. 

\textbf{Comparison with RA-T algorithm.} The major difference between RA-T and RA-S is that they may use different numbers of RA sets. Although RA-T and RA-S algorithms have the same asymptotic running time, the RA-S algorithm can be more efficient in many cases. Recall that $l$ is the number of RA sets used for estimating. In the RA-T algorithm, even if $\delta_2$ is much smaller than $\delta_1$, $l$ has to be set as $\max(\delta_1, \delta_2) \approx \delta_1$ so that both the $F(\mathcal{R}_l,V^{*})$ and $F(\mathcal{R}_l,V_{opt})$ are sufficiently accurate. In the RA-S algorithm, as shown in line 12 in Alg. \ref{alg:reverse_2}, we utilize Monte Carlo simulation to check whether $F(\mathcal{R}_l^*,V^{*})$ is accurate enough, and therefore $F(\mathcal{R}_l^*,V_{opt})$ can be a good estimate as long as $l$ is larger than $\delta_2^*$. As a result, if the stopping criterion is satisfied when $l$ is much smaller than $\delta_1^*$, RA-S returns a good solution without generating many RA sets. However, as observed in the experiments, RA-S does not dominate RA-T. 

\textbf{Comparison with SPM and RPM.} According to the analysis in Sec. \ref{sec:front} and this section, RA-T and RA-S are theoretically efficient than SPM and RPM with respect to running time, which is also confirmed in experiments shown later in Sec. \ref{sec:exp}. However, SPM and RPM can be very fast under parallel computing systems as executing simulations and generating realizations are highly parallelizable. Furthermore, the forward sampling method designed in Sec. \ref{sec:front} is potentially applicable to other USM problems beyond the PM problem, while RA-T and RA-S are proposed especially for the PM problem under the triggering model.

\section{Experiments}
\label{sec:exp}
In this section, we present the experiments for evaluating the proposed algorithms. Besides showing the performance of the algorithms proposed in this paper, the results herein also provide useful observations on the property of the PM problem.

\subsection{Experimental Settings}
Our experiments are performed on a server with 16 GB ram and a 3.6 GHz quadcore processor running 64-bit JAVA VM 1.6.

\begin{table}[t]
\centering
{\begin{tabular}{ p{1.6cm} || p{1.4 cm} || p{1.4 cm} || p{2.45 cm} } 
\textbf{Dataset}& \textbf{\#Edges}& \textbf{\#Nodes}& \textbf{Avg Degree}\\
\hline
Wiki& 104K& 7K& 14.7\\

HepTh& 353K& 28K& 12.6\\

HepPh& 422K& 35K& 12.1\\

Youtube& 6.0M& 1.1M& 5.4\\

Pokec& 30.6M& 1.6M& 19.1\\
\end{tabular}}
\caption{\textbf{Datasets.} }
\label{table:datasets}
\end{table}

\subsubsection{Datasets}
The datasets selected for experiments are listed in Table \ref{table:datasets}. All the datasets are borrowed from Leskovec \textit{et al.} \cite{snapnets}. \textbf{Wiki} is a who-votes-on-whom network collected from Wikipedia. \textbf{HepTh} and \textbf{HepPh} are arXiv citation graphs from the categories of high energy physics theory and high energy physics phenomenology, respectively. \textbf{Youtube} is a social network drawn from the Youtube video-sharing website. \textbf{Pokec} is a popular online social network in Slovakia. The original data of Youtube is undirected and we take each edge as a bi-directed edge to make it directed.

\subsubsection{Models.} As mentioned earlier, we adopt IC and LT models for experiments. For IC model, we consider two probability settings. One is constant probability (CP) setting where each edge has the propagation probability of 0.01 and the other one is the weighted cascade (WC) setting where $p_{(u,v)}=1/|N_v^-|$. To distinguish those two settings, we use IC-CP to denote the former and IC-WC for the later. For LT model, following the convention \cite{kempe2003maximizing, tang2015influence, tang2016profit}, the weight of edge $(u,v)$ is set as $1/|N_v^-|$. The price $P$, coupon $C$ and intrinsics are normalized within [0,1], shown as follows. $P$ is selected from $[0.2, 0.6]$ and $C$ is set as the 90\% of $P$. After setting $P$ and $C$, the intrinsic of each user is randomly generated from $[P-C,1]$ in uniform. Under this setting, every node can be an adopter once selected as a seed user.

\subsubsection{Algorithms}The algorithms considered in our experiments are shown as follows. We always set $\epsilon$ and $N $ as 0.4 and $n$, respectively.

\textbf{RA-S.} This is the algorithm proposed in Sec. \ref{subsec:RA-S}. For small graphs (Wiki, HepTh and HepPh) , we set that $k=5$ and $\epsilon_3=0.1$, and for large graphs (Youtube and Pokec) we set that $k=8$ and $\epsilon_3=0.3$.  Other parameters are calculated by solving Eq System \ref{eqs:1}. Furthermore, we immediately return the seed set if the $F(\mathcal{R}_l^*,V^{*})$ calculated in two consecutive iterations decreases by less than $2\%$ \footnote{As observed in the experiments, $F(\mathcal{R}_l^*,S)$ decreases with the increase of $l$. Such a phenomenon is also observed in \cite{arora2017debunking}.}. 

\textbf{RA-T.} This is the algorithm proposed in Sec. \ref{subsec:RA-T}. In experiments, we set the maximum of RA sets as 5,000,000. The results under this setting are also used later for discussing how many RA sets needed in obtaining good estimates.

\textbf{SPM}. This is the algorithm proposed in Sec. \ref{subsec:simu}. The parameter $l$ here controls the number of simulations used for each estimating. In the previous works regarding the IM problem, the number of simulations is typically set as 10,000 to obtain an accurate estimation. However, in this paper, we set $l$ as 2,000 for two reasons. First, it is reported in \cite{arora2017debunking} that 1,000 simulations are sufficient for small graphs. Second, SPM takes more than 24 hours on Wiki even with $l=2,000$, which means it is not a good choice even if it could provide better performance with larger $l$.

\textbf{RPM}. This is the algorithm proposed in Sec. \ref{subsec:realization}. The number of realizations is set as 1,000 to make it terminate in a reasonable time. SPM and RPM are tested only on Wiki under IC-CP and LT as they are very time-consuming. 

\textbf{MaxInf.} As the first part of our objective function (Eq. (\ref{eq: objective})), the number of adopters is important in deciding the profit. Therefore, the algorithm that maximizes the number of adopters can be a reasonable heuristic for the PM problem. Note that maximizing the number of adopters is essentially the IM problem. We adopt one of the state-of-the-art IM algorithms, D-SSA \cite{nguyen2016stop}, as a heuristic algorithm for the PM problem. In particular, we set the size of the seed set to be $n*i/50$ for $i=1,...,50$. For each size, we run the D-SSA to obtain a seed set and evaluate the profit by simulation. Finally, we select the result with the highest profit. This approach is denoted as MaxInf.

\textbf{HighDegree.} Selecting the nodes with the highest degree is a popular heuristic for the maximization problems in social networks. We first randomly decide the size of seed nodes and then select the nodes with the highest degree as the seed set. For each graph, the above process is executed for 100 times. Finally, we select the result with the highest profit.

The profits under the seed sets produced by the above algorithms are finally evaluated by 10,000 simulations. On large graphs, we show the results of RA-T and a simplified version of MaxInf where, instead of enumerating it from  $n/50$ to $n$, the size of the seed set of MaxInf is set as the same as that of the result produced by RA-T.

\begin{figure*}[pt]
\centering
  
\subfloat[Wiki \& IC-CP]{
\label{fig:wiki_ic}
\includegraphics[width=0.20\textwidth]{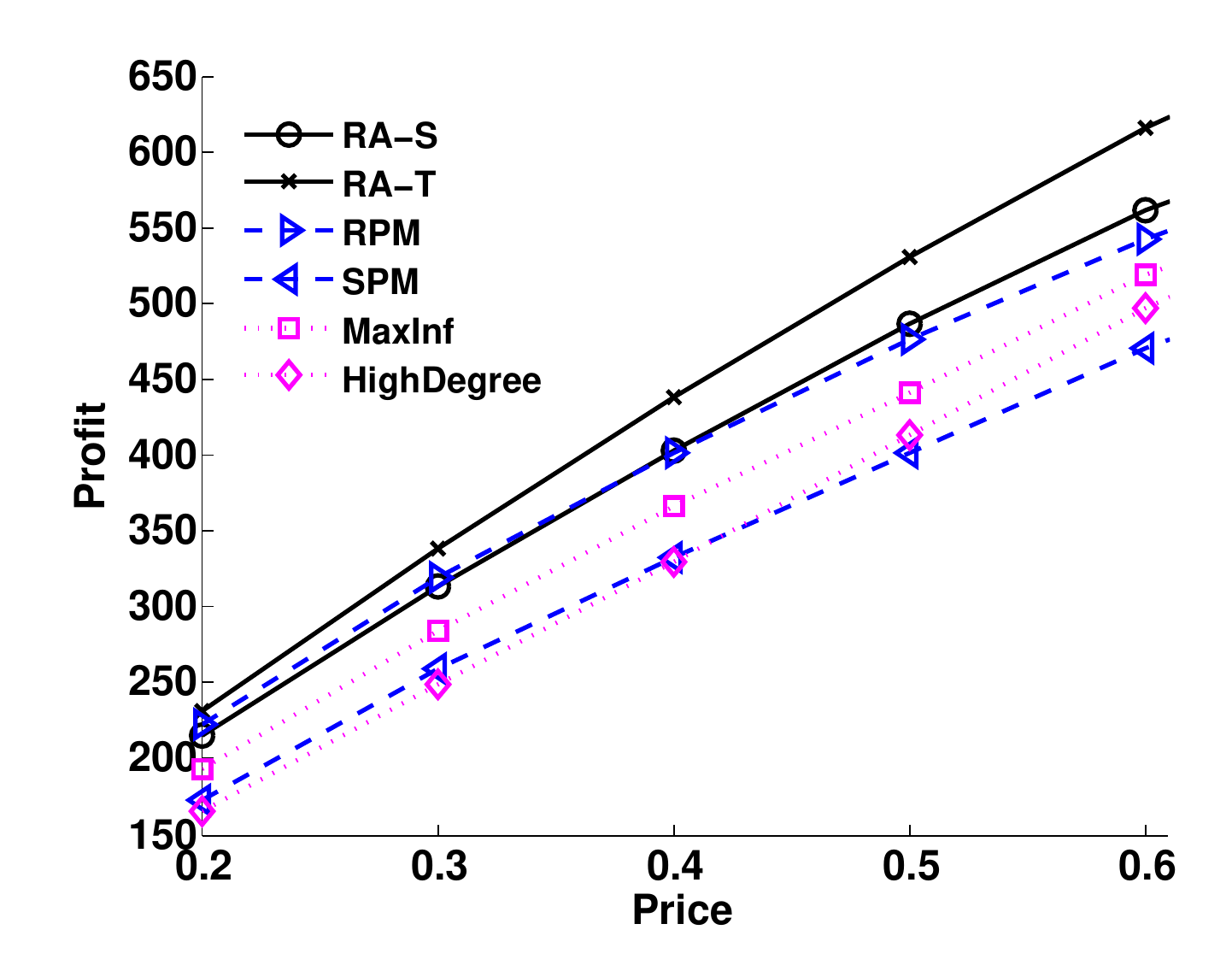}
\includegraphics[width=0.20\textwidth]{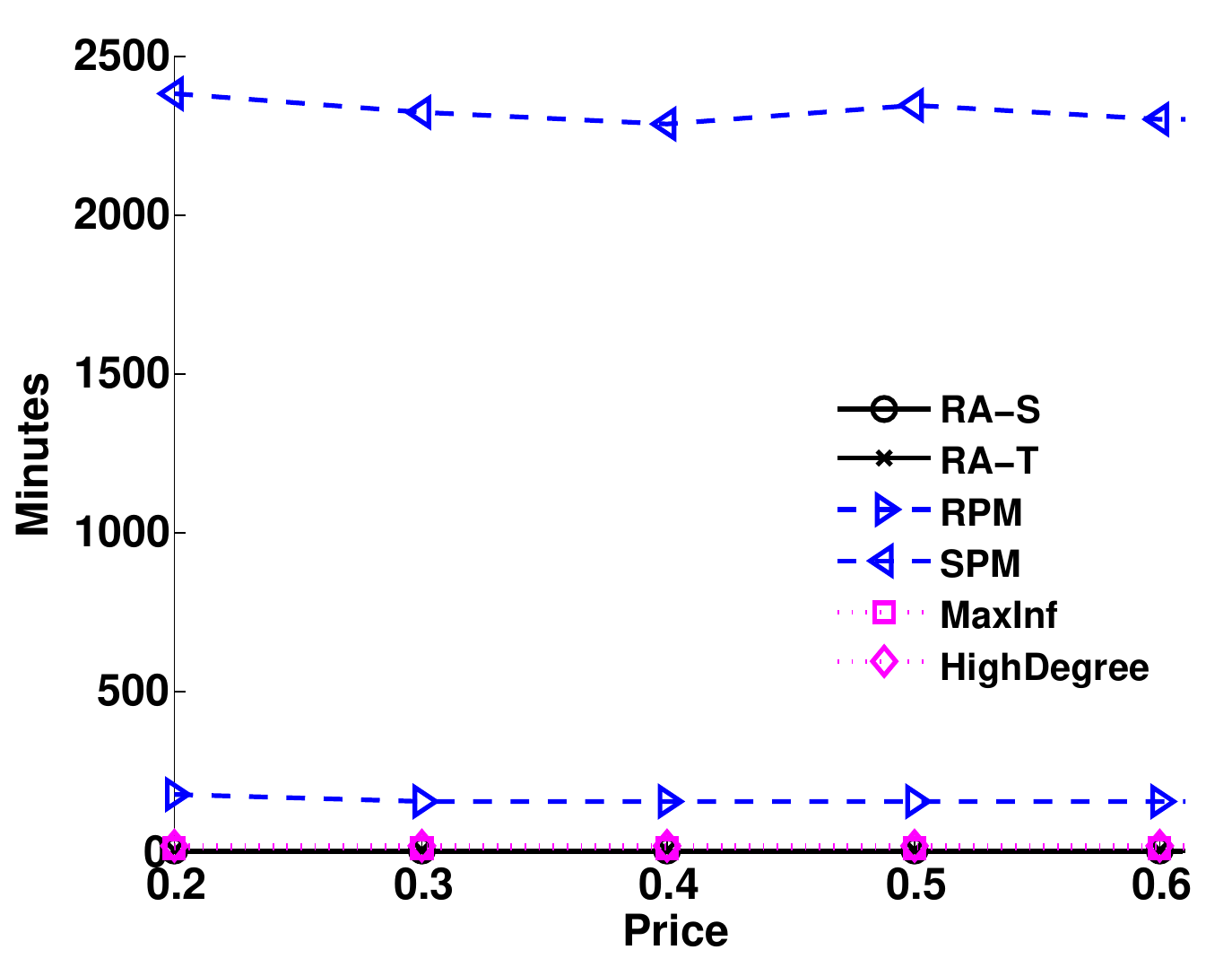}} \hspace{10mm} 
\subfloat[Wiki \& LT]{
\label{fig:wiki_lt}
\includegraphics[width=0.20\textwidth]{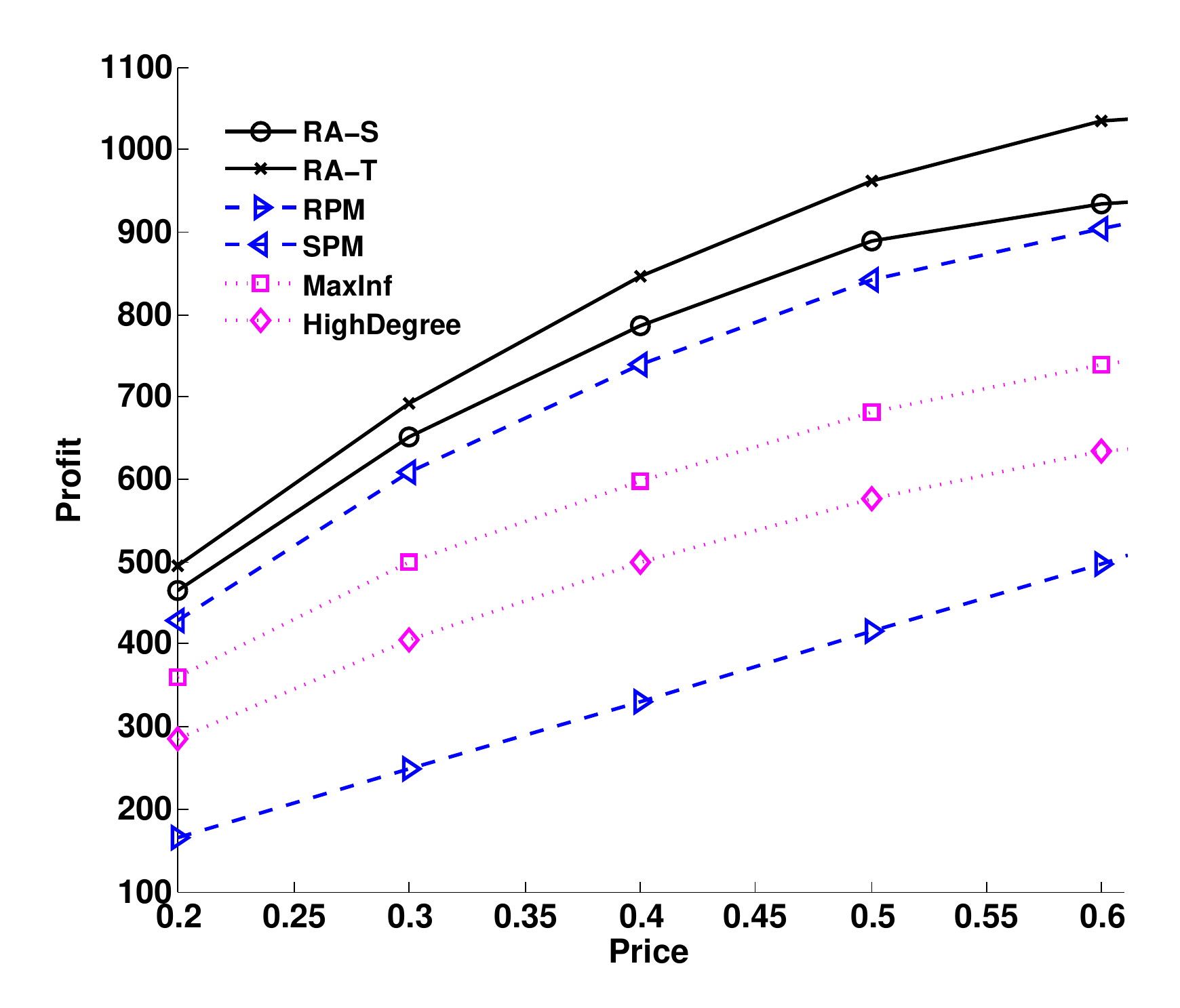}
\includegraphics[width=0.20\textwidth]{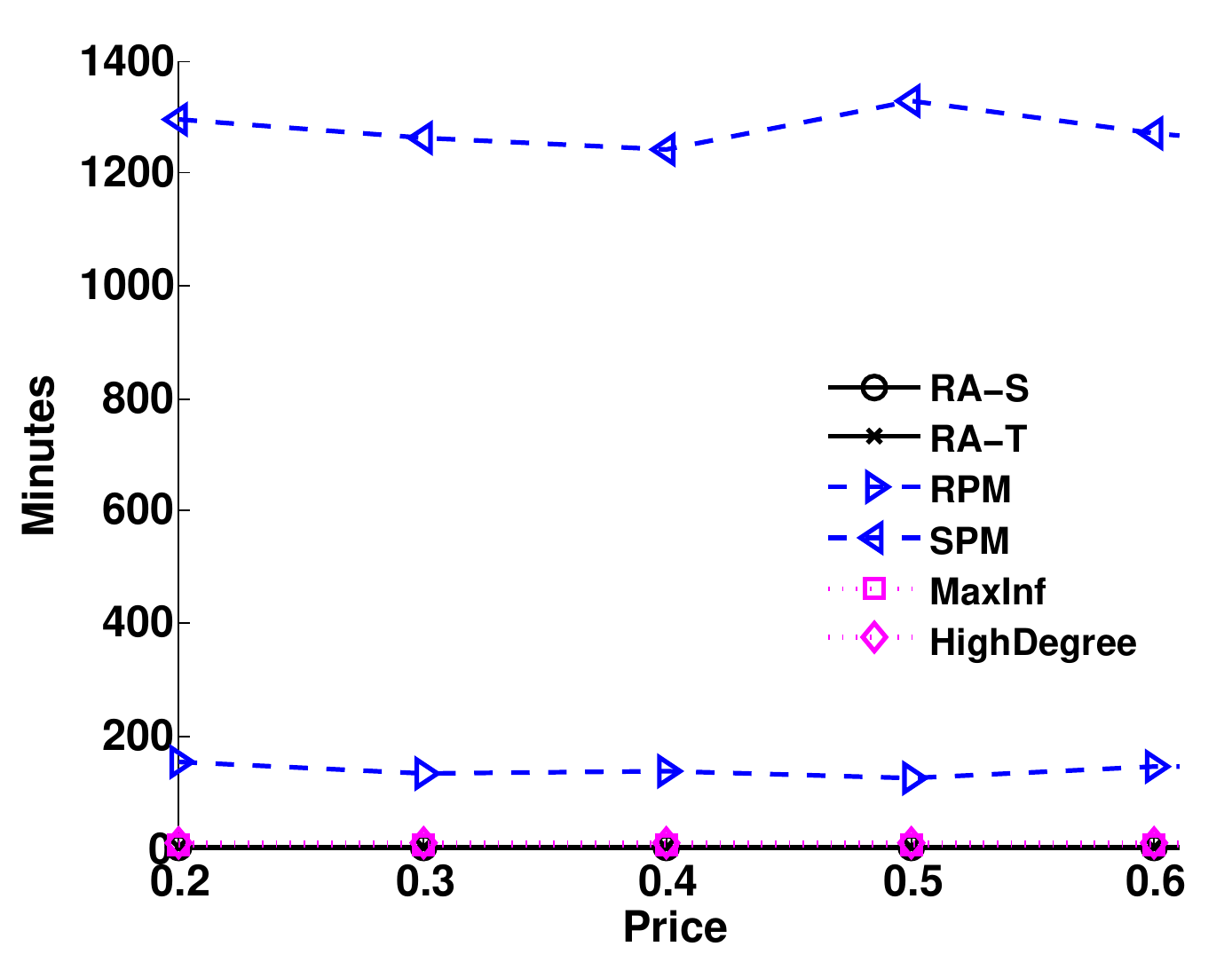}} \vspace{-1.5mm}

\subfloat[HepTh \& LT]{
\label{fig:hepth_lt}
\includegraphics[width=0.20\textwidth]{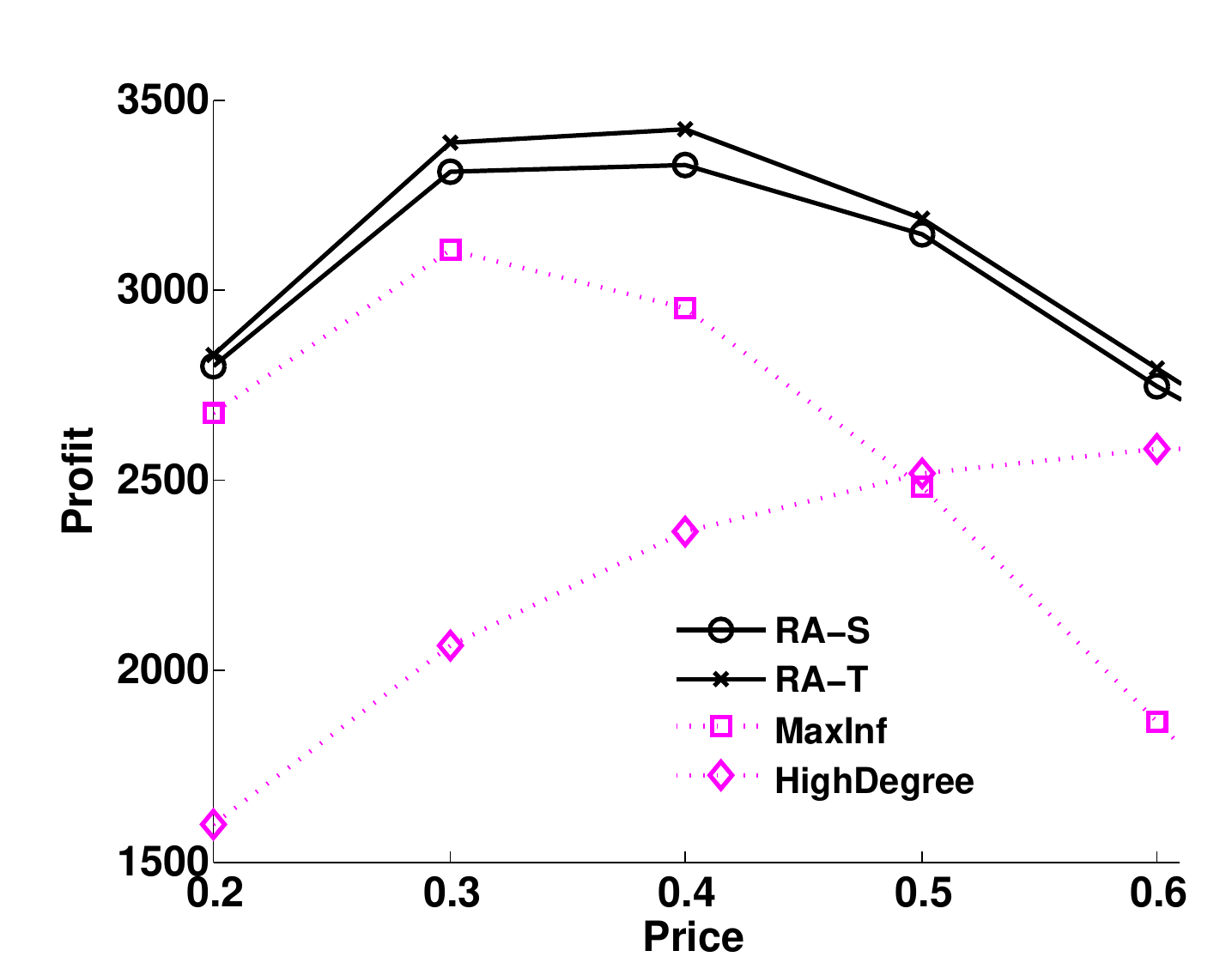}
\includegraphics[width=0.20\textwidth]{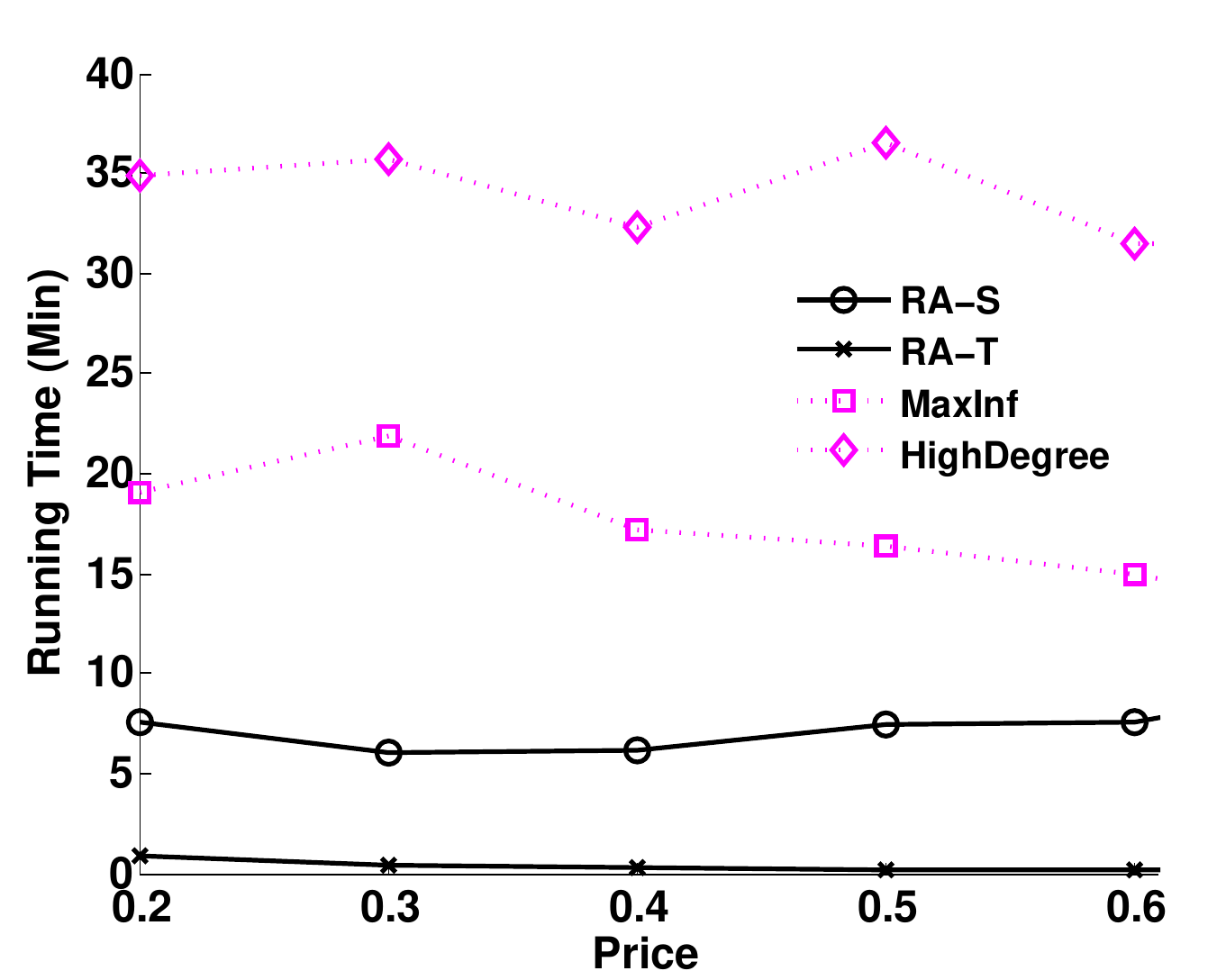}}\hspace{10mm}
\subfloat[HepTh \& IC-CP]{
\label{fig:hepth_ic}
\includegraphics[width=0.20\textwidth]{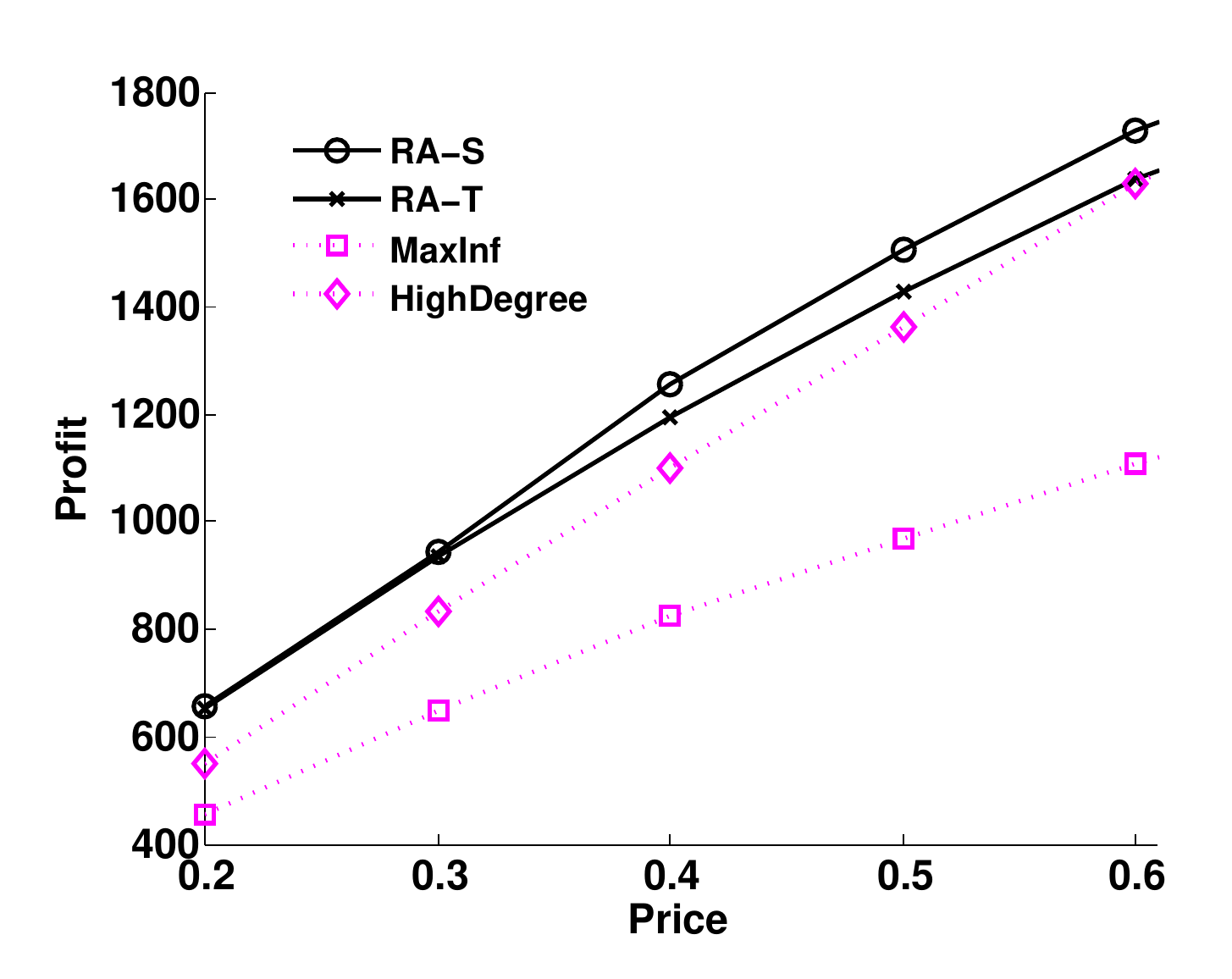}
\includegraphics[width=0.20\textwidth]{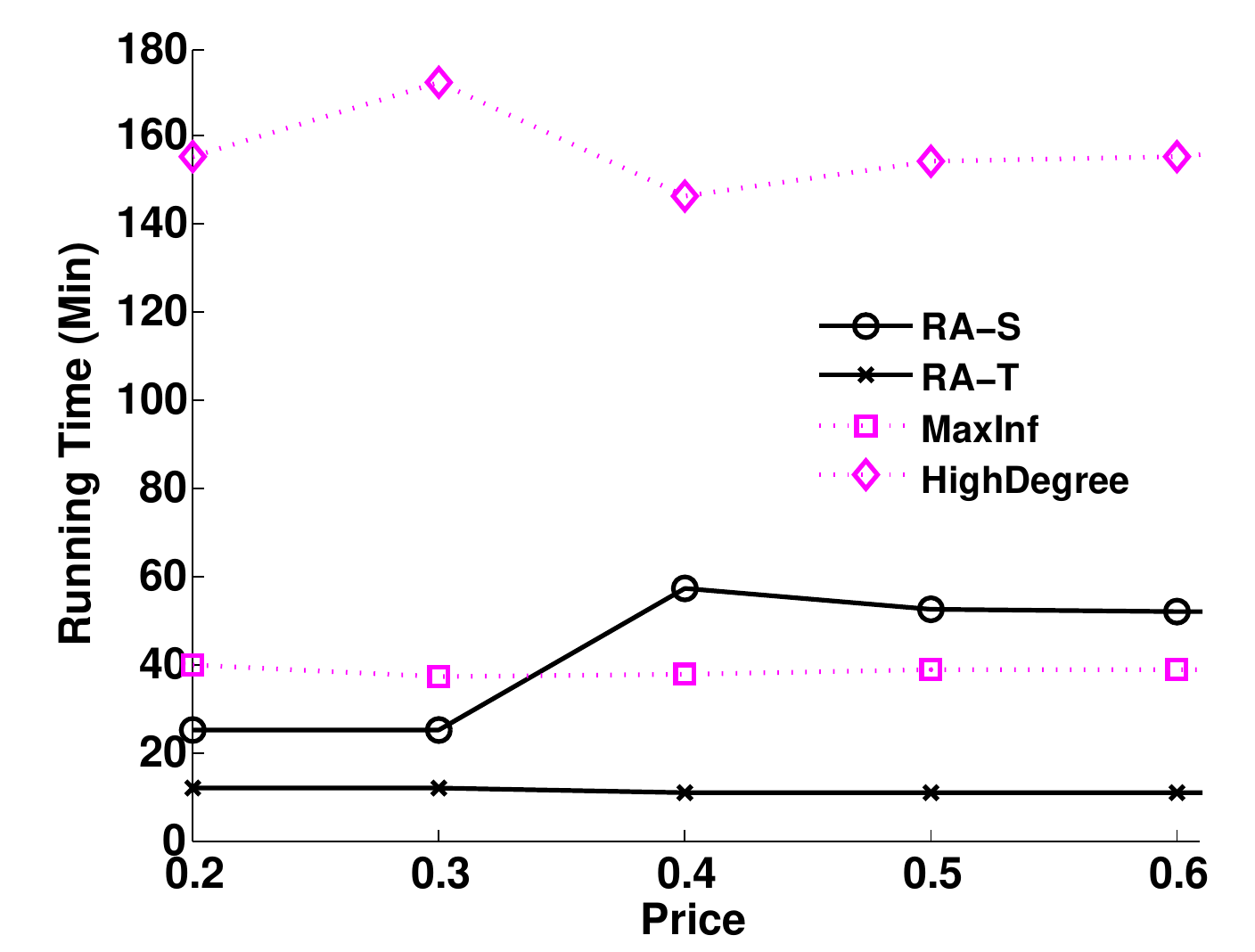}}\vspace{-1.5mm}

\subfloat[HepPh \& LT]{
\label{fig:hepph_lt}
\includegraphics[width=0.20\textwidth]{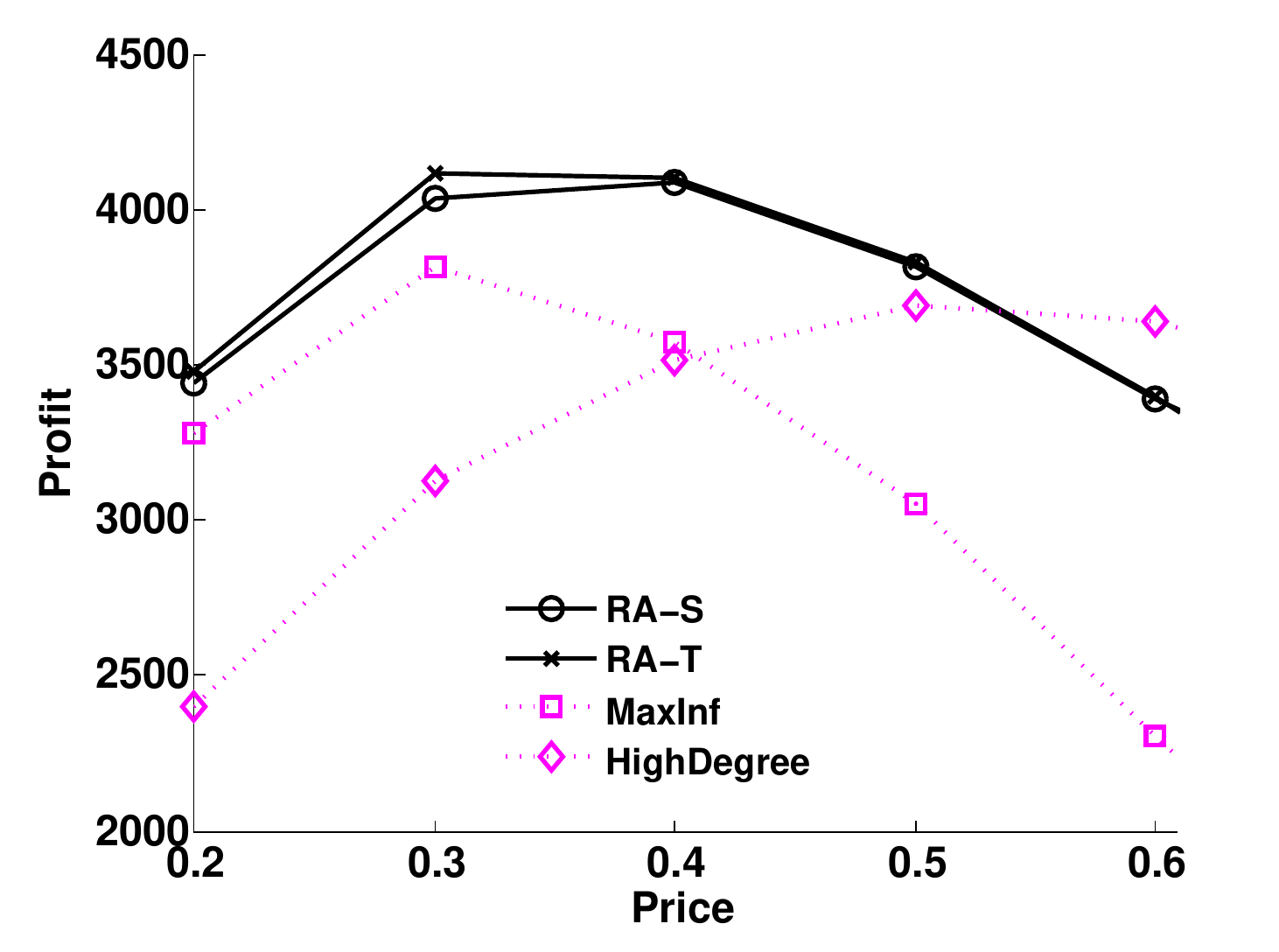}
\includegraphics[width=0.20\textwidth]{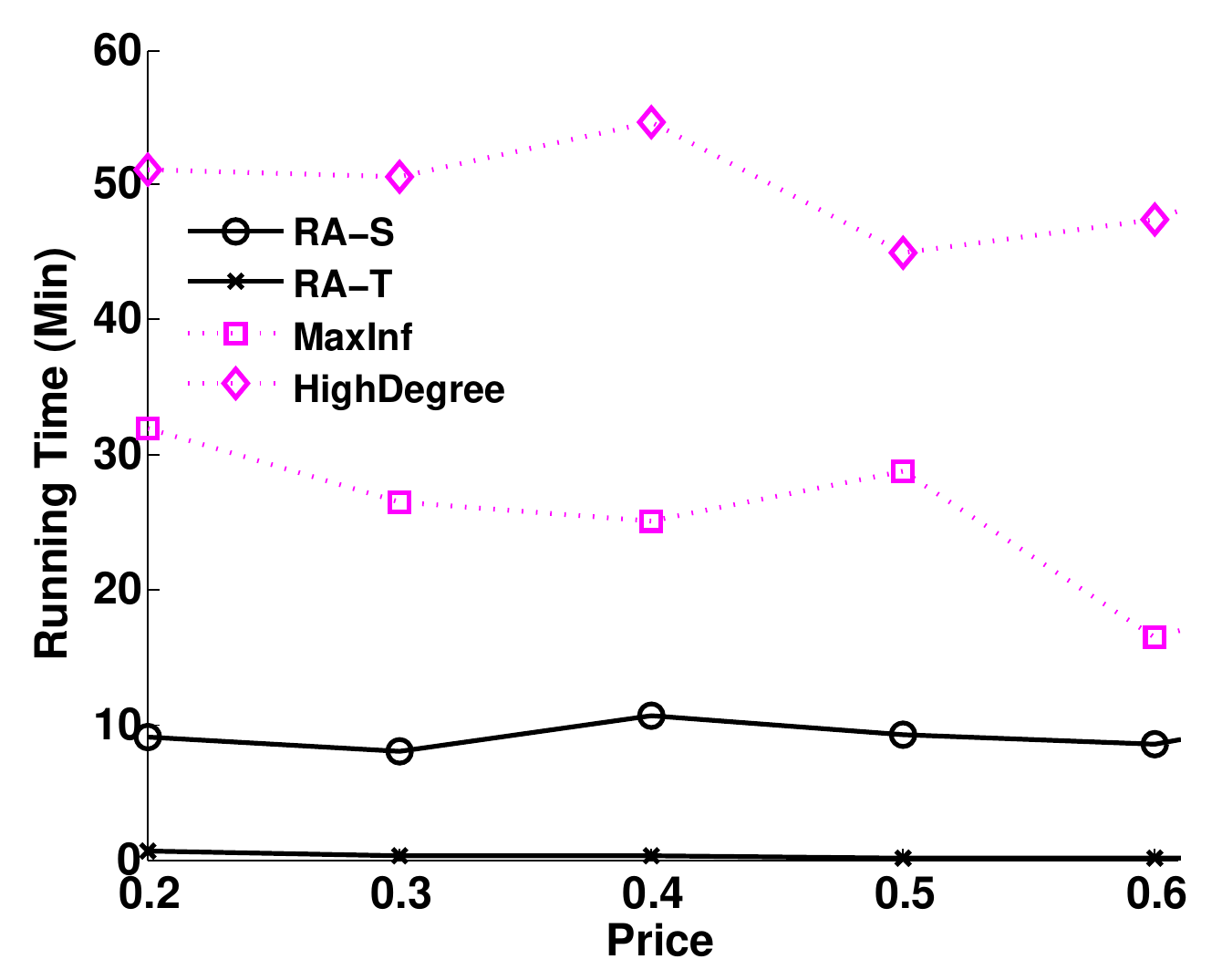}}\hspace{10mm} 
\subfloat[HepPh \& IC-CP]{
\label{fig:hepph_ic}
\includegraphics[width=0.20\textwidth]{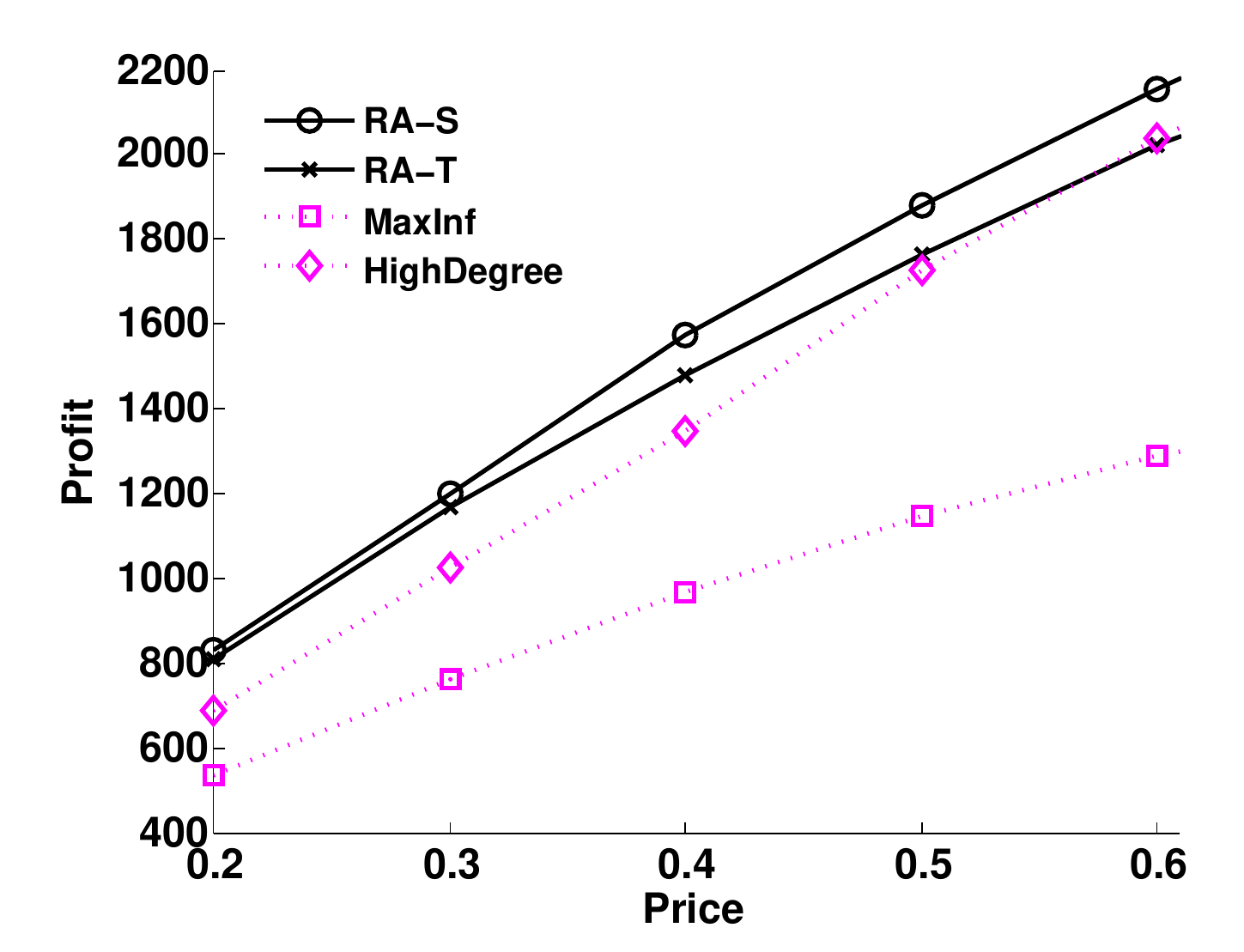}
\includegraphics[width=0.20\textwidth]{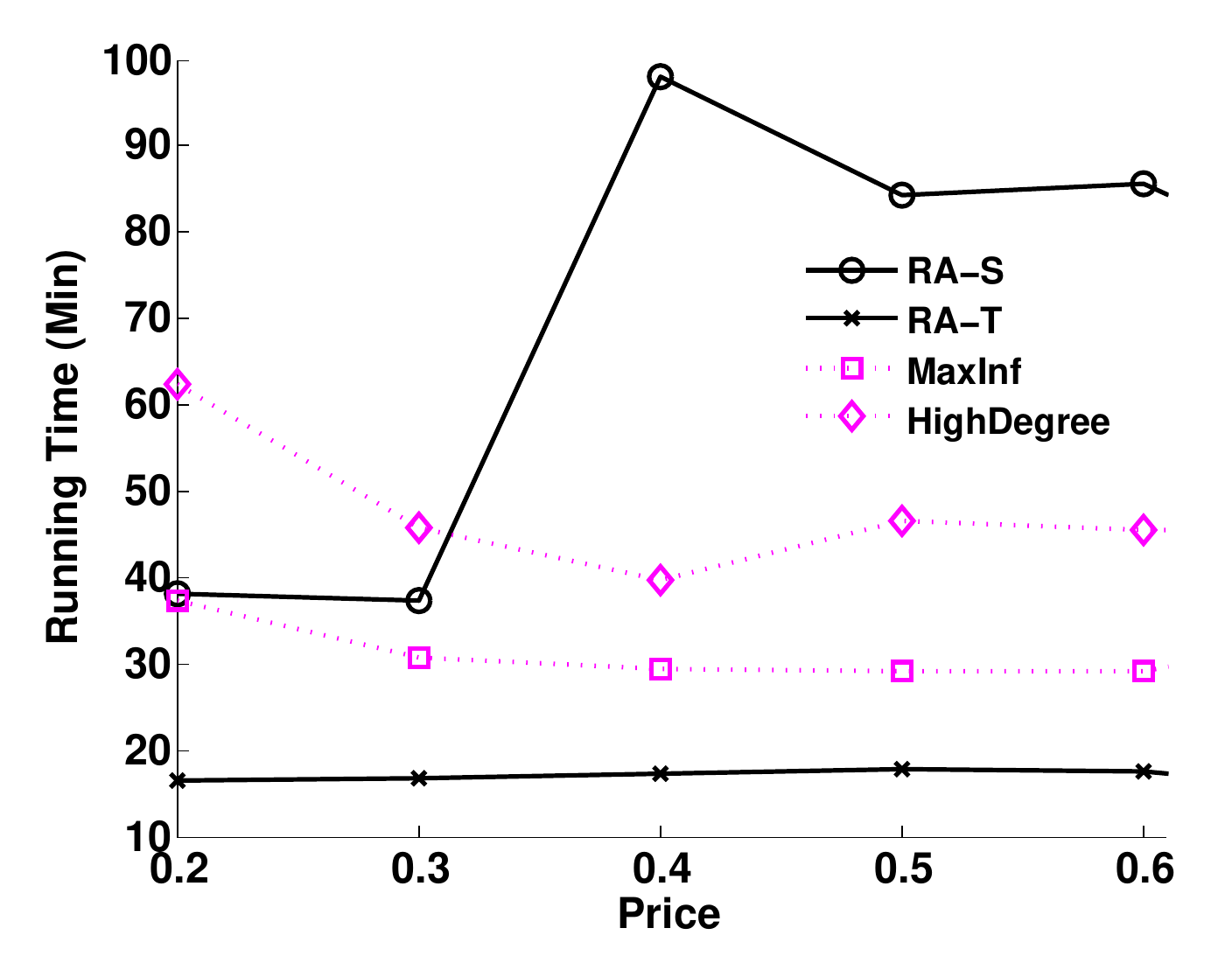}} \vspace{-1.5mm}  

\subfloat[Youtube \& WC]{
\label{fig:youtube_wc}
\includegraphics[width=0.20\textwidth]{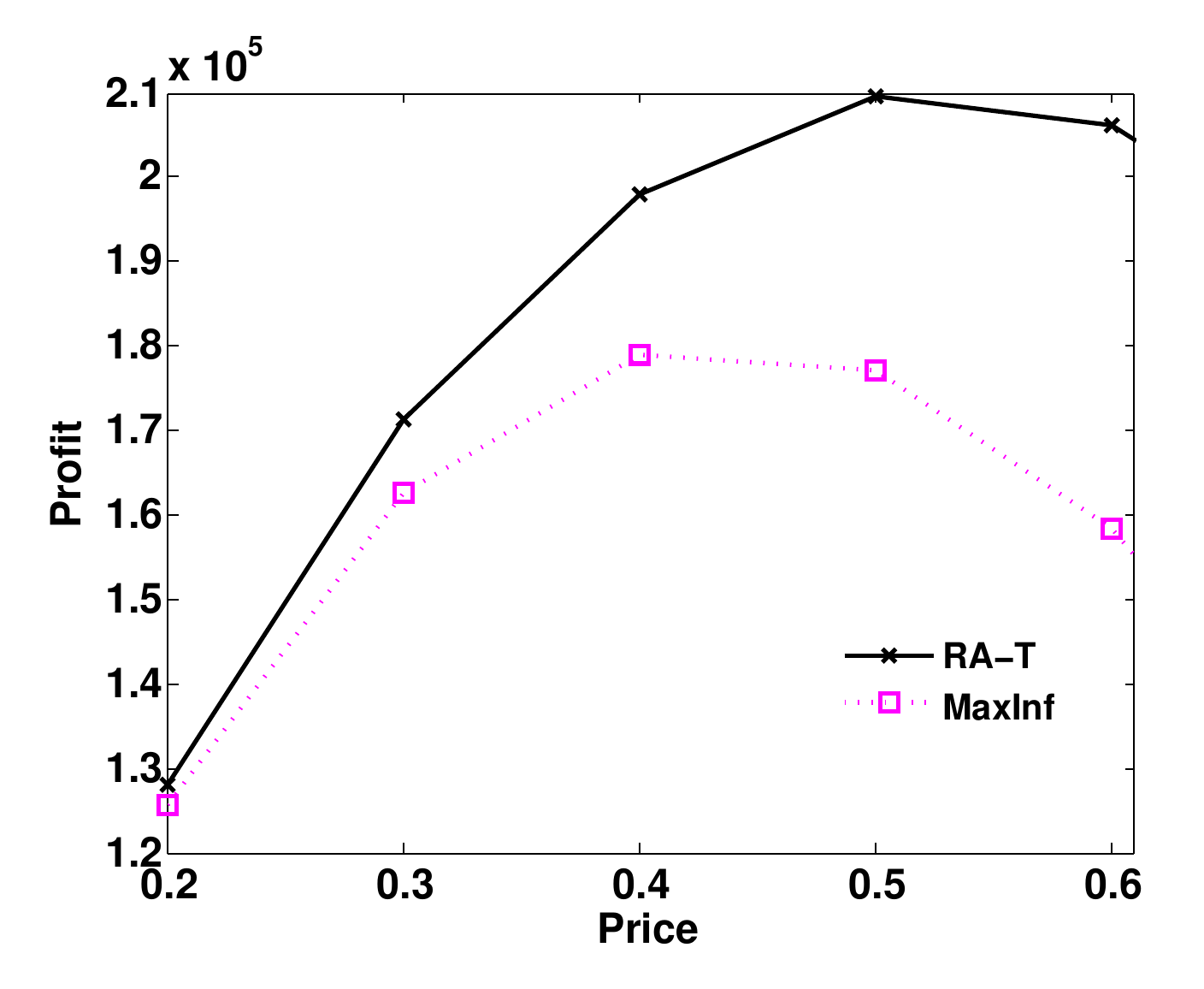}
\includegraphics[width=0.20\textwidth]{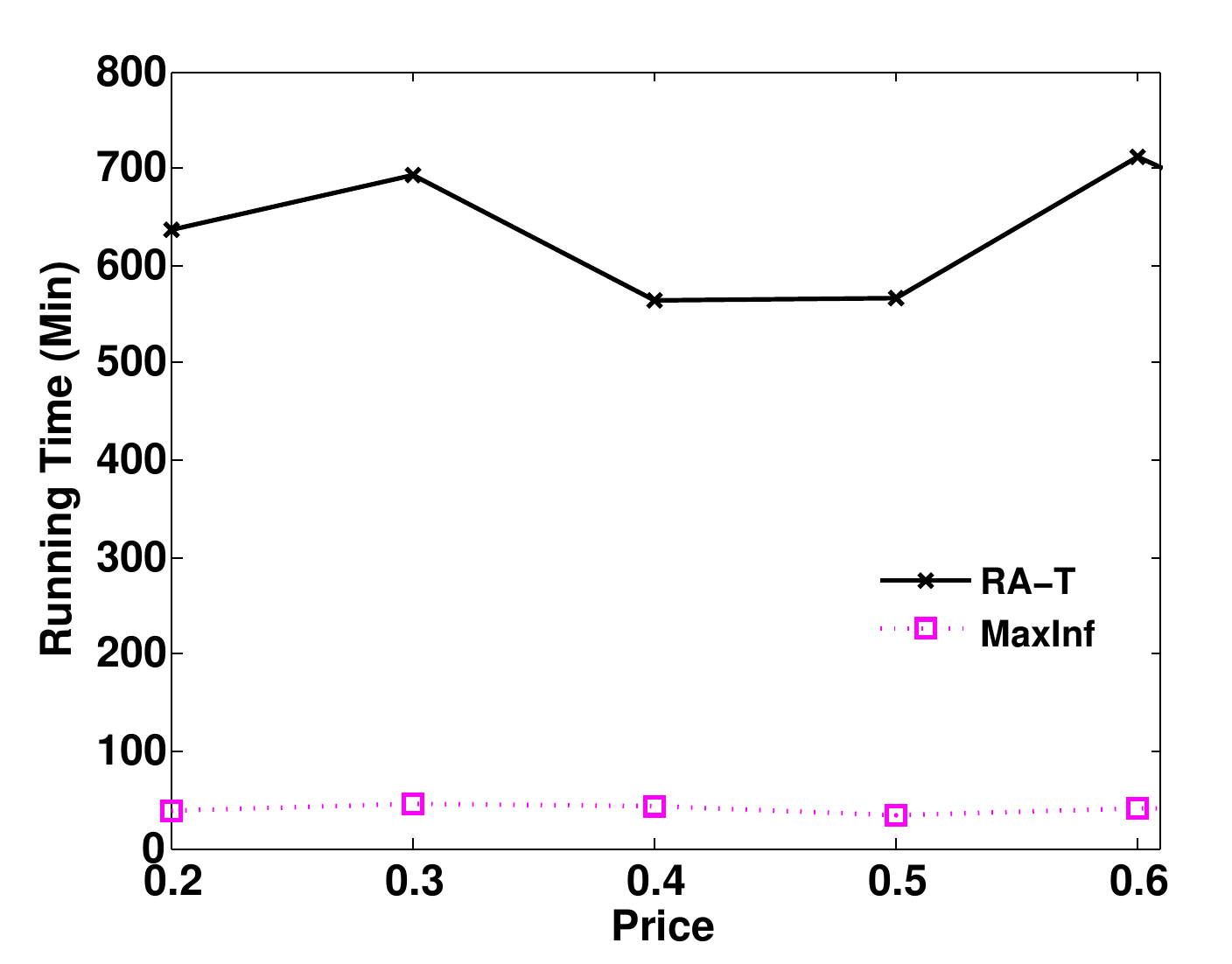}}\hspace{10mm} 
\subfloat[Youtube \& LT]{
\label{fig:youtube_lt}
\includegraphics[width=0.20\textwidth]{images/new/hepph_lt.pdf}
\includegraphics[width=0.20\textwidth]{images/time/hepph_lt_time.pdf}} \vspace{-1.5mm}

\subfloat[Pokec \& IC-CP]{
\label{fig:pokec_ic}
\includegraphics[width=0.20\textwidth]{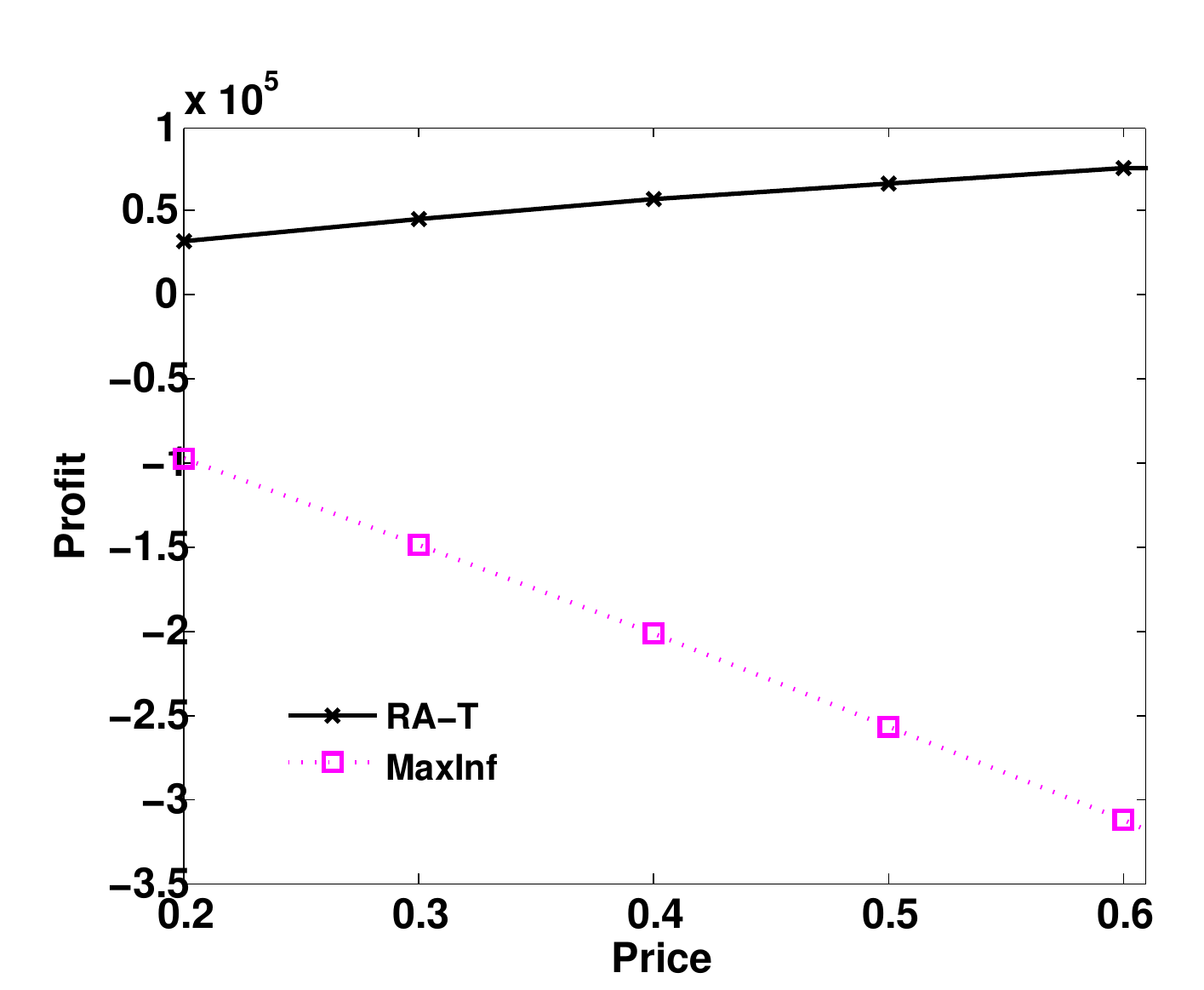}
\includegraphics[width=0.20\textwidth]{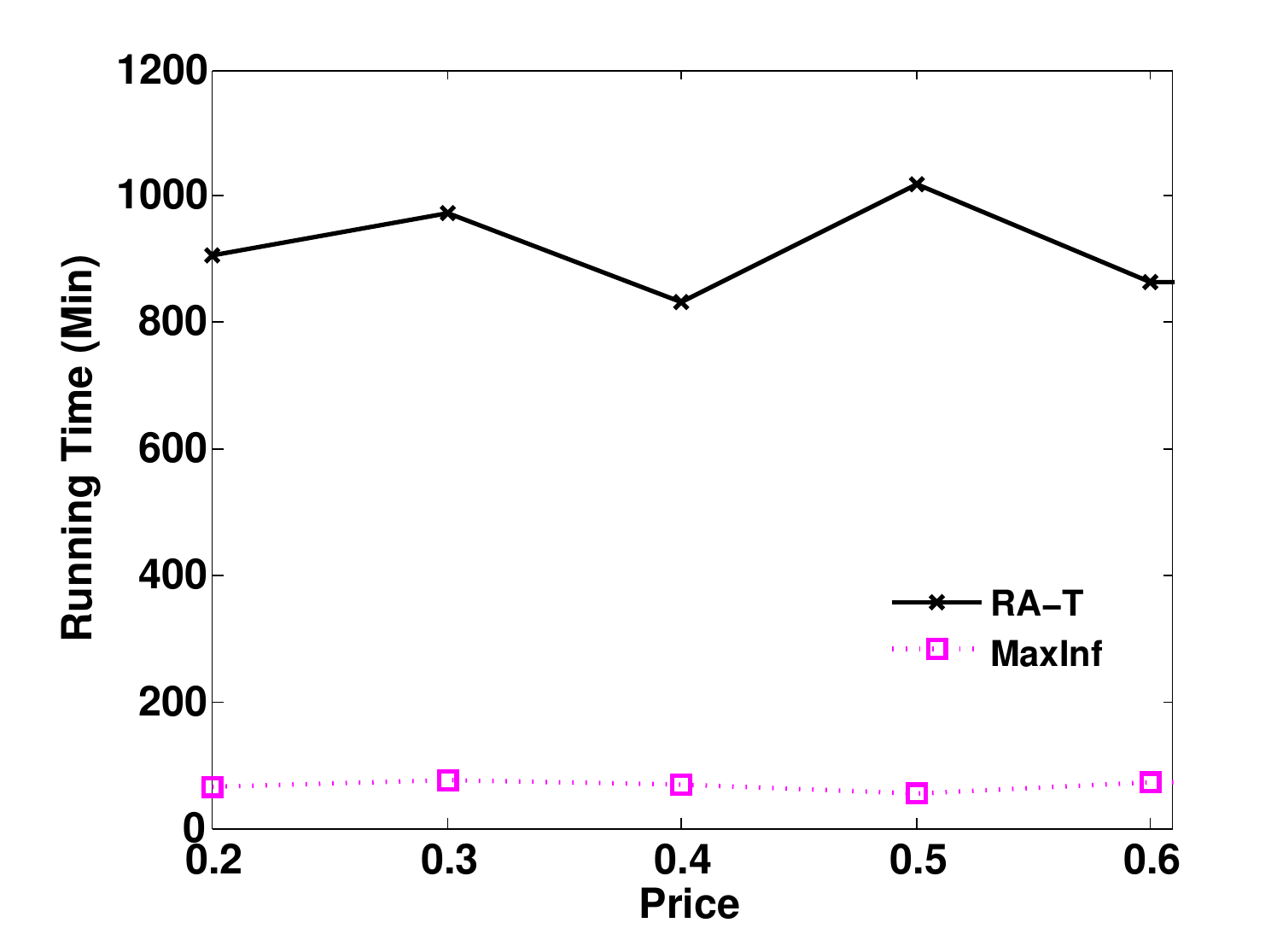}}\hspace{10mm}  
\subfloat[Pokec \& WC]{
\label{fig:pokec_wc}
\includegraphics[width=0.20\textwidth]{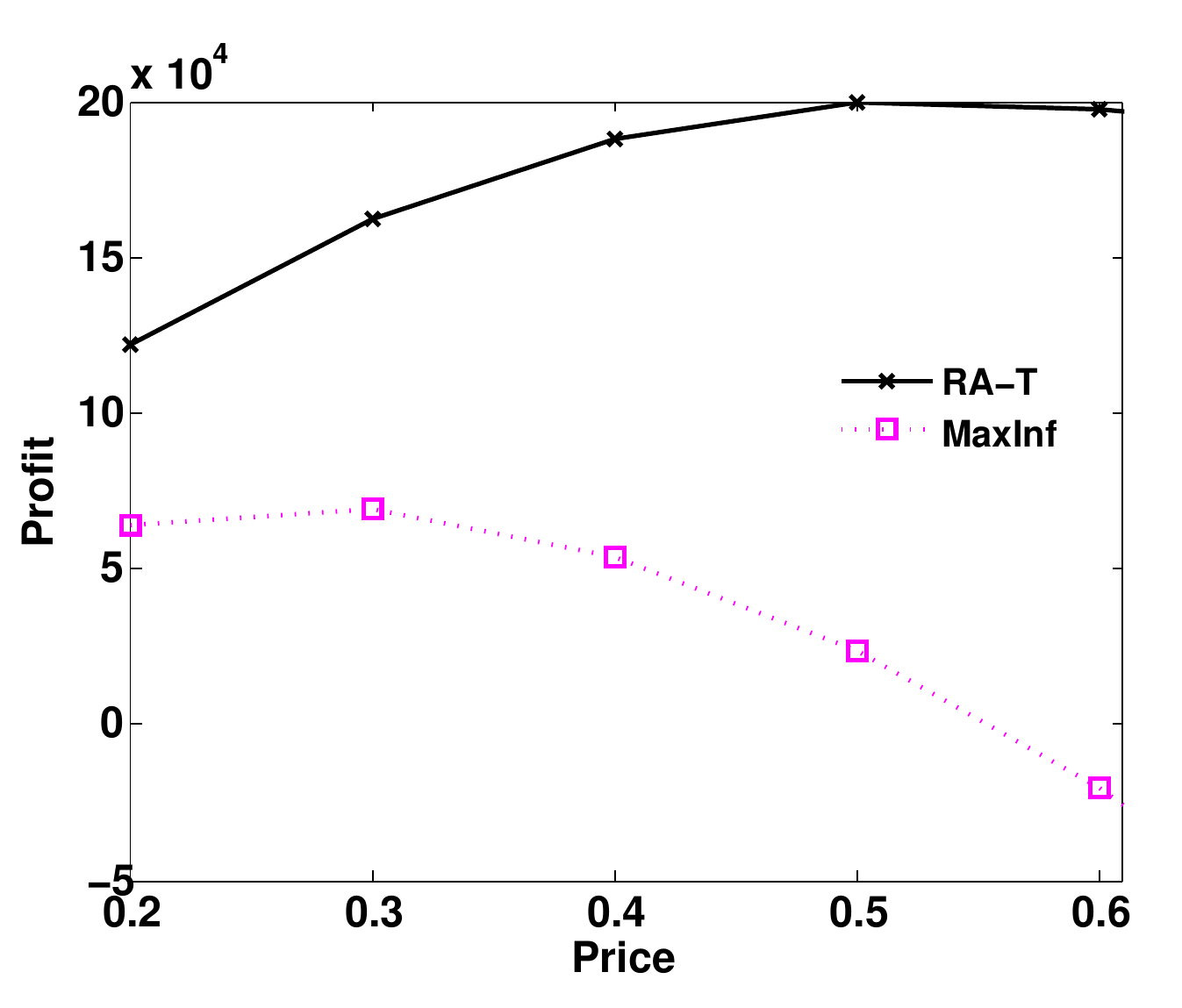}
\includegraphics[width=0.20\textwidth]{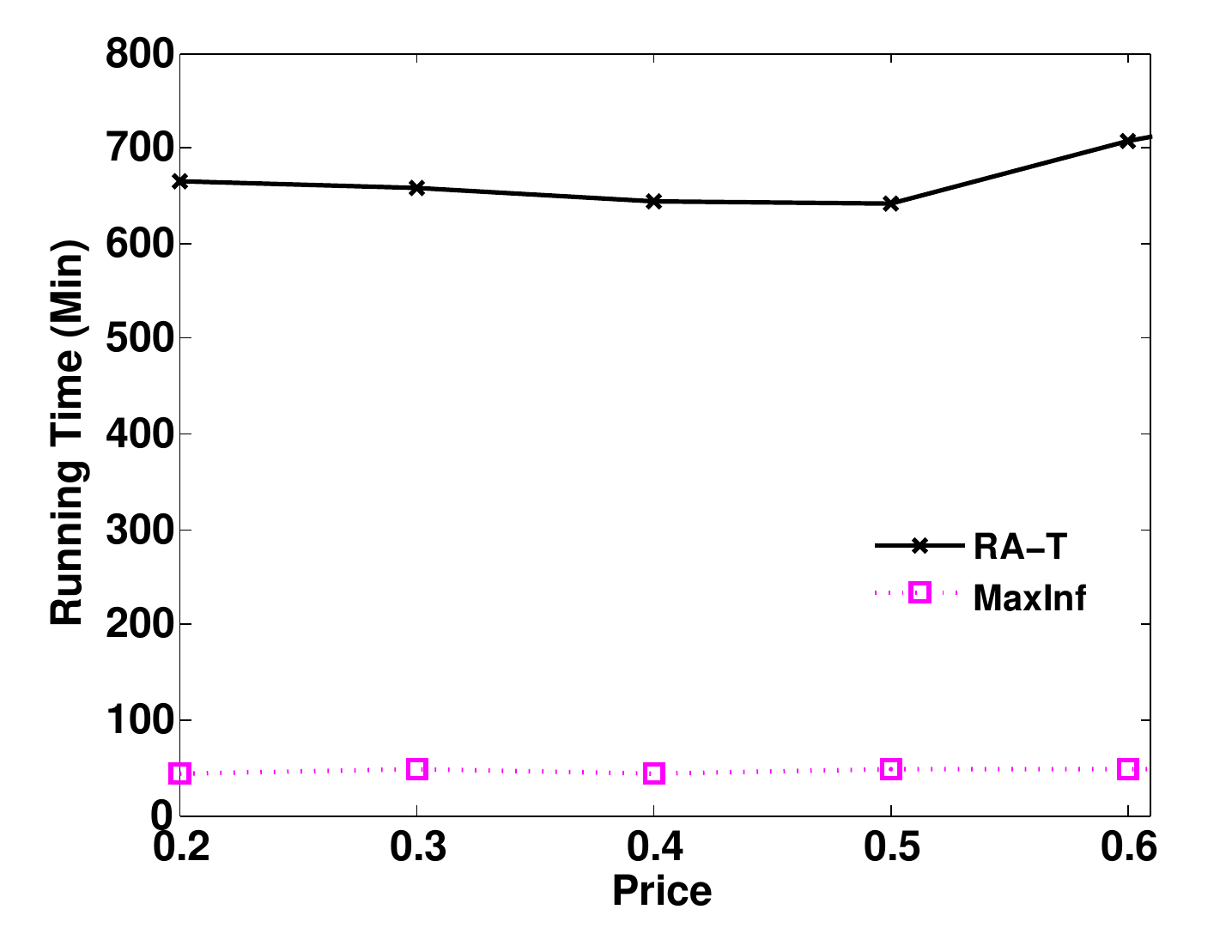}} \vspace{-1.5mm}

\caption{\small \textbf{Experimental results.} Each subset corresponds to one dataset under one model, where the left and right figures show the profit and running time ,respectively.} 
\label{fig:result}
\vspace{-3mm}
\end{figure*}

\subsection{Discussions}
\label{subsec:discuss}
The experimental results are shown Fig. \ref{fig:result}. According to the experimental results, we make the following observations.

\textbf{RA-S and RA-T give the best overall performance for maximizing the profits.} As shown in the figures, RA-S and RA-T consistently give the highest profit on the considered datasets under different settings. While SPM and RPM are also approximation algorithms, they cannot always produce high-quality seed sets. As shown in Figs. \ref{fig:wiki_ic} and \ref{fig:wiki_lt}, with $l=1,000$ on graph Wiki, RPM performs well under IC-CP model but considerably worse under LT model. Note that the gaps between the lines of different algorithms are not very significant because the price of the product is normalized within $[0, 1]$. 

\textbf{MaxInf and HighDegree are sometimes comparable to RA-S and RA-T.} As shown in Fig. \ref{fig:result}, MaxInf performs relatively good when the price is very low. This is because when $P$ is small almost all the nodes can be adopters once activated due to the setting of the intrinsic, and therefore the first part $\pi()$ dominates the objective function $f()$.  Since MaxInf aims to maximize $\pi()$, it has a good performance in this case. When the price is very high, except the seed users, most of the users cannot be adopters and therefore the objective function is almost monotone, which makes HighDegree accidentally comparable to RA-S e.g. Fig. \ref{fig:hepth_lt}. As an extreme case, when $P$ approaches to 1, the seed users cannot activate any neighbor and thus the objective function is strictly monotone. In this case, $f(S)=(P-C)|S|$ only depends on the size of the seed set and the considered problem becomes less interesting.

\textbf{RA-S and RA-T are efficient with respect to running time.} As shown in the graphs, RA-S and RA-T are even more efficient than the two heuristics, MaxInf and HighDegree, under certain circumstances. This is because the two heuristics still need to utilize Monte Carlo simulations to compare the profit under different seed sets. SPM and RPM are not scalable to large graphs as they also employ Monte Carlo simulations. For example, as shown in Figs \ref{fig:wiki_ic} and \ref{fig:wiki_lt}, on graph Wiki, SPM takes 38.6 and 21.2 hours under IC-IC and LT models, respectively.

\textbf{Comparison between SPM and RPM.} As shown in Fig. \ref{fig:wiki_ic}, comparing with SPM, RPM results in a higher profit and also takes much less time on graph Wiki under IC-CP model, which gives the evidence that RPM is sometimes superior to SPM. However, RPM requires to load the realizations into memory and therefore it consumes more memory than SPM does.

\textbf{How many RA sets needed to produce a good seed set?} As shown in Sec. \ref{sec:reverse}, a great deal of effort has been devoted to computing the number of RA sets needed to produce a good seed set for maximizing the profit. This is actually the key problem for many influence related problems where the reverse sampling technique is used, as shown in  \cite{tang2014influence,tang2015influence,tong2017efficient,lu2015competition}.  While more RA sets bring a higher quality of the seed sets, the quality increases very slow when a large number RA sets have already been used, and therefore one should stop generating RA sets if the current seed set is satisfactory. We observe that the algorithms proposed in this paper can sometimes overly generate RA sets. For example, on graph HepPh under the LT model, as shown in Fig. \ref{fig:hepph_lt}, RA-S and RA-T have the similar performance but RA-S is ten times slower than RA-T, which means the number of RA sets used in RA-S is larger than that of used in RA-T. In fact, we believe there is room for improvement of the algorithms proposed in this paper. 

\textbf{Profit V.s. Price.} As observed in the experiments, the profit does not necessarily increase with the increase of price, especially under the LT model as shown in Figs. \ref{fig:hepth_lt} and \ref{fig:hepph_lt}. Such a phenomenon coincides with the setting of the intrinsic, because a large $P$ narrows the set of users who are able to be adopters and therefore leads a decrease in total profit.

\textbf{Comparing to IM problem.} It is worthy to note that, in general, the PM problem considered in this paper takes more time to solve than the IM problem does. Although the SPM algorithm proposed in this paper and the CELF algorithm \cite{leskovec2007cost} for IM problem are both designed with the idea of forward sampling, SPM is not able to run on HepPh even with $l=2,000$ while CELF terminates within 40 hours on the same graph as shown in \cite{arora2017debunking}. There are several possible explanations for this phenomenon. First, the PM itself is an unconstrained optimization problem while the IM problem is a budgeted maximization problem, and thus PM requires more effort for sampling because $\pi(V_{opt})$ can be very large. Second, the lazy forward sampling method \cite{leskovec2007cost} significantly reduces the running time of the greedy algorithm for solving the IM problem without loss of the approximation guarantee. Unfortunately, currently we are not aware of any technique that can be used to boost the efficiency of the Buchbinder's algorithm (Alg. \ref{alg:1/2}).

\textbf{On large graphs.} The results on Youtube and Pokec are shown in Figs. \ref{fig:youtube_lt}, \ref{fig:youtube_wc}, \ref{fig:pokec_ic} and \ref{fig:pokec_wc}. We do not show the experimental results of HighDegree because it still performs worse than RA-T does. The results of RA-S are not included because it runs out of memory on those graphs. Note that our experiments are done on a machine with 16 GB ram and we observe that RA-S is able to run on large graphs if the size of ram is increased to 32 GB or more. According to the results, even though the number of RA sets is limited to 5,000,000, RA-T still performs much better than MaxInf does.  On large graphs, MaxInf takes much less time than RA-T does, because the size of the seed set in MaxInf is given by the result of RA-T and consequently MaxInf no longer has to perform Monte Carlo simulations to compare the profits under the seed sets with different size. As shown in Figs.  \ref{fig:pokec_ic} and \ref{fig:pokec_wc}, MaxInf can sometimes result in negative profit, which again indicates that maximizing the influence may not produce a high profit. As shown in Figs \ref{fig:youtube_lt}, \ref{fig:youtube_wc}, RA-T is able to terminate within 10 hours on Youtube and Pokec.

\section{Conclusion and Future Work}
\label{sec:conclustion}
In this paper, we consider the problem of maximizing the profit with the usage of the coupon in social networks. By integrating coupon system into the triggering spreading model, we formulate the PM problem as a combinatorial optimization problem. Based on our formulation, several approximation algorithms are provided together with their analysis. The performance of the proposed methods is evaluated by experiments. The analysis of the PM problem is actually applicable to other influence-related problems which are non-monotone submodular, and therefore it provides an algorithmic framework. 

The major future work of this paper is derived from the observations of the experiments. As discussed in Sec. \ref{subsec:discuss}, to improve the efficiency of RA-T and RA-S, one has to consider the number of RA sets generated for estimating. On one hand we need sufficient RA sets for an accurate estimate but, on the other hand, we should avoid generating too many RA sets to reduce the running time. Essentially, it asks for a good stop criterion for generating RA sets in such kind of algorithms like RA-S and RA-T. Second, any technique that can boost the efficiency of the Buchbinder's algorithm can definitely reduce the running time of RA-S and RA-T. Thus, finding such techniques is a promising future work. Finally, we note that the heuristics designed based on RA sets may effectively solve the PM problem while taking much less time. We leave this as part of our future work.

\ifCLASSOPTIONcompsoc
  \section*{Acknowledgments}
  
\else
  \section*{Acknowledgment}
\fi

This work was supported by National Natural Science Foundation of China under 61472272.

\ifCLASSOPTIONcaptionsoff
  \newpage
\fi

\bibliographystyle{IEEEtran}
\bibliography{sigproc_short}

\begin{thebibliography}{10}
\providecommand{\url}[1]{#1}
\csname url@samestyle\endcsname
\providecommand{\newblock}{\relax}
\providecommand{\bibinfo}[2]{#2}
\providecommand{\BIBentrySTDinterwordspacing}{\spaceskip=0pt\relax}
\providecommand{\BIBentryALTinterwordstretchfactor}{4}
\providecommand{\BIBentryALTinterwordspacing}{\spaceskip=\fontdimen2\font plus
\BIBentryALTinterwordstretchfactor\fontdimen3\font minus
  \fontdimen4\font\relax}
\providecommand{\BIBforeignlanguage}[2]{{%
\expandafter\ifx\csname l@#1\endcsname\relax
\typeout{** WARNING: IEEEtran.bst: No hyphenation pattern has been}%
\typeout{** loaded for the language `#1'. Using the pattern for}%
\typeout{** the default language instead.}%
\else
\language=\csname l@#1\endcsname
\fi
#2}}
\providecommand{\BIBdecl}{\relax}
\BIBdecl

\bibitem{kempe2003maximizing}
D.~Kempe, J.~Kleinberg, and {\'E}.~Tardos, ``Maximizing the spread of influence
  through a social network,'' in \emph{Proc. of the ninth SIGKDD}.\hskip 1em
  plus 0.5em minus 0.4em\relax ACM, 2003, pp. 137--146.

\bibitem{nemhauser1978analysis}
G.~L. Nemhauser, L.~A. Wolsey, and M.~L. Fisher, ``An analysis of
  approximations for maximizing submodular set functions—i,''
  \emph{Mathematical Programming}, vol.~14, no.~1, pp. 265--294, 1978.

\bibitem{kalish1985new}
S.~Kalish, ``A new product adoption model with price, advertising, and
  uncertainty,'' \emph{Management science}, vol.~31, no.~12, pp. 1569--1585,
  1985.

\bibitem{ailawadi2014sales}
K.~Ailawadi, S.~Gupta \emph{et~al.}, ``Sales promotions,'' \emph{History of
  Marketing Science, Singapore}, pp. 463--497, 2014.

\bibitem{buchbinder2015tight}
N.~Buchbinder, M.~Feldman, J.~Seffi, and R.~Schwartz, ``A tight linear time
  (1/2)-approximation for unconstrained submodular maximization,'' \emph{SIAM
  Journal on Computing}, vol.~44, no.~5, pp. 1384--1402, 2015.

\bibitem{borgs2014maximizing}
C.~Borgs, M.~Brautbar, J.~Chayes, and B.~Lucier, ``Maximizing social influence
  in nearly optimal time,'' in \emph{Proc. of the Twenty-Fifth Annual ACM-SIAM
  Symposium on Discrete Algorithms}.\hskip 1em plus 0.5em minus 0.4em\relax
  SIAM, 2014, pp. 946--957.

\bibitem{tang2015influence}
Y.~Tang, Y.~Shi, and X.~Xiao, ``Influence maximization in near-linear time: A
  martingale approach,'' in \emph{Proc. of the 2015 SIGMOD}.\hskip 1em plus
  0.5em minus 0.4em\relax ACM, 2015, pp. 1539--1554.

\bibitem{nguyen2016stop}
H.~T. Nguyen, M.~T. Thai, and T.~N. Dinh, ``Stop-and-stare: Optimal sampling
  algorithms for viral marketing in billion-scale networks,'' in \emph{Proc. of
  the 2016 SIGMOD}.\hskip 1em plus 0.5em minus 0.4em\relax ACM, 2016, pp.
  695--710.

\bibitem{domingos2001mining}
P.~Domingos and M.~Richardson, ``Mining the network value of customers,'' in
  \emph{Proc. of the 7th SIGKDD}.\hskip 1em plus 0.5em minus 0.4em\relax ACM,
  2001, pp. 57--66.

\bibitem{leskovec2007cost}
J.~Leskovec, A.~Krause, C.~Guestrin, C.~Faloutsos, J.~VanBriesen, and
  N.~Glance, ``Cost-effective outbreak detection in networks,'' in \emph{Proc.
  of the 13th SIGKDD}.\hskip 1em plus 0.5em minus 0.4em\relax ACM, 2007, pp.
  420--429.

\bibitem{tang2014influence}
Y.~Tang, X.~Xiao, and Y.~Shi, ``Influence maximization: Near-optimal time
  complexity meets practical efficiency,'' in \emph{Proc. of the 2014
  SIGMOD}.\hskip 1em plus 0.5em minus 0.4em\relax ACM, 2014, pp. 75--86.

\bibitem{arora2017debunking}
A.~Arora, S.~Galhotra, and S.~Ranu, ``Debunking the myths of influence
  maximization: An in-depth benchmarking study,'' in \emph{Proc. of the
  SIGMOD}, 2017.

\bibitem{chen2010scalable}
W.~Chen, C.~Wang, and Y.~Wang, ``Scalable influence maximization for prevalent
  viral marketing in large-scale social networks,'' in \emph{Proc. of the 16th
  SIGKDD}.\hskip 1em plus 0.5em minus 0.4em\relax ACM, 2010, pp. 1029--1038.

\bibitem{cheng2013staticgreedy}
S.~Cheng, H.~Shen, J.~Huang, G.~Zhang, and X.~Cheng, ``Staticgreedy: solving
  the scalability-accuracy dilemma in influence maximization,'' in \emph{Proc.
  of the 22nd ACM international conference on Information \& Knowledge
  Management}.\hskip 1em plus 0.5em minus 0.4em\relax ACM, 2013, pp. 509--518.

\bibitem{arthur2009pricing}
D.~Arthur, R.~Motwani, A.~Sharma, and Y.~Xu, ``Pricing strategies for viral
  marketing on social networks,'' in \emph{International Workshop on Internet
  and Network Economics}.\hskip 1em plus 0.5em minus 0.4em\relax Springer,
  2009, pp. 101--112.

\bibitem{zhou2016bilevel}
F.~Zhou, R.~J. Jiao, and B.~Lei, ``Bilevel game-theoretic optimization for
  product adoption maximization incorporating social network effects,''
  \emph{IEEE Transactions on Systems, Man, and Cybernetics: Systems}, vol.~46,
  no.~8, pp. 1047--1060, 2016.

\bibitem{lu2016pricing}
Z.~Lu, H.~Zhou, V.~O. Li, and Y.~Long, ``Pricing game of celebrities in
  sponsored viral marketing in online social networks with a greedy advertising
  platform,'' in \emph{Proc. of the 2016 ICC}.\hskip 1em plus 0.5em minus
  0.4em\relax IEEE, 2016, pp. 1--6.

\bibitem{zhu2013influence}
Y.~Zhu, Z.~Lu, Y.~Bi, W.~Wu, Y.~Jiang, and D.~Li, ``Influence and profit: Two
  sides of the coin,'' in \emph{Proc. of the 13th ICDM}.\hskip 1em plus 0.5em
  minus 0.4em\relax IEEE, 2013, pp. 1301--1306.

\bibitem{yang2016continuous}
Y.~Yang, X.~Mao, J.~Pei, and X.~He, ``Continuous influence maximization: What
  discounts should we offer to social network users?'' in \emph{Proc. of the
  2016 SIGMOD}.\hskip 1em plus 0.5em minus 0.4em\relax ACM, 2016, pp. 727--741.

\bibitem{lu2012profit}
W.~Lu and L.~V. Lakshmanan, ``Profit maximization over social networks,'' in
  \emph{Proc. of the 12th ICDM}.\hskip 1em plus 0.5em minus 0.4em\relax IEEE,
  2012, pp. 479--488.

\bibitem{tang2016profit}
J.~Tang, X.~Tang, and J.~Yuan, ``Profit maximization for viral marketing in
  online social networks,'' in \emph{Proc. of the 24th ICNP}.\hskip 1em plus
  0.5em minus 0.4em\relax IEEE, 2016, pp. 1--10.

\bibitem{feige2011maximizing}
U.~Feige, V.~S. Mirrokni, and J.~Vondrak, ``Maximizing non-monotone submodular
  functions,'' \emph{SIAM Journal on Computing}, vol.~40, no.~4, pp.
  1133--1153, 2011.

\bibitem{tong2017efficient}
A.~G. Tong, W.~Wu, L.~Guo, D.~Li, C.~Liu, B.~Liu, and D.-Z. Du, ``An efficient
  randomized algorithm for rumor blocking in online social networks,'' in
  \emph{Proc. of the 36th INFOCOM}.\hskip 1em plus 0.5em minus 0.4em\relax
  IEEE, 2017.

\bibitem{snapnets}
J.~Leskovec and A.~Krevl, ``{SNAP Datasets}: {Stanford} large network dataset
  collection,'' \url{http://snap.stanford.edu/data}, Jun. 2014.

\bibitem{lu2015competition}
W.~Lu, W.~Chen, and L.~V. Lakshmanan, ``From competition to complementarity:
  comparative influence diffusion and maximization,'' \emph{Proc. of the VLDB
  Endowment}, vol.~9, no.~2, pp. 60--71, 2015.

\bibitem{motwani2010randomized}
R.~Motwani and P.~Raghavan, \emph{Randomized algorithms}.\hskip 1em plus 0.5em
  minus 0.4em\relax Chapman \& Hall/CRC, 2010.

\end{thebibliography}

%

\begin{IEEEbiographynophoto}{Guang-mo Tong} is a Ph.D. candidate in the Department of Computer Science at the University of Texas at Dallas. He received his BS degree in Mathematics and Applied Mathematics from Beijing Institute of Technology in July 2013. His research interests include computational social systems, data communication and real-time systems. He is a student member of the IEEE.
\end{IEEEbiographynophoto}

\begin{IEEEbiographynophoto}{Weili Wu}(M’00) is currently a Full Professor with the Department of Computer Science, the University of Texas at Dallas, Dallas, TX, USA. She received the Ph.D. and M.S. degrees from the Department of Computer Science, University of Minnesota, Minneapolis, MN, USA, in 2002 and 1998, respectively. Her research mainly deals with the general research area of data communication and data management. Her research focuses on the design and analysis of algorithms for optimization problems that occur in wireless networking environments and various database systems.
\end{IEEEbiographynophoto}

\begin{IEEEbiographynophoto}{Ding-Zhu Du} received the M.S. degree from the Chinese Academy of Sciences in 1982 and the Ph.D. degree from the University of California at Santa Barbara in 1985, under the supervision of Professor Ronald V. Book. Before settling at the University of Texas at Dallas, he worked as a professor in the Department of Computer Science and Engineering, University of Minnesota. He also worked at the Mathematical Sciences Research Institute, Berkeley, for one year, in the Department of Mathematics, Massachusetts Institute of Technology, for one year, and in the Department of Computer Science, Princeton University, for one and a half years. He is the editor-in-chief of the Journal of Combinatorial Optimization and is also on the editorial boards for several other journals. Forty Ph.D. students have graduated under his supervision. He is a member of the IEEE
\end{IEEEbiographynophoto}

\appendices

\section{Chernoff Bound and Martingale}
\label{appendix:1}
Let $Z_i \in [0,1]$ be $l$ i.i.d random variables where $E(Z_i)=\mu$. The Chernoff bounds \cite{motwani2010randomized} states that 
\begin{equation}
\label{eq:chernoff_1}
\Pr\Big[\sum Z_i - l \cdot \mu \geq \delta \cdot l \cdot  \mu  \Big]\leq \exp(-\frac{l \cdot \mu \cdot \delta^2}{2+\delta}),
\end{equation}
and
\begin{equation}
\label{eq:chernoff_2}
\Pr\Big[\sum Z_i - l \cdot \mu \leq -\delta \cdot l \cdot \mu \Big]\leq \exp(-\frac{l \cdot \mu \cdot \delta^2}{2}),
\end{equation}  
for $\delta > 0$.

Martingale method provides good error estimates for a sequence of variables where the independence assumptions is not required. It has been proved by Tang \textit{et al.} \cite{tang2016profit} that, for any seed set $S$ and a sequence of RA sets $R_1,...,R_l$, the random variables $x(R_i,l)$, for $1 \leq i\leq l$, are martingales and they satisfy the following concentration inequalities, for any $\delta \geq 0$,
\begin{equation}
\label{eq:maringale_1}
\Pr[\sum x(R_i,S)-l\cdot \mu\geq \delta \cdot l\cdot \mu]\leq \exp(-\frac{l\cdot  \mu\cdot \delta^2}{2+\frac{2}{3}\delta})
\end{equation} 
and
\begin{equation}
\label{eq:maringale_2}
\Pr[\sum x(R_i,S)-l\cdot \mu \leq -\delta \cdot l\cdot \mu]\leq \exp(-\frac{l\cdot \mu\cdot  \delta^2}{2}),
\end{equation} 
and
where $\mu=\mathbb{E}[ x(R_i,l)]=\frac{\pi(S)}{n}$.
\section{Proofs}
\label{appendix:proofs}

\subsection{Proof of Theorem \ref{theorem:accuracy}}
\label{sebsec:proof_theorem:accuracy}

Recall that, in the $i$-th iteration of the loop, node $v_i$ is considered, and $X_i$ and $Y_i$ are the subsets created in line 9 or 11. Let $S^{
*}$ be the optimal solution and define that $S^{*}_i=(S^{*}\cup X_i)\cap Y_i$ for $1 \leq i \leq n$. Note that $S^{*}_0=S^{*}$ and $S^{*}_n=X_n=Y_n$. For a set $S$ and a node $v$, define that $S+v=S\cup \{v\}$ and $S-v=S\setminus \{v\}$. Because Alg. \ref{alg:1/2} is in fact a randomized algorithm, we consider the expected value of $h(X_n)$. The following is a key lemma to prove Theorem \ref{theorem:accuracy}.

\begin{lemma}
\label{lamma:appendix_2:lemma_2}
For each $1 \leq i \leq n$,
\begin{eqnarray}
&&\mathbb{E}[h(S^{*}_{i-1})-h(S^{*}_i)] \nonumber\\
&&\leq \frac{1}{2}\mathbb{E}[h(X_i)-h(X_{i-1})+h(Y_i)-h(Y_{i-1})]+\frac{2\epsilon }{n}\cdot h(S^{*}) \nonumber
\end{eqnarray}
\end{lemma}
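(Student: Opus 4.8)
The plan is to adapt the single-iteration analysis of Buchbinder \textit{et al.} \cite{buchbinder2015tight} to the setting where the algorithm's decisions use the noisy surrogate $\tilde h()$ together with the additive slack $\frac{2\epsilon}{n}L^{*}$. Throughout I condition on the state $(X_{i-1},Y_{i-1})$ at the start of iteration $i$ and abbreviate $\Delta\mathrm{OPT}=h(S^{*}_{i-1})-h(S^{*}_i)$ and $\Delta\mathrm{ALG}=h(X_i)-h(X_{i-1})+h(Y_i)-h(Y_{i-1})$. Write $a_i=h(X_{i-1}+v_i)-h(X_{i-1})$ and $b_i=h(Y_{i-1}-v_i)-h(Y_{i-1})$ for the \emph{true} marginals; by the accuracy hypothesis (\ref{eq:accuracy}) the quantities $\tilde a_i,\tilde b_i$ computed in Alg. \ref{alg:1/2_estimate} obey the one-sided bounds $a_i\le\tilde a_i\le a_i+\frac{4\epsilon}{n}L^{*}$ and $b_i\le\tilde b_i\le b_i+\frac{4\epsilon}{n}L^{*}$. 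Two consequences of submodularity, unaffected by the estimation, drive everything: since $X_{i-1}\subseteq Y_{i-1}-v_i$ we have $a_i+b_i\ge0$; and since $X_{i-1}\subseteq S^{*}_{i-1}\subseteq Y_{i-1}$, the value $\Delta\mathrm{OPT}$ is $0$ except when $v_i$ is added with $v_i\notin S^{*}$ (then $\Delta\mathrm{OPT}\le b_i$) or $v_i$ is removed with $v_i\in S^{*}$ (then $\Delta\mathrm{OPT}\le a_i$).

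I would then split on the sign pattern of $(\tilde a_i',\tilde b_i')$. The boundary cases are where the slack does its work: if, say, $\tilde a_i'=0$ while $\tilde b_i'>0$, the algorithm deterministically removes $v_i$, so $\Delta\mathrm{ALG}=b_i$, whereas $\Delta\mathrm{OPT}\le0$ because $a_i\le\tilde a_i\le0$. Here $b_i$ can be slightly negative (as low as $-\frac{4\epsilon}{n}L^{*}$, since $\tilde b_i>0$), but the additive term on the right covers the deficit exactly: using $L^{*}\le h(S^{*})$ one checks $\tfrac12 b_i+\frac{2\epsilon}{n}h(S^{*})\ge\tfrac12 b_i+\frac{2\epsilon}{n}L^{*}\ge0\ge\mathbb{E}[\Delta\mathrm{OPT}]$. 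The symmetric case $\tilde b_i'=0,\tilde a_i'>0$ is identical, and the degenerate case $\tilde a_i'=\tilde b_i'=0$ forces $a_i=b_i=0$ via $a_i+b_i\ge0$, making the inequality trivial.

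The main case, and the principal obstacle, is $\tilde a_i'>0$ and $\tilde b_i'>0$, where $v_i$ is added with probability $p_i=\tilde a_i/(\tilde a_i+\tilde b_i)$. Taking expectations over the coin gives $\mathbb{E}[\Delta\mathrm{ALG}]=(\tilde a_i a_i+\tilde b_i b_i)/(\tilde a_i+\tilde b_i)$, while the two OPT-tracking bounds give $\mathbb{E}[\Delta\mathrm{OPT}]\le\tilde b_i a_i/(\tilde a_i+\tilde b_i)$ when $v_i\in S^{*}$ and $\le\tilde a_i b_i/(\tilde a_i+\tilde b_i)$ when $v_i\notin S^{*}$. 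Substituting $\tilde a_i=a_i+\alpha$ and $\tilde b_i=b_i+\beta$ with $0\le\alpha,\beta\le\frac{4\epsilon}{n}L^{*}$, the expression $2\mathbb{E}[\Delta\mathrm{OPT}]-\mathbb{E}[\Delta\mathrm{ALG}]$ becomes a fraction whose numerator is $-(a_i-b_i)^2$ plus terms linear in $\alpha,\beta$; in the exact case ($\alpha=\beta=0$) the numerator is $\le0$, which is exactly what recovers the clean factor $\tfrac12$. The remaining work is to show that, after division by $\tilde a_i+\tilde b_i$, the slack terms stay below $\frac{4\epsilon}{n}L^{*}$. I expect to do this by maximizing the resulting rational expression over the admissible region, using $a_i+b_i\ge0$ to lower-bound the denominator; this gives a bound comfortably under $\frac{4\epsilon}{n}L^{*}$, and $L^{*}\le h(S^{*})$ then yields $\mathbb{E}[\Delta\mathrm{OPT}]\le\tfrac12\mathbb{E}[\Delta\mathrm{ALG}]+\frac{2\epsilon}{n}h(S^{*})$, which is the claim. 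This per-step inequality is precisely the statement of the lemma, and summing it over $i=1,\dots,n$ with telescoping is what later yields Theorem \ref{theorem:accuracy}.
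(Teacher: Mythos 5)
Your proposal is correct and follows essentially the same route as the paper's proof: condition on $(X_{i-1},Y_{i-1})$, use the one-sided bounds $a_i\le\tilde a_i\le a_i+\frac{4\epsilon}{n}L^{*}$ (and likewise for $b_i$), track the optimum via $S^{*}_i$ with the two submodularity consequences $a_i+b_i\ge 0$ and $\Delta\mathrm{OPT}\le a_i$ or $\le b_i$, dispose of the deterministic sign cases using the additive slack, and close the randomized case with a quadratic inequality. The only difference is presentational: where you propose maximizing a rational expression in $(a_i,b_i,\alpha,\beta)$, the paper substitutes $\tilde a_i,\tilde b_i$ throughout (bounding $\mathbb{E}[\Delta\mathrm{OPT}]\le\frac{\tilde a_i\tilde b_i}{\tilde a_i+\tilde b_i}$ and $\mathbb{E}[\Delta\mathrm{ALG}]\ge\frac{\tilde a_i^2+\tilde b_i^2}{\tilde a_i+\tilde b_i}-\frac{4\epsilon}{n}L^{*}$) and finishes with $2\tilde a_i\tilde b_i\le\tilde a_i^2+\tilde b_i^2$, which is exactly the maximization you defer.
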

\begin{proof}
It suffices to prove that it is true conditioned on every possible event by the $i$-th iteration. Now we fix $X_{i-1}$ and $Y_{i-1}$ and consider the $i$-th iteration. By Eq. (\ref{eq: submodular}), $h(X_{i-1}+v_i)-h(X_{i-1})+h(Y_{i-1}- v_i)-h(Y_{i-1})$ is always non-negative due to the submodularity of $h()$. Thus, 
\begin{eqnarray*}
&&\tilde{a}_i+\tilde{b}_i\\
&=&\tilde{h}(X_{i-1}+v_i)-\tilde{h}(X_{i-1})+\frac{2\epsilon }{n}\cdot L^{*}\\
&&+ \tilde{h}(Y_{i-1}-v_i)-\tilde{h}(Y_{i-1})+\frac{2\epsilon }{n}\cdot L^{*}\\
&&\{\text{By Eq. (\ref{eq:accuracy})}\}\\
&\geq& h(X_{i-1}+v_i)-h(X_{i-1})+h(Y_{i-1}-v_i)-h(Y_{i-1}) \\
&\geq& 0,
\end{eqnarray*}
and there are three cases to consider:

\textbf{Case 1.} ($\tilde{a}_i \geq 0$ and $\tilde{b}_i \leq 0$). In this case, $Y_i=Y_{i-1}$ and $X_i=X_{i-1}+v_i$, and therefore, $S^{*}_i=S^{*}_{i-1}+v_i$ and $h(Y_i)-h(Y_{i-1})=0$. 
If $v_i \in S^{*}_{i-1}$, then $h(S^{*}_{i-1})-h(S^{*}_i)=0$. If $v_i \notin S^{*}$, then by the submodularity,  
\begin{eqnarray*}
&&h(S^{*}_{i-1})-h(S^{*}_i) \\
&\leq& h(Y_{i-1}-v_i)-h(Y_{i-1})\\
&&\{\text{By Eq. (\ref{eq:accuracy})}\}\\
&\leq& \tilde{h}(Y_{i-1}-v_i)-\tilde{h}(Y_{i-1})+\frac{2\epsilon }{n}\cdot L^{*}= \tilde{b}_i \leq 0.
\end{eqnarray*}
Therefore, in both cases, we have $h(S^{*}_{i-1})-h(S^{*}_i)\leq 0\leq \frac{\tilde{a}_i}{2}$. Since $h(Y_i)=h(Y_{i-1})$,
\begin{eqnarray*}
&&h(S^{*}_{i-1})-h(S^{*}_i) \\
&\leq& \frac{1}{2}\tilde{a}+\frac{1}{2}(h(Y_i)-h(Y_{i-1}))\\
&=& \frac{1}{2}(\tilde{h}(X_{i-1}+v_i)-\tilde{h}(X_{i-1})+h(Y_i)-h(Y_{i-1}))\\
&&\{\text{By Eq. (\ref{eq:accuracy})}\}\\
&\leq& \frac{1}{2}(h(X_{i-1}+v_i)-h(X_{i-1})+h(Y_i)-h(Y_{i-1}))+\frac{\epsilon }{n} \cdot L^{*}\\
&\leq& \frac{1}{2}(h(X_{i})-h(X_{i-1})+h(Y_{i})-h(Y_{i-1}))+\frac{\epsilon }{n}\cdot h(S^{*})
\end{eqnarray*}
Thus, this case is proved.

\textbf{Case 2.} ($\tilde{a}_i < 0$ and $\tilde{b}_i \geq 0$). This case can be proved in a similar manner as that of Case 1.

\textbf{Case 3.} ($\tilde{a}_i \geq 0$ and $\tilde{b}_i > 0$). In this case, with probability $\tilde{a}_i/(\tilde{a}_i+\tilde{b}_i)$ (resp. $\tilde{b}_i/(\tilde{a}_i+\tilde{b}_i)$) that $X_i \leftarrow X_{i-1}+ v_i$ and $Y_i \leftarrow Y_{i-1}$ (resp. $X_i \leftarrow X_{i-1}$ and $Y_i \leftarrow Y_{i-1}\setminus v_i$) happen. 

Therefore, 

\begin{eqnarray}
\label{eq:lamma:appendix_2:lemma_2:eq_1}
&&\mathbb{E}[h(X_i)-h(X_{i-1})+h(Y_i)-h(Y_{i-1})] \nonumber\\
&=& \dfrac{\tilde{a}_i}{\tilde{a}_i+\tilde{b}_i}(h(X_{i-1}+v_i)-h(X_{i-1}))\nonumber\\
&&+ \dfrac{\tilde{b}_i}{\tilde{a}_i+\tilde{b}_i}(h(Y_{i-1}-v_i)-h(Y_{i-1}))\nonumber\\
&&\{\text{By Eq. (\ref{eq:accuracy})}\}\nonumber\\
&\geq& \dfrac{\tilde{a}_i}{\tilde{a}_i+\tilde{b}_i}(\tilde{h}(X_{i-1}+v_i)-\tilde{h}(X_{i-1})-\frac{2\epsilon }{n} \cdot L^{*})\nonumber\\
&&+ \dfrac{\tilde{b}_i}{\tilde{a}_i+\tilde{b}_i}(\tilde{h}(Y_{i-1}-v_i)-\tilde{h}(Y_{i-1})-\frac{2\epsilon }{n} \cdot L^{*})\nonumber\\
&=& \dfrac{\tilde{a}_i}{\tilde{a}_i+\tilde{b}_i}(\tilde{a}_i-\frac{4\epsilon }{n}\cdot L^{*})+ \dfrac{\tilde{b}_i}{\tilde{a}_i+\tilde{b}_i}(\tilde{b}_i-\frac{4\epsilon }{n}\cdot L^{*})\nonumber\\
&=& \dfrac{\tilde{a}_i^2+\tilde{b}_i^2}{\tilde{a}_i+\tilde{b}_i}-\frac{4\epsilon }{n}\cdot L^{*}\geq \dfrac{\tilde{a}_i^2+\tilde{b}_i^2}{\tilde{a}_i+\tilde{b}_i}-\frac{4\epsilon }{n}\cdot h(S^{*}).
\end{eqnarray}
Now let us consider $\mathbb{E}[h(S^{*}_{i-1})-h(S^{*}_i)]$. Note that  
\begin{eqnarray*}
&&\mathbb{E}[h(S^{*}_{i-1})-h(S^{*}_i)] \\
&=& \dfrac{\tilde{a}_i}{\tilde{a}_i+\tilde{b}_i}(h(S^{*}_{i-1})-h(S^{*}_{i-1}+v_i))\\
&&+ \dfrac{\tilde{b}_i}{\tilde{a}_i+\tilde{b}_i}(h(S^{*}_{i-1})-h(S^{*}_{i-1}-v_i))
\end{eqnarray*}
If $v_i \in S^{*}_{i-1}$ then $h(S^{*}_{i-1})-h(S^{*}_{i-1}+v_i)=0$ and
\begin{eqnarray*}
&&h(S^{*}_{i-1})-h(S^{*}_{i-1}-v_i)\\
&&\{\text{By submodularity}\}\\
&\leq& h(X_{i-1}+v_i)-h(X_{i-1})\\
&&\{\text{By Eq. (\ref{eq:accuracy})}\}\\
&\leq& \tilde{h}(X_{i-1}+v_i)-\tilde{h}(X_{i-1})+\frac{2\epsilon }{n}\cdot L^{*}= \tilde{a}_i
\end{eqnarray*} 
If $v_i \notin S^{*}_{i-1}$ then $h(S^{*}_{i-1}-v_i)-h(S^{*}_{i-1})=0$ and
\begin{eqnarray*}
&&h(S^{*}_{i-1})-h(S^{*}_{i-1}-v_i)\\
&&\{\text{By submodularity}\}\\
&\leq& {h}(Y_{i-1}-v_i)-{h}(Y_{i-1})\\
&&\{\text{By Eq. (\ref{eq:accuracy})}\}\\
&\leq& \tilde{h}(Y_{i-1}-v_i)-\tilde{h}(Y_{i-1})+\frac{2\epsilon }{n}\cdot L^{*}= \tilde{b}_i.
\end{eqnarray*} 
Therefore, in either case, $\mathbb{E}[h(S^{*}_{i-1})-h(S^{*}_i)] \leq \dfrac{\tilde{a}_i\tilde{b}_i}{\tilde{a}_i+\tilde{b}_i}$. Together with Eq. (\ref{eq:lamma:appendix_2:lemma_2:eq_1}), $\mathbb{E}[h(S^{*}_{i-1})-h(S^{*}_i)] \leq \frac{1}{2}\mathbb{E}[h(X_i)-h(X_{i-1})+h(Y_i)-h(Y_{i-1})]+\frac{2\epsilon }{n}\cdot h(S^{*})$
\end{proof}
Now new are ready to prove Theorem \ref{theorem:accuracy}, as shown in the following lemma.
\begin{lemma}
$\mathbb{E}[h(X_n)]\geq (\frac{1}{2}-\epsilon)\cdot h(S^{*})$.
\end{lemma}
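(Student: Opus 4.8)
The plan is to sum the per-iteration estimate of Lemma \ref{lamma:appendix_2:lemma_2} over the $n$ iterations and exploit the fact that both sides telescope. First I would add up the inequality
\begin{equation*}
\mathbb{E}[h(S^{*}_{i-1})-h(S^{*}_i)] \leq \tfrac{1}{2}\mathbb{E}[h(X_i)-h(X_{i-1})+h(Y_i)-h(Y_{i-1})]+\tfrac{2\epsilon}{n}\cdot h(S^{*})
\end{equation*}
for $i=1,\dots,n$. On the left-hand side the telescoping collapses to $\mathbb{E}[h(S^{*}_0)-h(S^{*}_n)]$; since $S^{*}_0=S^{*}$ and $S^{*}_n=X_n=Y_n$, this equals $h(S^{*})-\mathbb{E}[h(X_n)]$.

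Next I would handle the right-hand side. The first summand telescopes to $\tfrac{1}{2}\mathbb{E}[h(X_n)-h(X_0)+h(Y_n)-h(Y_0)]$, and invoking the initializations $X_0=\emptyset$, $Y_0=\mathcal{N}$ together with the identity $X_n=Y_n$, this becomes $\tfrac{1}{2}\bigl(2\,\mathbb{E}[h(X_n)]-h(\emptyset)-h(\mathcal{N})\bigr)$. The accumulated error terms simply sum to $n\cdot\tfrac{2\epsilon}{n}\,h(S^{*})=2\epsilon\cdot h(S^{*})$. Combining the two sides gives $h(S^{*})-\mathbb{E}[h(X_n)] \leq \mathbb{E}[h(X_n)]-\tfrac{1}{2}\bigl(h(\emptyset)+h(\mathcal{N})\bigr)+2\epsilon\,h(S^{*})$, and rearranging the $\mathbb{E}[h(X_n)]$ terms yields $2\,\mathbb{E}[h(X_n)] \geq (1-2\epsilon)\,h(S^{*})+\tfrac{1}{2}\bigl(h(\emptyset)+h(\mathcal{N})\bigr)$.

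The final step invokes the hypothesis $h(\emptyset)+h(\mathcal{N})\geq 0$ inherited from Lemma \ref{lemma:1/2} (for the PM objective this holds since $f(\emptyset)+f(V)=(P-C)\cdot n\geq 0$), which lets me drop the last term and divide by $2$ to obtain $\mathbb{E}[h(X_n)] \geq \bigl(\tfrac{1}{2}-\epsilon\bigr)h(S^{*})$, as desired. The argument is essentially routine telescoping, so I do not expect a genuine obstacle; the only points needing care are the boundary bookkeeping ($S^{*}_0=S^{*}$, $S^{*}_n=X_n=Y_n$, $X_0=\emptyset$, $Y_0=\mathcal{N}$) and making sure the nonnegativity condition $h(\emptyset)+h(\mathcal{N})\geq 0$ is explicitly used at the very end, since without it the leftover $\tfrac{1}{2}(h(\emptyset)+h(\mathcal{N}))$ could not be discarded.
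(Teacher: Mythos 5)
Your proof is correct and follows essentially the same route as the paper: sum the per-iteration inequality of Lemma~\ref{lamma:appendix_2:lemma_2}, telescope both sides, and use $S^{*}_0=S^{*}$, $S^{*}_n=X_n=Y_n$ to rearrange. The only difference is that you explicitly invoke $h(\emptyset)+h(\mathcal{N})\geq 0$ to discard the leftover $\tfrac{1}{2}\bigl(h(X_0)+h(Y_0)\bigr)$ term, a step the paper's one-line proof leaves implicit even though it is genuinely needed (and is precisely the hypothesis carried over from Lemma~\ref{lemma:1/2}).
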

\begin{proof}
Summing the inequality of Lemma \ref{lamma:appendix_2:lemma_2} for $i$ from 1 to $n$ yields that $\mathbb{E}[h(S^{*}_{0})-h(S^{*}_n)] \leq \frac{1}{2}\mathbb{E}[h(X_n)+h(Y_n)-h(X_0)-h(Y_0)]+2\epsilon \cdot h(S^{*})$. Since $S^{*}_0=S^{*}$ and $S^{*}_n=X_n=Y_n$, we have $\mathbb{E}[h(X_n)]\geq(\frac{1}{2}-\epsilon) \cdot h(S^{*})$.
\end{proof}

\subsection{Proof of Lemma \ref{lemma:ra_mean}}
\label{sec:proof_lemma:ra_mean}
Let $\pi(S,v)$ be the probability that $v$ can be an adopter under $S$ and therefore $\pi(S)=\sum_{v \in V} \pi(S,v)$ due to the linearity of expectation. Let $\mathcal{G}_v^S$ be the set of the realizations where $v$ is an adopter under $S$ and $\Pr[g]$ be the probability that $g$ can be sampled according the triggering distribution.  By Lemma \ref{lemma:equivalent}, $\pi(S,v)=\sum_{g \in \mathcal{G}_v^S}\Pr[g]$. On the other hand, let $x(S,R|v)$ be the conditional variable of $x(S,R)$ when $v$ is selected in line 2 of Alg. \ref{alg:ra_set}. Thus, $\mathbb{E}[x(S,R)]=\frac{\sum{v\in V}\mathbb{E}[x(S,R|v)]}{n}$. Note that $v$ is an adopter under $S$ in $g$ if and only if at least one node in $S$ in reverse-adopted-reachable to $v$. Therefore, $\mathbb{E}[x(S,R|v)]=\sum_{g \in \mathcal{G}_v^S}\Pr[g]=\pi(S,v)$, which completes the proof.

\subsection{Proof of Lemma \ref{lemma:epsilon_1}}
\label{sec:proof_lemma:epsilon_1}
By rearrangement, $\Pr[F(\mathcal{R}_l,S)-f(S)>\epsilon_1 f(V_{opt})]$ is equal to $\Pr[ {\sum_{i=1}^{l}x(R_i,S)}-\frac{l\pi(S)}{n}>l \cdot \frac{\pi(S)}{n} \frac{\epsilon_1 \cdot f(V_{opt})}{P\cdot \pi(S)}]$. According to Chernoff bound, this probability is no larger than $$\exp(-\frac{l \cdot \frac{\pi(S)}{n} \cdot (\frac{\epsilon_1 \cdot f(V_{opt})}{P\cdot \pi(S)})^2}{2+\frac{\epsilon_1 \cdot f(V_{opt})}{P\cdot \pi(S)}}).$$ By Eq. (\ref{eq:lower_bound}) and $\pi(S)\leq n$, it is further no larger than $\exp(-\frac{l \cdot \epsilon_1^2 \cdot (P-C)^2}{2 P^2+\epsilon_1 P \cdot(P-C)})$. Finally, by Eq.  (\ref{eq:delta_1}), this probability is at most $\frac{1}{N \cdot  2^n}$.
Since there are at most $2^n$ subsets, by the union bound, the lemma holds.

\subsection{Proof of Lemma \ref{lemma:epsilon_2}}
\label{sec:proof_lemma:epsilon_2}
By rearrangement, $\Pr[F(\mathcal{R}_l,V_{opt})-f(V_{opt})\leq -\epsilon_2 f(V_{opt})]$ is equal to $\Pr[ {\sum_{i=1}^{l}x(R_i,V_{opt})}-l\cdot\frac{ \pi(V_{opt})}{n} \leq -l \cdot \frac{\pi(V_{opt})}{n} \frac{\epsilon_2 \cdot f(V_{opt})}{P\cdot \pi(V_{opt})}]$. By Chernoff bound, this probability is no larger than $$\exp(-\frac{l \cdot \frac{\pi(V_{opt})}{n} \cdot (\frac{\epsilon_2 \cdot f(V_{opt})}{P\cdot \pi(V_{opt})})^2}{2}).$$ By Eq. (\ref{eq:lower_bound}) and $\pi(S)\leq n$, it is then no larger than $\exp(-\frac{l \cdot \epsilon_2^2 \cdot (P-C)^2}{2P^2})$, which, by Eq. (\ref{eq:delta_2}), is equal to or smaller than $\frac{1}{N}$.

\subsection{Proof of Lemma \ref{lemma:ra_s_epsilon_2}}
\label{sec:proof_lemma:ra_s_epsilon_2}
This proof of this lemma is exactly the same as that of Lemma \ref{lemma:epsilon_2} because $\delta_2^*=\delta_2$ and the lower tail inequality of Chernoff bound (i.e. Eq. (\ref{eq:chernoff_2})) is identical to that of the Martingale (i.e. Eq. (\ref{eq:maringale_2})).
\subsection{Proof of Lemma \ref{lemma:ra_s_epsilon_1}}
\label{sec:proof_lemma:ra_s_epsilon_1}
\begin{eqnarray*}
&&　\Pr[\widetilde{f}_l(V^{*})-f(V^{*}) > \epsilon_1 \cdot f(V_{opt})] \\
&=&　\Pr[P\cdot \widetilde{\pi}_l(V^{*})-P\cdot \pi(V^{*}) > \epsilon_1 \cdot f(V_{opt})] \\
&=&　\Pr[|l\cdot \frac{\widetilde{\pi}_l(V^{*})}{n}-l\cdot \frac{\pi(V^{*})}{n}| > l \cdot  \frac{\pi(V^{*})}{n} \frac{\epsilon_1 f(V_{opt})}{P\cdot \pi(V^{*})}] \\
&&\{\text{Since~} f(V_{opt}) \geq n(P-C) \}\\
&\leq&　\Pr[|l\cdot \frac{\widetilde{\pi}_l(V^{*})}{n}-l\cdot \frac{\pi(V^{*})}{n}| > l \cdot  \frac{\pi(V^{*})}{n} \frac{\epsilon_1 n(P-C)}{P\cdot \pi(V^{*})}] 
\end{eqnarray*}
Applying the Chernoff bound, the above probability is no larger than $\exp(-\frac{l \cdot \frac{\pi(V^{*})}{n} \cdot \frac{\epsilon_1^2 n^2 r^2}{\pi^2(V^{*})}}{2+\frac{\epsilon_1 nr}{\pi(V^{*})}})$. Finally, by Eq. (\ref{eq:delta_3}) and $\pi(V^{*})\leq n$, it is no larger than $1/N$.

\subsection{Proof of Lemma \ref{lemma:ra_s_case_b}}
\label{sec:proof_lemma:ra_s_case_b}
We first prove a result which is analogous to Lemma \ref{lemma:epsilon_1}. Let $\mathcal{R}_l=\{R_1...,R_l\}$ be the RA sets used in line 7 of Alg. \ref{alg:reverse_2} and $V^{*}$ be the final result returned in line 8, where $l \geq \delta_1^*$. For any seed set $S$, by rearrangement, $\Pr[F(\mathcal{R}_l,S)-f(S)>\epsilon_1 f(V_{opt})]$ is equal to 
$$\Pr[ {\sum_{i=1}^{l}x(R_i,S)}-l \cdot \frac{\pi(S)}{n}>l \cdot \frac{\pi(S)}{n} \frac{\epsilon_1 \cdot f(V_{opt})}{P\cdot \pi(S)}].$$
Applying Eq. (\ref{eq:maringale_2}) yields that the above probability is no larger than 
$$\exp(-\frac{l \cdot \frac{\pi(S)}{n} \cdot (\frac{\epsilon_1 \cdot f(V_{opt})}{P\cdot \pi(S)})^2}{2+\frac{2\epsilon_1 \cdot f(V_{opt})}{3P\cdot \pi(S)}}).$$
By Eq. (\ref{eq:lower_bound}) and $\pi(S)\leq n$, this probability is no larger than 
$$\exp(-\frac{3l \cdot \epsilon_1^2 \cdot (P-C)^2}{6 P^2+2\epsilon_1 P \cdot(P-C)}).$$
Finally, replacing $l$ with $\delta_1^*$ implies that 
$$\exp(-\frac{3l \cdot \epsilon_1^2 \cdot (P-C)^2}{6 P^2+2\epsilon_1 P \cdot(P-C)}) \leq \frac{1}{N\cdot 2^n}.$$
Since there are at most $2^n$ subsets, with probability at most $1/N$, there exists some $S \subseteq V$ such that $F(\mathcal{R}_l,S)-f(S) > \epsilon_1 \cdot f(V_{opt})$, which means, 
$F(\mathcal{R}_l,V^{*})-f(V^{*}) \leq \epsilon_1 \cdot f(V_{opt})$ holds with the probability at least $1-1/N$. Therefore,
\begin{eqnarray*}
f(V^{*}) &\geq& F(\mathcal{R}_l,V^{*})- \epsilon_1 \cdot f(V_{opt})\\
&&\{\text{By Lemma \ref{lemma:1/2}}\}\\
&\geq& \frac{1}{2}F(\mathcal{R}_l,V_{opt})- \epsilon_1 \cdot f(V_{opt})\\
&&\{\text{By Lemma \ref{lemma:ra_s_epsilon_2}}\}\\
&\geq& \frac{1}{2}(1-\epsilon_2)f(V_{opt})- \epsilon_1 \cdot f(V_{opt})\\
&&\{\text{By Eqs \ref{eqs:1}}\}\\
&\geq& (\frac{1}{2}-\epsilon) \cdot f(V_{opt})
\end{eqnarray*}
holds with probability at least $1-2/N$.

\vfill


\end{document}